\documentclass[twoside]{article}

% \usepackage{aistats2026}
% If your paper is accepted, change the options for the package
% aistats2026 as follows:
%
% \usepackage[accepted]{aistats2026}
%
% This option will print headings for the title of your paper and
% headings for the authors names, plus a copyright note at the end of
% the first column of the first page.

% We also include a `preprint' option for non-anonymous preprints. 
% Change the options for the package aistats2026 as follows:
%
\usepackage[preprint]{aistats2026}
%
% This option will print headings for the title of your paper and
% headings for the authors names, but does not print the copyright and 
% venue note at the end of the first column of the first page.      

% If you set papersize explicitly, activate the following three lines:
%\special{papersize = 8.5in, 11in}
%\setlength{\pdfpageheight}{11in}
%\setlength{\pdfpagewidth}{8.5in}

% If you use the natbib package, activate the following three lines:
\usepackage[round]{natbib}

% If you use BibTeX in apalike style, activate the following line:
\bibliographystyle{apalike}
\usepackage[utf8]{inputenc} % allow utf-8 input
\usepackage[T1]{fontenc}    % use 8-bit T1 fonts
\usepackage{hyperref}       % hyperlinks
\hypersetup{
    colorlinks,
    linkcolor={red!50!black},
    citecolor={blue!50!black},
    urlcolor={blue!80!black}
}
\usepackage{url}            % simple URL typesetting
\usepackage{booktabs}       % professional-quality tables
\usepackage{amsfonts}       % blackboard math symbols
\usepackage{nicefrac}       % compact symbols for 1/2, etc.
\usepackage{microtype}      % microtypography
\usepackage{lipsum}
\usepackage{graphicx}
\usepackage{subcaption}
\usepackage{enumitem}

\usepackage{xcolor}

\usepackage{amsmath}
\usepackage{amssymb}

\usepackage{algorithm}
\usepackage{algpseudocodex}

\usepackage{amsthm}
\usepackage{mathtools}  % coloneqq

% Tikz
\usepackage{pgfplots}
\pgfplotsset{compat=newest}
\usepackage{tikz}
\usepackage{wrapfig}

\usepackage{acronym}
\acrodef{OLS}{ordinary least squares estimator}
\acrodef{MLE}{maximum likelihood estimator}
\acrodef{SNR}{signal-to-noise ratio}
\acrodef{LTI}{linear time-invariant}
\acrodef{LQR}{linear quadratic regulator}
\acrodef{CE}{certainty equivalence}
\acrodef{PE}{persistency of excitation}

\newtheorem{theorem}{Theorem}[section]
\newtheorem{corollary}[theorem]{Corollary}
% Manual Corollary for Appendix

\newtheorem{lemma}[theorem]{Lemma}
\newtheorem{remark}{Remark}[section]
\newtheorem{example}{Example}[section]

\newenvironment{manualexample}[1]{%
  \IfBlankTF{#1}
    {}
    {}%
  \manualexampleinner
}{\endmanualexampleinner}

\newtheorem{assumption}{Assumption}
\newtheorem{prop}[theorem]{Proposition}
\newtheorem{definition}{Definition}

%Argmin and Argmax
\DeclareMathOperator*{\argmax}{arg\,max}

% Other commands 
\newcommand{\R}{\mathbb{R}}

\newcommand{\Prob}{\mathbb{P}}
\newcommand{\Expect}{\mathbb{E}}
\newcommand{\N}{\mathcal{N}}
\newcommand{\SSet}{\mathcal{S}}

\newcommand{\E}{\mathcal{E}}
\newcommand{\subG}[1]{\mathrm{subG}\left({#1}\right)}
\newcommand{\subExp}[1]{\mathrm{subExp}\left({#1}\right)}

\newcommand{\simiid}{\stackrel{\text{i.i.d.}}{\sim}}
\newcommand{\KL}[2]{\mathrm{KL}({#1} \Vert {#2})}
\newcommand{\kl}[2]{\mathrm{kl}({#1} \Vert {#2})}
\newcommand{\diag}[1]{\mathrm{diag}_{#1}}
\newcommand{\tr}[1]{\mathrm{tr}\left({#1}\right)}

\begin{document}

% If your paper is accepted and the title of your paper is very long,
% the style will print as headings an error message. Use the following
% command to supply a shorter title of your paper so that it can be
% used as headings.
%
%\runningtitle{I use this title instead because the last one was very long}

% If your paper is accepted and the number of authors is large, the
% style will print as headings an error message. Use the following
% command to supply a shorter version of the author names so that
% they can be used as headings (for example, use only the surnames)
%
%\runningauthor{Surname 1, Surname 2, Surname 3, ...., Surname n}

\twocolumn[

\aistatstitle{High Effort, Low Gain: Fundamental Limits of Active Learning for Linear Dynamical Systems}

\aistatsauthor{ Nicolas Chatzikiriakos\textsuperscript{*} \And Kevin Jamieson\textsuperscript{$\dagger$}  \And  Andrea Iannelli\textsuperscript{*}}
\runningauthor{Nicolas Chatzikiriakos, Kevin Jamieson,  Andrea Iannelli}

% \aistatsaddress{\url{nchatzi@ist.uni-stuttgart.de}  \And  \url{jamieson@cs.washington.edu} \And   \url{iannelli@ist.uni-stuttgart.de}}]
\aistatsaddress{University of Stuttgart \And University of Washington \And University of Stuttgart}
]

{
\renewcommand{\thefootnote}{\fnsymbol{footnote}}
\footnotetext[1]{Institute for Systems Theory and Automatic Control, \url{nchatzi@ist.uni-stuttgart.de,} \url{iannelli@ist.uni-stuttgart.de}}
\footnotetext[2]{Allen School of Computer Science \& Engineering, \url{jamieson@cs.washington.edu}}
% \footnotetext[3]{Institute for Systems Theory and Automatic Control, \url{andrea.iannelli@ist.uni-stuttgart.de}}
}

\begin{abstract}
  We consider the problem of identifying an unknown linear dynamical system from a finite hypothesis class. In particular, we analyze the effect of the excitation input on the sample complexity of identifying the true system with high probability. 
  To this end, we present sample complexity lower bounds that capture the choice of the selected excitation input.
  The sample complexity lower bound gives rise to a system-theoretic condition to determine the potential benefit of experiment design.        
  Informed by the analysis of the sample complexity lower bound, we propose a persistency of excitation (\acs{PE}) condition tailored to the considered setting, which we then use to establish sample complexity upper bounds. Notably, the \acs{PE} condition is weaker than in the case of an infinite hypothesis class and allows analyzing different excitation inputs modularly. 
  Crucially, the lower and upper bounds share the same dependency on key problem parameters.
  Finally, we leverage these insights to propose an active learning algorithm that sequentially excites the system optimally with respect to the current estimate, and provide sample complexity guarantees for the presented algorithm.
  Concluding simulations showcase the effectiveness of the proposed algorithm. 
\end{abstract}

% !TeX root = ..\main.tex
\section{Introduction}
The problem of learning a model of an unknown dynamical system from data is important across domains such as reinforcement learning, data-driven control and robotics.
Hereby, it is of particular interest to obtain an accurate model with high confidence from as few samples as possible, since data collection is often expensive. 
To tackle this problem, there exists a large literature on experiment design both in dynamical systems and learning theory.
However, experiment design algorithms can be computationally demanding and
a study of the potential benefit of experiment design algorithms is lacking in the literature of dynamical systems.
This raises the question of whether experiment design algorithms should be applied universally, or whether a more nuanced answer is appropriate.
\\
Motivated by these challenges, we consider the setting where the learner has prior knowledge of the true system through a finite hypothesis class. This reflects cases where some prior knowledge
is available, e.g., based on first principles, yet certain parameters are hard to model or vary across
different instances. 
When it comes to dynamical systems, the data is usually collected from a single trajectory and hence is highly correlated. 
This poses a key challenge when analyzing learning in the finite sample regime. 
In particular, existing works considering the identification of dynamical systems\footnote{We provide an extended overview of the related literature in Appendix~\ref{appSection:relatedWorks}.} 
mostly consider the case of an infinite hypothesis class and rely on the least squares estimator. 
Early works derived sample complexity upper bounds for Gaussian inputs for linear~\citep{sarkar2019near, simchowitz2018learning} and certain classes of non-linear systems~\citep{foster20a,sattar2022Bilinear, sattar2022non}.
These results have been complemented by the sample complexity lower bounds for Gaussian excitations and linear systems presented by~\cite{jedra2022finite, tsiamis2021linear}.
For the case of a finite hypothesis class, \cite{chatzikiriakos2024b, muehlebach2025} provide sample complexity upper and lower bounds when the excitation is Gaussian. 

The problem of experiment design has a long history in system identification (see, e.g., \cite{bombois2011optimal,1977dynamic}). While these classical works consider the asymptotic case, recent works provided a finite sample perspective on the topic for linear~\citep{wagenmaker2020active, wagenmaker21a, chatzikiriakos2025convex} and certain classes of non-linear systems~\citep{lee2024active, mania2022active}. 
While the underlying principles in the finite-sample analysis are always similar, there exists no modular framework for the analysis. 
Further, the benefit of experiment design over randomly exciting the systems has not been explicitly investigated, even for linear dynamical systems. 
This work addresses these gaps by answering the following questions to improve our understanding of experiment design and active learning:
\vspace{-1em}
\begin{itemize}[align=parleft, labelsep=0.2em, itemindent=0.005em, itemsep=-0.1em]
    \item[Q1)]How large is the problem-specific benefit of oracle experiment design over isotropic Gaussian excitations in terms of the sample complexity? 
    \item[Q2)] How can sample complexity upper bounds be established for general excitation inputs?
    \item[Q3)] How can active learning algorithms generate excitation inputs to provably outperform random isotropic excitations?
    \item[Q4)] How does a finite hypothesis class simplify experiment design compared to an infinite hypothesis class?
\end{itemize}
\vspace{-1em}
With respect to these key questions, our primary contributions can be summarized as follows:
\vspace{-0.3em}
\begin{enumerate}[align=parleft, labelsep=0.03em, itemindent=0.155em, itemsep=-0.1em]
    \item We provide instance-dependent lower bounds on the sample complexity of identifying the true system with high probability. To this end, we leverage tools from information theory and derive the optimal oracle excitation input during data collection. We analyze the system-theoretic quantities dictating how the sample complexity lower bounds differ for different excitation inputs.
    \item Building on the notion of \ac{PE}, we propose a modular framework to derive sample complexity upper bounds for different excitation inputs. 
    While the probability of miss-specification decays exponentially for all excitations satisfying \ac{PE}, we establish that the decay rate depends on system theoretic quantities that match the lower bounds. 
    \item Using the notion of \ac{PE} we establish an interpretable condition pointing out when experiment design using certainty equivalence is more efficient than random excitations. 
    \item Notably, our analysis also uncovers cases where the benefit of experiment design is small even in an oracle setup, indicating that the usefulness of experiment design needs to be evaluated on a problem-specific level.
\end{enumerate}
\emph{Notation:}
The $n$-dimensional simplex is denoted by $\Delta_n$.
We denote the set of symmetric positive \mbox{(semi-)}definite matrices of dimension $n\times n$ by $\mathbb{S}^{n}_{++}(\mathbb{S}_+^n)$. 
Given a vector $v\in \R^n$ and a matrix $M \in \mathbb{S}_{++}^n$ we define $\Vert v \Vert_M \coloneqq \sqrt{v ^\top M v }$. 
Given a matrix $M\in \R^{n \times n}$ we denote its largest eigenvalue by $\lambda_\mathrm{max}(M)$ and the mean of all eigenvalues by $\lambda_\mathrm{mean}(M) = \frac{1}{n}\tr{M}$, where $\tr{M}$ indicates the trace of the matrix $M$.
Given a sequence $\{u(t)\}_{t=t_0}^{t_1}$ we denote the stacked collection as $U = \begin{bmatrix}
    u(t_0)^\top & \dots & u(t_1)^\top
\end{bmatrix}^\top$, where the boundaries of the interval will be clear from the context. 
We use $\diag{t}(M)$ to denote a block-diagonal matrix which repeats the matrix $M$ on its diagonal $t$-times, i.e., $\diag{t}(M) \coloneqq I_{t} \otimes M$, where $\otimes$ denotes the Kronecker product. 

% !TeX root = ..\main.tex
\section{Preliminaries}\label{sec:ProblemSetup}
We consider the unknown discrete-time \ac{LTI} dynamical system
\begin{equation}
    x(t+1) = A_* x(t) + B_* u(t) + w(t),\label{eq:TrueSysEvo}
\end{equation}
where $x(t) \in \R^{n_x}$, $u(t)\in \R^{n_u}$ are the state and the input of the system at timestep $t$, and $w(t)$ is process noise. We assume $w(t) \stackrel{\text{i.i.d}}{\sim} \mathcal{N}(0, \Sigma_w)$ with a known covariance matrix $\Sigma_w\in \mathbb{S}^{n_x}_{++}$. However, all the presented results and the proposed algorithm can be adapted to the case where $\Sigma_w$ is unknown and only an upper bound $\sigma_w^2 I_{n_x} \succeq \Sigma_w$ or an informative prior is available. 
For notational simplicity, we assume $x(0) = 0$ unless stated otherwise.
Further, we assume the learner has access to a finite set of systems, containing the true system, i.e.,
\vspace{-0.5em}
\begin{equation} \label{eq:defSet}
    \theta_* = (A_*, B_*) \in \SSet := \{(A_0, B_0), \dots, (A_N, B_N)\},    
\end{equation}
where $\SSet$ is known to the learner. 
We assume $(A_*, B_*) = (A_0, B_0)$ for notational simplicity.
The goal of the learner is to identify the true system matrices $\theta_*\in \SSet$ from a single data trajectory $\mathcal{D}_T = (\{x(t)\}_{t=0}^{T}, \{u(t)\}_{t=0}^{T-1})$, where the excitation input $u$ can be selected by the learner. 
Thus, in the context of this work, an identification problem is defined by the tuple $(\theta_*, \SSet)$.
The problem of identifying the true system from a finite hypothesis class naturally emerges in many settings, such as
fault detection~\citep{miljkovic2011fault} or validation of ecological and evolutionary models~\citep{Johnson2004}. 
More generally, the set $\SSet$ can be viewed as a principled, model-driven way to introduce knowledge of the real process into the identification problem.  

The performance of the learner is evaluated in terms of the samples required to identify the true system with confidence $\delta$. 
For our analysis, it is important to understand how predictions using  $\theta_*$ compare to predictions using $\theta_i$, for some $i \in [1,N]$. 
To this end, we denote the sum of one-step prediction errors between $\theta_*$ and $\theta_i$ by 
\begin{equation}\label{eq:predErrorTwoSys}
    \tilde{\varepsilon}_{\theta_i}(0, \tau) \coloneqq \sum_{t = 0}^{\tau-1} \Vert \Delta A_i x(t) + \Delta B_i u(t)\Vert_{\Sigma_w^{-1}}^2,
\end{equation}
where we defined $\Delta A_i \coloneqq A_* - A_i$ and $\Delta B_i \coloneqq B_* - B_i$. 
Recall that the data $\mathcal{D}_\tau$ is collected along a trajectory of the dynamical system~\eqref{eq:TrueSysEvo}.
Recursively plugging in the dynamics~\eqref{eq:TrueSysEvo} for $x(t)$ yields 
\begin{align}
    \tilde{\varepsilon}_{\theta_i}(0, \tau) =\sum_{t = 0}^{\tau-1} \bigg\Vert \Delta A_i \Big(
        % A_*^{t} x(0) + 
        &\sum_{s=0}^{t-1} A_*^{t-1-s} B_* u(s) \label{eq:epsilonReformulated} \\& + A_*^{t-1-s} w(s)\Big)  + \Delta B_i u(t) \bigg\Vert _{\Sigma_w^{-1}}^2\notag
\end{align}
which reveals that $\tilde{\varepsilon}_{\theta_i}(0, \tau)$ is a random quantity even if the excitation input is  deterministic\footnote{A detailed derivation, for the general case $x(0) \in \R^{n_x}$ that includes the case where $u(t)$ consists of both a deterministic and a random part is presented in Appendix~\ref{secAppendix:AdditionalResults}.}.
This is due to the process noise affecting the evolution of the system~\eqref{eq:TrueSysEvo}.
Taking the expectation and using $w(t) \simiid\N(0, \Sigma_w)$ it can be shown that 
\begin{small}
    \begin{align}
         \Expect\left[\tilde{\varepsilon}_{\theta_i}(0, \tau)\right] = &\Expect\Big[\sum_{t=0}^{\tau-1} \big\Vert \Delta A_i \sum_{s=1}^{t} A_*^{t-s} B_* u(s) + \Delta B_i u(t)  \big\Vert _{\Sigma_w^{-1}}^2\Big] \notag \\ 
        &+ \Expect\Big[\sum_{t=0}^{\tau-1} \big\Vert \Delta A_i \sum_{s=1}^{t} A_*^{t-s} w(s)\big\Vert _{\Sigma_w^{-1}}^2\Big].  \label{eq:epsilonTildeReformulated}
    \end{align}        
\end{small}
To simplify notation, we define the Toeplitz matrices 
\begin{equation} \label{eq:defToeplitz}
    \begin{aligned}
    S_u(\tau) &\coloneqq \begin{bmatrix}
        B_* & 0 & \dots & 0 \\
        A_*B_* & B_* & 0 &  \vdots \\
        \vdots & &  \ddots  & 0 \\
        A^{\tau-1}_* B_* &A_*^{\tau-2} B_*  & \dots &  B_*
    \end{bmatrix}, 
    \\
    S_w(\tau) &\coloneqq \begin{bmatrix}
        \Sigma_w^{\nicefrac12} & 0 & \dots & 0 \\
        A_* \Sigma_w^{\nicefrac{1}{2}}& \Sigma_w^{\nicefrac{1}{2}} & 0 &  \vdots \\
        \vdots & &  \ddots  & 0 \\
        A^{\tau-1}_* \Sigma_w^{\nicefrac{1}{2}} &A_*^{\tau-2} \Sigma_w^{\nicefrac{1}{2}}  & \dots &  \Sigma_w^{\nicefrac{1}{2}}
    \end{bmatrix}.
\end{aligned}
\end{equation}
These matrices are tightly connected to the controllability of $(A_*, B_*)$, and are commonly used in systems theory to map from the $\tau$-step noise and input trajectories to the corresponding state trajectory. 
Using \eqref{eq:defToeplitz}, computing the expectation in~\eqref{eq:epsilonTildeReformulated} and writing the statement in matrix form simplifies to 
\begin{align} 
    \Expect\left[\tilde{\varepsilon}_{\theta_i}(0, \tau)\right]&=  \Expect\Big[U^\top \big(R_{\Sigma_w^{-1}}^i(\tau) + S_u(\tau)^\top Q_{\Sigma_w^{-1}}^i(\tau) S_u(\tau) \notag \\ &+  N_{\Sigma_w^{-1}}^i(\tau)S_u(\tau) + (N_{\Sigma_w^{-1}}^i(\tau)S_u(\tau))^\top \big)U\Big] \notag\\ 
    &+ \tr{S_w(\tau)^\top Q^i_{\Sigma_w^{-1}} S_w(\tau)}, \label{eq:expectedErrorFinal}
        % &\quad+ \sum_{j=1}^{n_x\tau}\lambda_j\left(S_w(\tau)^\top Q^i_{\Sigma_w^{-1}} S_w(\tau) \right), 
\end{align}
where $U\in\R^{n_u \tau}$ and we introduced the block-diagonal matrices 
\begin{subequations}\label{eq:defMatrices}
\begin{align}
    Q_{\Sigma_w^{-1}}^i(\tau) &\coloneqq \diag{\tau}(\Delta A_i ^\top \Sigma_w^{-1} \Delta A_i), \\
    R_{\Sigma_w^{-1}}^i(\tau) &\coloneqq \diag{\tau}(\Delta B_i^\top \Sigma_w^{-1} \Delta B_i) \\
    N_{\Sigma_w^{-1}}^i(\tau) &\coloneqq \diag{\tau}(\Delta B_i ^\top \Sigma_w^{-1} \Delta A_i)
\end{align}
\end{subequations}
Note that, leveraging the system-theoretic interpretation of the Toeplitz matrices \eqref{eq:defToeplitz}, $U^\top R^i_{\Sigma_w^{-1}}(\tau) U = \sum_{t=0}^{\tau}\Vert \Delta B_i u(t)\Vert_{\Sigma_w^{-1}}^2$ measures to what extent the difference between $B_*$ and $B_i$ can be seen in the weighted prediction error when applying the input sequence $U$. 
Similarly, $U^\top S_u(\tau)^\top Q_{\Sigma_w^{-1}}^i(\tau) S_u(\tau)U$ measures to what extent the difference between $A_*$ and $A_i$ can be seen in the weighted prediction error when applying $U$ and $\tr{S_w(\tau)^\top Q^i_{\Sigma_w^{-1}} S_w(\tau)}$ measures the influence of the process noise on this quantity. 
Accounting for the cross term, for any $i \in [1, N]$, the matrices
\begin{equation}\label{eq:defWi}
\begin{aligned}
     W_i(\tau) &\coloneqq R_{\Sigma_w^{-1}}^i(\tau) + S_u(\tau)^\top Q_{\Sigma_w^{-1}}^i(\tau) S_u(\tau) \\ &\quad +  N_{\Sigma_w^{-1}}^i(\tau)S_u(\tau)  + (N_{\Sigma_w^{-1}}^i(\tau)S_u(\tau))^\top,
\end{aligned}
\end{equation}
measure the difficulty of distinguishing between $\theta_*$ and $\theta_i$ and the sensitivity of \eqref{eq:predErrorTwoSys} to changes in the excitation input. 

% !TeX root = ..\main.tex
\section{Can active learning algorithms enhance identification?}
To analyze the influence of the data collection scheme on the sample complexity, we first derive a sample complexity lower bound that holds for any \emph{reasonable} algorithm. 
This instance-specific lower bound quantifies the hardness of learning in our setup and provides a nuanced answer on whether active learning algorithms can significantly reduce the sample complexity.
Further, in Section~\ref{sec:UpperBound} we provide a modular framework for deriving complementary sample complexity upper bounds.
In this work, we make the following assumption on the excitation input, which is standard in the relevant literature (see, e.g., \cite{wagenmaker2020active})
\begin{assumption}[Bounded input power]\label{ass:inputConstr}
The expected average power of the (potentially random) input sequence $\{u(t)\}_{t=0}^{T-1}$ is bounded by $\gamma_u^2$, i.e., $\Expect\left[\sum_{t=0}^{T-1} \Vert u(t)\Vert ^2\right] \le \gamma_u^2 T$.
\end{assumption}

\subsection{Sample complexity lower bounds} \label{sec:lowerBound}
To formalize the class of algorithms considered in this section, we define the class of $\delta$-correct algorithms.
\begin{definition}[$\delta$-correct algorithms]
    Consider the setup described in Section~\ref{sec:ProblemSetup}. An algorithm is called $\delta$-correct, if for all $\delta\in (0,1)$, and any $(\theta_*$, $\SSet)$ there exists a finite time $\bar T$ such that for all $t\ge \bar T$ the algorithm returns an estimate $\hat\theta_t$ that satisfies $\Prob[\hat \theta_t = \theta_*]\ge 1-\delta$. 
\end{definition}
Note that by restricting to $\delta$-correct algorithms, we can obtain instance-specific sample complexity lower bounds that hold for any reasonable algorithm and any input sequence. 
\begin{theorem}\label{th:SampleComplexLowerGeneral}
    Consider the unknown dynamical system~\eqref{eq:TrueSysEvo} with $x(0) = 0$  and the set $\SSet$ defined in~\eqref{eq:defSet}.
    Then for any (potentially random) excitation input sequence\footnote{Random and deterministic input sequences are denoted identically. The interpretation will be clear from the context.}  $U\in \R^{n_u \bar{T}}$
    and any $\delta$-correct algorithm it holds that
    \begin{align}
        \min_{i\in[1, N]} &\Expect\left[{U}^\top W_i(\bar{T}) U\right]   \label{eq:sampleComplexLowerGeneral} \\ &+  
        \tr{S_w(\bar{T})^\top  Q_{\Sigma_w^{-1}}^i(\bar{T}) S_w(\bar{T})}
        \ge 2 \log\left(\frac{1}{2.4 \delta}\right).     \notag
    \end{align}
    Furthermore, under Assumption~\ref{ass:inputConstr}, the lower bound is minimized by the excitation input  
    \begin{align}
        U^* \in \argmax_{U^\top U \le \gamma_u^2 \bar{T}} \min_{i\in [1,N]}  &U^\top W_i(\bar{T}) U \label{eq:defOptU} \\ & + 
        % \sum_{j=1}^{n_x \bar{T}}\lambda_j\left(S_w(\bar{T})^\top Q_{\Sigma_w^{-1}}^i(\bar{T}) S_w(\bar{T})\right).
        \mathrm{tr}\big({S_w(\bar{T})^\top Q_{\Sigma_w^{-1}}^i(\bar{T}) S_w(\bar{T})}\big), \notag
    \end{align}
    and when applying $U^*$ any $\delta$-correct algorithm satisfies
    \begin{align}
        &\min_{p\in \Delta_N} \gamma_u^2 \bar{T} \lambda_\mathrm{max} \Big(\sum_{i=1}^{N} p_i W_i(\bar{T})\Big) \label{eq:lowerBoundOptimal} \\ 
        &+ \max_{i\in [1,N]} \tr{S_w(\bar{T})^\top  Q_{\Sigma_w^{-1}}^i(\bar{T}) S_w(\bar{T})}
        \ge 2 \log\left(\frac{1}{2.4 \delta}\right). \notag
    \end{align}
\end{theorem}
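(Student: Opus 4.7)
The plan is to derive \eqref{eq:sampleComplexLowerGeneral} via a change-of-measure argument between $\theta_*$ and each competitor $\theta_i$, and then to obtain \eqref{eq:defOptU}--\eqref{eq:lowerBoundOptimal} by a weak-duality manipulation of the resulting bound. For fixed $i\in[1,N]$, I would compare the joint laws $\Prob_{\theta_*}^{\bar T}$ and $\Prob_{\theta_i}^{\bar T}$ of $\mathcal{D}_{\bar T}$ generated under the same (possibly adaptive) input policy. By $\delta$-correctness, $\Prob_{\theta_*}^{\bar T}[\hat\theta=\theta_*]\ge 1-\delta$ and $\Prob_{\theta_i}^{\bar T}[\hat\theta=\theta_*]\le \delta$ (the latter because $\{\hat\theta=\theta_*\}\subseteq\{\hat\theta\ne\theta_i\}$). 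Applying the data-processing inequality to the binary test $\{\hat\theta=\theta_*\}$, together with the standard Kaufmann--Capp\'e--Garivier estimate $\mathrm{kl}(1-\delta,\delta)\ge\log(1/(2.4\delta))$, then yields $\KL{\Prob_{\theta_*}^{\bar T}}{\Prob_{\theta_i}^{\bar T}}\ge\log(1/(2.4\delta))$.

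Next I would evaluate the KL by the chain rule. Both measures share the same input-generation kernel and differ only in the Gaussian transition $x(t+1)\mid x(t),u(t)\sim\mathcal{N}(A x(t)+B u(t),\Sigma_w)$, so the input factors contribute zero and the Gaussian KL identity gives
\begin{equation*}
\KL{\Prob_{\theta_*}^{\bar T}}{\Prob_{\theta_i}^{\bar T}}=\tfrac{1}{2}\Expect\!\Big[\sum_{t=0}^{\bar T-1}\Vert\Delta A_i x(t)+\Delta B_i u(t)\Vert_{\Sigma_w^{-1}}^2\Big]=\tfrac{1}{2}\Expect[\tilde\varepsilon_{\theta_i}(0,\bar T)].
\end{equation*}
Unrolling the dynamics exactly as in \eqref{eq:epsilonReformulated}--\eqref{eq:expectedErrorFinal} rewrites this as $\Expect[U^\top W_i(\bar T) U]+\tr{S_w(\bar T)^\top Q_{\Sigma_w^{-1}}^i(\bar T) S_w(\bar T)}$. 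Multiplying by two and minimising over $i\in[1,N]$ then produces \eqref{eq:sampleComplexLowerGeneral}.

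For the remaining claims, observe that \eqref{eq:sampleComplexLowerGeneral} is tightest when $U$ is chosen to maximise the $\min_i$ expression subject to $U^\top U\le\gamma_u^2\bar T$, which is precisely the definition \eqref{eq:defOptU}. To simplify the ensuing max--min I would write $\min_i(\cdot)=\min_{p\in\Delta_N}\sum_{i=1}^N p_i(\cdot)$ and apply weak duality to exchange $\max_U$ with $\min_p$. Setting $c_i\coloneqq\tr{S_w(\bar T)^\top Q_{\Sigma_w^{-1}}^i(\bar T) S_w(\bar T)}$, this yields
\begin{equation*}
\max_{U^\top U\le\gamma_u^2\bar T}\min_i\!\left(U^\top W_i(\bar T)U+c_i\right)\le\min_{p\in\Delta_N}\!\Big(\gamma_u^2\bar T\,\lambda_{\max}\big(\sum_{i=1}^N p_i W_i(\bar T)\big)+\sum_{i=1}^N p_i c_i\Big),
\end{equation*}
after which evaluating the right-hand side at the minimiser $p^\star$ of the first summand and bounding $\sum_i p_i^\star c_i\le\max_i c_i$ delivers \eqref{eq:lowerBoundOptimal}.

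The most delicate ingredient is the chain-rule identity under adaptive input policies: one must argue that the conditional input distribution given the past is identical under $\Prob_{\theta_*}$ and $\Prob_{\theta_i}$ (since the policy depends on the observed data but not on the unknown parameter), so only the transition kernels contribute to the KL. Once this observation is in place, the remaining steps are routine Gaussian calculus and convex analysis, and the constant $2.4$ is inherited directly from the standard $\mathrm{kl}$ estimate.
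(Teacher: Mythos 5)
Your proposal is correct and follows essentially the same route as the paper: the paper packages your change-of-measure/data-processing step as its Theorem~\ref{th:inputDependentLowerBound} (Garivier--Kaufmann argument plus the Gaussian log-likelihood-ratio identity from Jedra--Proutiere), rewrites the expectation via Lemma~\ref{lem:Rewritten} exactly as you do via \eqref{eq:epsilonReformulated}--\eqref{eq:expectedErrorFinal}, and obtains \eqref{eq:lowerBoundOptimal} by the same simplex reformulation and max--min/min--max exchange, merely bounding the trace term by $\max_j \tr{S_w(\bar{T})^\top Q_{\Sigma_w^{-1}}^j(\bar{T}) S_w(\bar{T})}$ before the swap rather than after, as you do. No gaps.
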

While using the maximum of the trace term in \eqref{eq:lowerBoundOptimal} introduces some conservatism, in practical applications the noise is often significantly smaller than the input. In these cases~\eqref{eq:lowerBoundOptimal} is dominated by the first term and the added conservatism is small.
The proof of Theorem~\ref{th:SampleComplexLowerGeneral} is presented in Appendix~\ref{secAppendix:SampleComplexLowerGeneral}.
Note that while the solution to \eqref{eq:defOptU} might not be unique, any optimizer suffices to achieve the optimal lower bound. As shown in Appendix~\ref{app:solExists}, an optimal solution exists. 
\begin{remark}\label{rem:Lower}
    Recall from \eqref{eq:expectedErrorFinal}-\eqref{eq:defWi} that, given $x(0) = 0$, \eqref{eq:defOptU} is equivalent to 
    \begin{equation}
        U^* \in \argmax_{U^\top U \le \gamma_u^2\bar{T}} \min_{i \in [1, N]} \Expect\left[\tilde{\varepsilon}_{\theta_i}(0, \bar{T})\right].
    \end{equation}
    Thus, Theorem~\ref{th:SampleComplexLowerGeneral} implies that the optimal excitation $U^*$ maximizes $\Expect\left[\tilde{\varepsilon}_{\theta_i}(0, \bar{T})\right]$ uniformly over  $\theta_i \in \SSet\setminus \{\theta_*\}$. 
    Thus, $U^*$ maximizes the distance between $\Expect\left[\tilde{\varepsilon}_{\theta_*}(0, \bar{T})\right]=0$ and $\Expect\left[\tilde{\varepsilon}_{\theta_i}(0, \bar{T})\right]$, to separate the 
    true system $\theta_*$ from all other systems $\theta_i \in \SSet\setminus\{\theta_*\}$ as clearly as
    possible.
\end{remark}
Theorem~\ref{th:SampleComplexLowerGeneral} can be used to obtain a sample complexity lower bound for isotropic Gaussian inputs.  
\begin{corollary}
    \label{co:lowerBoundGaussian}
    Consider the unknown dynamical system~\eqref{eq:TrueSysEvo} with $x(0) = 0$
    and the set $\SSet$ defined in~\eqref{eq:defSet}. Suppose that $u(t) \simiid \N (0, \frac{\gamma_u^2}{n_u} I_{n_u})$. Then for any $\delta$-correct algorithm it holds that
    \begin{align}
             \min_{i\in [1, N]} & \gamma_u^2 \bar{T} \lambda_\mathrm{mean}\big(W_i(\bar{T})\big) \label{eq:lowerBoundGaus}  \\ &+  
            \tr{S_w(\bar{T})^\top Q_{\Sigma_w^{-1}}^i(\bar{T}) S_w(\bar{T})} 
            \ge 2\log\left(\frac{1}{2.4 \delta}\right). \notag
    \end{align}    
\end{corollary}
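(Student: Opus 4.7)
The plan is to derive Corollary~\ref{co:lowerBoundGaussian} as a direct specialization of the general lower bound~\eqref{eq:sampleComplexLowerGeneral} in Theorem~\ref{th:SampleComplexLowerGeneral}, by evaluating the expectation $\Expect[U^\top W_i(\bar{T}) U]$ for the isotropic Gaussian excitation $u(t) \simiid \N(0, \tfrac{\gamma_u^2}{n_u} I_{n_u})$.

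First, I would observe that stacking the i.i.d.\ input samples into $U \in \R^{n_u \bar{T}}$ yields $U \sim \N(0, \tfrac{\gamma_u^2}{n_u} I_{n_u \bar{T}})$, so in particular $\Expect[UU^\top] = \tfrac{\gamma_u^2}{n_u} I_{n_u \bar{T}}$. Then, using the standard identity for quadratic forms in zero-mean random vectors,
\begin{equation*}
    \Expect\left[U^\top W_i(\bar{T}) U\right] = \tr{W_i(\bar{T}) \Expect[UU^\top]} = \frac{\gamma_u^2}{n_u}\tr{W_i(\bar{T})}.
\end{equation*}
Since $W_i(\bar{T}) \in \R^{n_u \bar{T} \times n_u \bar{T}}$, the definition $\lambda_\mathrm{mean}(W_i(\bar{T})) = \tfrac{1}{n_u \bar{T}}\tr{W_i(\bar{T})}$ gives $\tr{W_i(\bar{T})} = n_u \bar{T}\, \lambda_\mathrm{mean}(W_i(\bar{T}))$, and hence
\begin{equation*}
    \Expect\left[U^\top W_i(\bar{T}) U\right] = \gamma_u^2 \bar{T}\, \lambda_\mathrm{mean}\big(W_i(\bar{T})\big).
\end{equation*}

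Finally, I would substitute this expression into the general lower bound~\eqref{eq:sampleComplexLowerGeneral}. Noting that the trace term $\tr{S_w(\bar{T})^\top Q_{\Sigma_w^{-1}}^i(\bar{T}) S_w(\bar{T})}$ is deterministic and unaffected by the choice of input, the minimum over $i \in [1, N]$ directly produces the claimed bound~\eqref{eq:lowerBoundGaus}. As a sanity check, I would also briefly verify that the chosen Gaussian distribution satisfies Assumption~\ref{ass:inputConstr} with equality, i.e., $\Expect[\sum_{t=0}^{\bar{T}-1}\Vert u(t)\Vert^2] = \gamma_u^2 \bar{T}$, although this assumption is not actually required for~\eqref{eq:sampleComplexLowerGeneral} itself.

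Since the result is essentially an immediate corollary obtained by evaluating one Gaussian expectation, there is no real obstacle: the only subtlety is being careful with the ambient dimension $n_u \bar{T}$ when converting the trace of $W_i(\bar{T})$ into its mean eigenvalue, which must cancel cleanly against the factor $\gamma_u^2/n_u$ from the input covariance.
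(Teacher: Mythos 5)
Your proposal is correct and follows essentially the same route as the paper: the paper's proof also reduces to showing that for isotropic Gaussian inputs the expected quadratic form equals $\frac{\gamma_u^2}{n_u}\tr{W_i(\bar{T})} = \gamma_u^2\bar{T}\,\lambda_\mathrm{mean}\big(W_i(\bar{T})\big)$ and then takes the minimum over $i$, the only cosmetic difference being that the paper re-derives this from Theorem~\ref{th:inputDependentLowerBound} via Lemma~\ref{lem:Rewritten} with $\rho=1$, whereas you invoke the already-established bound~\eqref{eq:sampleComplexLowerGeneral} and evaluate the Gaussian expectation with the trace identity directly.
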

The proof is provided in Appendix~\ref{secAppendix:Proof_LowerBound_random}.
To gain an intuition for the results in this chapter consider the following example, where $\SSet$ consists of only two systems which eliminates the need for minimization.
\begin{example}\label{ex:Example1}
    First, consider the identification problem defined by
    \begin{equation}\label{eq:Example_first_sys}
        A_* = \begin{bmatrix}
            0 & 0.1 \\ 0 & 0
        \end{bmatrix} \quad 
        B_* = \begin{bmatrix}
            0 \\ 1
        \end{bmatrix} 
               \quad 
    A_1 = \begin{bmatrix}
        0 & 0.2 \\ 0 & 0
    \end{bmatrix}
    \end{equation}
    and $\SSet=\{(A_*, B_*), (A_1, B_*)\}$.
        A straightforward calculation yields $W_1(t) = \diag{t}\left(\begin{bmatrix} 0.01 \end{bmatrix}\right)$ and $\lambda_j(W_1(t)) =  0.01$, $\forall j \in [1, t], \forall t\ge1$. 
        Thus, in this case isotropic Gaussian excitations are optimal in the sense of the sample complexity lower bound. 
Next, consider $\tilde{\SSet}=\{(\tilde{A}_*, \tilde{B}_*), (\tilde{A}_1, \tilde{B}_*)\}$ with 
        \begin{equation}\label{eq:Example_second_sys}
        \begin{aligned}
            \tilde{A}_* &= \begin{bmatrix}
                A_* & 0_{2\times d} \\
                0_{d\times 2}  & I_d 
            \end{bmatrix}
            \quad 
            \tilde{B}_* = \begin{bmatrix}
                B_* & 0_{2\times d} \\
                0_{d\times 1}   & I_d
            \end{bmatrix}
            \\ 
            \tilde{A}_1 &= \begin{bmatrix}
                A_1 & 0_{2\times d} \\
                0_{d\times 2}   & I_d 
            \end{bmatrix}
            \quad 
            \tilde{B}_1 = \tilde{B}_*,
        \end{aligned}
        \end{equation}
        for some fixed $d>0$. 
Through direct calculations, we obtain $\tilde{W}_1(t) = \diag{t}\left(\begin{bmatrix} 0.01 & 0_{1\times d} \\ 0_{d \times 1} & 0_{d\times d}\end{bmatrix}\right)$, $\forall t\ge1$ and hence $\lambda_\mathrm{mean}(\tilde{W}_1(t)) = \frac{1}{d+1}\lambda_\mathrm{max}(\tilde{W}_1(t))$. Hence, when $d$ is large, the sample complexity lower bound for isotropic Gaussian excitations is significantly larger than for the optimal oracle excitation. 
This indicates that using isotropic Gaussian excitations requires more samples to achieve the same confidence $\delta$ if algorithms matching the lower bound are used. This can also be observed in the sample complexity upper bounds we present in the next section. 
Further, under the same process noise conditions the sample complexity lower bound of the identification problems $(\theta_*, \SSet)$ and $(\tilde{\theta}_*, \tilde{\SSet})$ is identical given the respective optimal oracle excitation is used for both instances, as $\lambda_\mathrm{max}(W_1(t)) \equiv \lambda_\mathrm{max}(\tilde{W}_1(t))$. 
Hence, in this sense the difficulty of the problems $(\theta_*, \SSet)$ and $(\tilde{\theta}_*, \tilde{\SSet})$ is identical.
This is due to the fact that the last $d$ modes of the system~\eqref{eq:Example_second_sys} are not relevant for the identification problem since they are decoupled from the unknown part. 
Confirming this intuition, $\tilde{U}^*$ only excites the system with the first input.
Crucially, our mathematical formulation enables the analysis of complex problem setups, where the optimal excitation and hardness cannot be determined intuitively.
\end{example}
% !TeX root = ..\main.tex
\subsection{Sample complexity upper bounds} \label{sec:UpperBound}
In this section, we establish a modular framework to derive sample complexity upper bounds for identifying $\theta_*$ with high probability. 
Our proposed framework builds on the notion of \acl{PE} which is closely connected to the observations made in the previous section.
\subsubsection{Persistency of Excitation}
Recall that in Example~\ref{ex:Example1}, the goal of identifying the true system can be accomplished by exciting only the first mode of the system using only the first control input. Clearly, there exists no $c>0$ for which this input satisfies
\begin{equation}
    \Expect\Big[\sum_{t=0}^{\tau-1} \begin{bmatrix}
        x(t) \\ u(t)
    \end{bmatrix} \begin{bmatrix}
        x(t)^\top & u(t)^\top  
    \end{bmatrix}\Big] \succeq c I_{n_x+n_u},
\end{equation}
which is the \ac{PE} condition required in the case of an infinite hypothesis class (cf. \cite{Tsiamis2023}). 
In fact, it is only necessary to excite the parts of the system that carry uncertainty. Thus, we introduce a weaker \ac{PE} condition, which has been used as an assumption in a similar form by \cite{muehlebach2025}.
\begin{definition}[\Acf{PE}]\label{def:PE}
    Consider an input sequence $\{u(t)\}_{t=0}^{\tau-1}$, satisfying $\Expect[\sum_{t=0}^{\tau-1} \Vert u(t)\Vert^2]= \gamma_u^2 \tau$ for some $\gamma_u>0$. We say $\{u(t)\}_{t=0}^{\tau-1}$ is  persistently exciting for $(\theta_*, \SSet)$  
    if there exist coefficients $c_u(\tau)>0$, $c_w(\tau)>0$ such that for any $x(0) \in \R^{n_x}$ and $\theta_i \in \SSet\setminus\{\theta_*\}$ 
    \begin{equation}\label{eq:defPE}
        \frac{1}{\tau} \sum_{t=0}^{\tau-1} \Expect \left[\Vert \Delta A_i x(t) + \Delta B_i u(t) \Vert^2_{\Sigma_w^{-1}} \right] \ge c_u(\tau)  \gamma_u^2+ c_w(\tau).
    \end{equation} 
\end{definition} 
Note that the \ac{PE} coefficients $c_u$ and $c_w$ are problem-specific and depend on the block length $\tau$. 
By the sample complexity lower bound, the excitation input needs to satisfy Definition~\ref{def:PE} for some $\tau>0$ to guarantee identification of $\theta_*$ with high probability (see Remark~\ref{rem:Lower} and Theorem~\ref{th:inputDependentLowerBound}).
If not, there exists some $\theta_i\neq \theta_*$ which yields the same state trajectory as $\theta_*$ and hence is indistinguishable from $\theta_*$ given the data.
In the following, we show that isotropic Gaussian excitations and the optimal oracle excitation $U^*$ defined in~\eqref{eq:defOptU} satisfy Definition~\ref{def:PE}, i.e., satisfy \ac{PE}.
\begin{lemma}[\ac{PE} for isotropic Gaussian inputs] \label{le:PEgaussianExcitation}
    Consider the system~\eqref{eq:TrueSysEvo} and the set $\SSet$ defined in~\eqref{eq:defSet}. 
    Then $u(t) \simiid \N(0, \frac{\gamma^2}{n_u} I_{n_u})$ is \ac{PE}  for any block length $\tau>0$ with 
    \vspace{-1em}
        \begin{align}
            c_u^{\mathrm{rand}}(\tau) &= \min_{p\in\Delta_N}\lambda_\mathrm{mean}\Big(\sum_{i=1}^{N}p_i W_{i}(\tau)\Big), \label{eq:cu_rand}\\
            c_w(\tau) &= \min_{i\in[1, N]} \frac{1}{\tau} \tr{S_w(\tau)^\top Q_{\Sigma_w^{-1}}^{i}(\tau) S_w(\tau)}.
        \end{align}
\end{lemma}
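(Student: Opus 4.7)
\medskip
\noindent\textbf{Proof plan.} The plan is to show that the Gaussian input attains exactly the expected one‐step error computed in the excerpt, and then to identify the constants by matching that expression against the definition of \ac{PE}. First I would rewrite the left-hand side of~\eqref{eq:defPE} using the state evolution. For general $x(0)\in\R^{n_x}$, unrolling~\eqref{eq:TrueSysEvo} gives
\begin{equation*}
\Delta A_i x(t) + \Delta B_i u(t) = \Delta A_i A_*^t x(0) + \Delta A_i\!\sum_{s=0}^{t-1}\! A_*^{t-1-s} B_* u(s) + \Delta A_i\!\sum_{s=0}^{t-1}\! A_*^{t-1-s} w(s) + \Delta B_i u(t).
\end{equation*}
Taking expectation of the $\Sigma_w^{-1}$-squared norm and using $\Expect[u(s)]=\Expect[w(s)]=0$ with independence across $s,t$, every cross term vanishes. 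The deterministic piece $\Vert \Delta A_i A_*^t x(0)\Vert_{\Sigma_w^{-1}}^2\ge 0$ can be dropped, so it suffices to establish the bound for $x(0)=0$. With $x(0)=0$, the formula~\eqref{eq:expectedErrorFinal} together with the definition~\eqref{eq:defWi} gives
\begin{equation*}
\sum_{t=0}^{\tau-1}\Expect\!\left[\Vert \Delta A_i x(t)+\Delta B_i u(t)\Vert_{\Sigma_w^{-1}}^2\right] \;=\; \Expect\!\left[U^\top W_i(\tau)\,U\right] + \tr{S_w(\tau)^\top Q_{\Sigma_w^{-1}}^i(\tau) S_w(\tau)}.
\end{equation*}

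Next I would evaluate $\Expect[U^\top W_i(\tau) U]$ for the isotropic Gaussian input. Since $U\sim\N(0,\frac{\gamma_u^2}{n_u}I_{n_u\tau})$, the standard trace identity yields $\Expect[U^\top W_i(\tau)U]=\frac{\gamma_u^2}{n_u}\tr{W_i(\tau)}$. Because $W_i(\tau)\in\R^{n_u\tau\times n_u\tau}$, this equals $\gamma_u^2\tau\,\lambda_{\mathrm{mean}}(W_i(\tau))$. Dividing both sides by $\tau$ therefore produces
\begin{equation*}
\frac{1}{\tau}\sum_{t=0}^{\tau-1}\Expect\!\left[\Vert \Delta A_i x(t)+\Delta B_i u(t)\Vert_{\Sigma_w^{-1}}^2\right] \;\ge\; \gamma_u^2\,\lambda_{\mathrm{mean}}(W_i(\tau)) + \frac{1}{\tau}\tr{S_w(\tau)^\top Q_{\Sigma_w^{-1}}^i(\tau) S_w(\tau)}.
\end{equation*}
Taking the minimum over $i\in[1,N]$ on the right-hand side gives a bound uniform in $\theta_i\in\SSet\setminus\{\theta_*\}$. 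The noise term is then by definition $c_w(\tau)$.

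Finally I would identify the coefficient of $\gamma_u^2$ with the advertised expression. Because $\lambda_{\mathrm{mean}}(\cdot)=\frac{1}{n_u\tau}\tr{\cdot}$ is a linear functional of its matrix argument, the map $p\mapsto \lambda_{\mathrm{mean}}\big(\sum_{i}p_i W_i(\tau)\big)=\sum_{i}p_i\lambda_{\mathrm{mean}}(W_i(\tau))$ is linear on the simplex $\Delta_N$, and hence its minimum is attained at a vertex:
\begin{equation*}
\min_{p\in\Delta_N}\lambda_{\mathrm{mean}}\!\Big(\sum_{i=1}^N p_i W_i(\tau)\Big) \;=\; \min_{i\in[1,N]}\lambda_{\mathrm{mean}}(W_i(\tau)) \;=\; c_u^{\mathrm{rand}}(\tau).
\end{equation*}
This yields the claimed \ac{PE} inequality with the stated constants for every block length $\tau>0$.

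\medskip
\noindent\textbf{Main obstacle.} The computation is essentially mechanical once~\eqref{eq:expectedErrorFinal} is in hand; the only subtlety I expect is the bookkeeping for arbitrary $x(0)$, which is resolved by observing that the additional deterministic term is non-negative and can be discarded without weakening the bound. The reformulation of $c_u^{\mathrm{rand}}(\tau)$ via the simplex is not needed for correctness but is kept because it exposes the exact structure appearing in the lower bound~\eqref{eq:lowerBoundOptimal}, which is why I would state it in the form~\eqref{eq:cu_rand}.
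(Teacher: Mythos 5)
Your proposal is correct and follows essentially the same route as the paper: the paper merely packages your expectation computation (unrolling the dynamics, discarding the nonnegative initial-condition term, and evaluating $\Expect\left[U^\top W_i(\tau)U\right]=\tfrac{\gamma_u^2}{n_u}\tr{W_i(\tau)}$) into the auxiliary Lemma~\ref{lem:Rewritten} applied with $\rho=1$ and $M=\Sigma_w^{-1}$. The remaining steps — lower bounding by minimizing the input and noise terms over $i$ separately, and using linearity of $\lambda_\mathrm{mean}$ to identify $\min_i \lambda_\mathrm{mean}(W_i(\tau))$ with the simplex minimum in \eqref{eq:cu_rand} — coincide with the paper's proof.
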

\begin{lemma}[\ac{PE} for optimal oracle excitation]\label{le:PEoptimalExcitation}
    Consider the system~\eqref{eq:TrueSysEvo} and the set $\SSet$ defined in~\eqref{eq:defSet}. 
    Then 
    \begin{equation} \label{eq:oracleUTau}
    \begin{aligned}
        U^* \in \argmax_{U^\top U \le \gamma_u^2 \tau} &\min_{i\in [1,N]} \\ &\sum_{t=0}^{\tau} \Expect \left[\Vert \Delta A_i x(t) + \Delta B_i u(t) \Vert^2_{\Sigma_w^{-1}} \right] 
    \end{aligned}
    \end{equation}
    is \ac{PE} for block length $\tau$ with 
        \begin{align}
            c_u^{\mathrm{opt}}(\tau) &= \min_{p\in \Delta_N}\lambda_\mathrm{max}\Big(\sum_{i=1}^N p_i W_i(\tau)\Big),  \label{eq:cu_opt} \\
            c_w(\tau) &= \min_{i\in [1,N]}\frac{1}{\tau} \tr{S_w(\tau)^\top Q_{\Sigma_w^{-1}}^{i}(\tau) S_w(\tau)}.
        \end{align}
\end{lemma}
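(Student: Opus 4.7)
My plan is to prove the \ac{PE} inequality for the oracle excitation $U^*(x(0))$ by (i) decomposing the expected cumulative prediction error in the same spirit as \eqref{eq:expectedErrorFinal}, (ii) lower bounding the resulting max--min value via a Sion-type minimax argument applied to an SDP lifting, and (iii) transferring the bound via optimality of $U^*(x(0))$.

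First, I would generalize the derivation leading to \eqref{eq:expectedErrorFinal} to arbitrary $x(0)\in\R^{n_x}$. Unfolding $x(t) = A_*^t x(0) + \sum_{s<t}A_*^{t-1-s}(B_* u(s) + w(s))$, separating the deterministic and stochastic components, and using $\Expect[w(s)]=0$ yields for each alternative $\theta_i$
\begin{equation*}
F_i(U,x(0)) \coloneqq \sum_{t=0}^{\tau-1}\Expect\bigl[\Vert\Delta A_i x(t) + \Delta B_i u(t)\Vert_{\Sigma_w^{-1}}^2\bigr] = h_i(x(0),U) + \Expect[U^\top W_i(\tau) U] + \tr{S_w(\tau)^\top Q_{\Sigma_w^{-1}}^i(\tau) S_w(\tau)},
\end{equation*}
where $h_i(x(0),U)$ collects a nonnegative quadratic in $x(0)$ and a cross term linear in $U$, which therefore vanishes in expectation whenever $\Expect[U]=0$.

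Second, I would lower bound the max--min via convex duality. Interpreting the admissible inputs as (possibly) randomized sequences with $\Vert U\Vert^2\le\gamma_u^2\tau$ almost surely, the substitution $X = \Expect[UU^\top]$ lifts the $U$-dependent term to the SDP $\max_{X\succeq 0,\,\tr X\le\gamma_u^2\tau}\min_{i\in[1,N]}\tr{X W_i(\tau)}$. Sion's minimax theorem applied to the bilinear map $(X,p)\mapsto\tr{X\sum_{i=1}^N p_i W_i(\tau)}$ over the convex compact sets $\{X\succeq 0:\tr X\le\gamma_u^2\tau\}$ and $\Delta_N$ exchanges max and min, producing $\min_{p\in\Delta_N}\gamma_u^2\tau\lambda_\mathrm{max}\bigl(\sum_i p_i W_i(\tau)\bigr) = \gamma_u^2\tau\cdot c_u^\mathrm{opt}(\tau)$. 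I then pick an SDP maximizer $X^*$ and randomize signs independently along its eigendirections to construct a symmetric (zero-mean) comparison input $U_0$ with $\Vert U_0\Vert^2\le\gamma_u^2\tau$ almost surely and $\Expect[U_0 U_0^\top]=X^*$; this forces $h_i(x(0),U_0)\ge 0$ and $\min_i\Expect[U_0^\top W_i(\tau) U_0]\ge \gamma_u^2\tau\cdot c_u^\mathrm{opt}(\tau)$.

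Third, the max--min optimality of $U^*(x(0))$ over admissible randomized inputs gives, for every $i\in[1,N]$,
\begin{equation*}
F_i(U^*,x(0)) \ge \min_j F_j(U^*,x(0)) \ge \min_j F_j(U_0,x(0)) \ge \gamma_u^2\tau\cdot c_u^\mathrm{opt}(\tau) + \tau c_w(\tau),
\end{equation*}
where the last inequality combines the SDP bound with $h_j(x(0),U_0)\ge 0$ and $\min_j[\Expect[U_0^\top W_j U_0]+\tr{S_w^\top Q_{\Sigma_w^{-1}}^j S_w}]\ge\min_j\Expect[U_0^\top W_j U_0]+\min_j\tr{S_w^\top Q_{\Sigma_w^{-1}}^j S_w}$, which identifies the second summand as $\tau c_w(\tau)$. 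Dividing through by $\tau$ recovers \eqref{eq:defPE} with the claimed constants.

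The hard part is the minimax equality in the second step. Since $U\mapsto U^\top W_i U$ is convex rather than concave, Sion's theorem does not apply in the original variables, which forces the lift to $X=\Expect[UU^\top]$ and the appeal to SDP duality; the symmetric randomization is what makes the lifted optimum attainable by a feasible input. A secondary technical point is that the comparison input must simultaneously be zero-mean (to kill the $x(0)$-cross term in $h_i$) and attain the SDP value, which is achieved by independent sign flips along the eigendirections of $X^*$.
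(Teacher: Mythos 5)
Your proof is correct, but it follows a genuinely different route from the paper's. The paper's proof applies Lemma~\ref{lem:Rewritten} with $\rho=0$ and $M=\Sigma_w^{-1}$, drops the nonnegative initial-condition term $c_i(x(0))$ and the cross term $2U^\top m_i(x(0))$ (arguing the latter is nonnegative at the maximizer), and then refers back to the proof of Theorem~\ref{th:SampleComplexLowerGeneral}, where the exchange $\max_U\min_p=\min_p\max_U$ for the quadratic form is asserted and a maximizer aligned with the top eigenvector of $\sum_i p_iW_i(\tau)$ is selected. You instead lift to $X=\Expect[UU^\top]$, apply Sion's theorem to the bilinear map $(X,p)\mapsto\tr{X\sum_i p_iW_i(\tau)}$, and realize the SDP optimum by a zero-mean Rademacher randomization along the eigendirections of $X^*$. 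This buys two things the paper leaves implicit: (i) a valid minimax step --- in the original variables the objective is convex in $U$, so only $\max\min\le\min\max$ is automatic, and for $N\ge 3$ the deterministic max--min can sit strictly below $\gamma_u^2\tau\min_p\lambda_\mathrm{max}\bigl(\sum_i p_iW_i(\tau)\bigr)$ --- and (ii) a clean disposal of the $x(0)$-dependent cross term via $\Expect[U_0]=0$ rather than a sign argument at the maximizer.

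The one point to be explicit about is the transfer step: the inequality $\min_j F_j(U^*,x(0))\ge\min_j F_j(U_0,x(0))$ requires that your randomized comparison input $U_0$ be feasible in~\eqref{eq:oracleUTau}, i.e.\ that the argmax there ranges over (possibly) randomized sequences with $\Vert U\Vert^2\le\gamma_u^2\tau$ almost surely. You state this interpretation up front, and it is consistent with the paper's setting (Assumption~\ref{ass:inputConstr} and Definition~\ref{def:PE} are phrased in expectation, and inputs are ``potentially random'' throughout); under a strictly deterministic reading of~\eqref{eq:oracleUTau} your comparison input would be inadmissible, but then the stated coefficient $c_u^{\mathrm{opt}}(\tau)$ is itself only guaranteed up to the rank-one relaxation gap, a point the paper's own terse argument also glosses over. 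Read with the randomized feasible set, your argument is sound and in fact more rigorous on the minimax step than the paper's.
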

The proof of both results is given in Appendices~\ref{secAppendix:proof_PE_Gaussian} and~\ref{secAppendix:proof_PE_Opt}, respectively.
Importantly, the \ac{PE} coefficients $c_u^\mathrm{rand}(\tau)$ and $c_u^\mathrm{opt}(\tau)$ coincide with the key problem parameters influencing the sample complexity lower bounds in Theorem~\ref{th:SampleComplexLowerGeneral} and Corollary~\ref{co:lowerBoundGaussian}.
Further, the optimal excitation $U^*$ derived in Theorem~\ref{th:SampleComplexLowerGeneral} maximizes the \ac{PE} coefficients $c_u(\tau)$, $c_w(\tau)$ in~\eqref{eq:defPE}.  
\begin{remark}\label{rem:Covariance}
    If $\Sigma_w$ is unknown and instead an estimate $\hat \Sigma_w$ or upper bound $\sigma_w^2I_{n_x} \succeq \Sigma_w$ is used to solve \eqref{eq:oracleUTau}, the corresponding input sequence is \ac{PE}, although with potentially suboptimal coefficients. 
    This sub-optimality is due to the lack of absence of information which could be used when designing excitation inputs.  
    The degree of sub-optimality is instance-dependent and can be analyzed by computing and comparing the coefficients. 
\end{remark}
%
%% Upper Bound 
%
\subsubsection{High probability identification with finite samples}
After establishing \ac{PE}, we now show that \ac{PE} guarantees fast identification of the true system~$\theta_*$ with high probability.
To do so, we introduce a sequential estimation algorithm for general input sequences.
In particular, we derive a sample complexity upper bound for Alg.~\ref{alg:ID} that holds for any \ac{PE} input sequence.
For estimation, the algorithm evaluates the sum of weighted empirical one-step prediction errors of $\theta_i$ up to time $t$ defined as 
\begin{equation} \label{eq:defEpsilon}
    \varepsilon_{\theta_i}(t) \coloneqq \sum_{s = 0}^{t-1} \Vert x(s+1) - A_i x(s) - B_i u(s)\Vert_{\Sigma_w^{-1}}^2 . 
\end{equation}
Clearly, $\varepsilon_{\theta_i}(t)$ can be interpreted as the negative log-likelihood of system $i$ given the data collected from time $0$ to $t$. 
Thus, the termination criterion of Alg.~\ref{alg:ID} is equivalent to a log-likelihood test.
%
%% Algo
%
\begin{algorithm}[t]
    \caption{Sequential identification algorithm}\label{alg:ID}
    \begin{algorithmic}[1]
    \Require $\SSet$, epoch length $\tau$, desired confidence $\delta$
    \For{$k = 1, 2, \dots$} 
    \State Collect data using \ac{PE} excitation input $\{u(t)\}_{t=(k-1)\tau}^{k\tau-1}$ with coefficients $c_{u_k}(\tau)$, $c_{w}(\tau)$
    \State Compute $\varepsilon_{\theta_i}(k\tau)$ for all $i \in [0, N]$
    \If{$\exists \hat \theta \in \SSet: \varepsilon_{\theta_i}(k\tau) - \varepsilon_{\hat \theta}(k\tau) > 2\log\left(\frac{N}{\delta}\right)$ for all $\theta_i \in \SSet\setminus \hat\theta$}
    \State Stop and \Return estimate $\hat \theta$
    \EndIf
    \EndFor
    \end{algorithmic}
\end{algorithm}
%
%% End Algo 
%
While related works considering this setup \citep{chatzikiriakos2024b,muehlebach2025} only analyze the case where the data is generated by  Gaussian excitations, the results in this work hold for \emph{any} input satisfying \ac{PE}. In particular, this allows us to systematically reason about how learning can be accelerated through particular choices of the excitation input.
To formalize this, we propose the following result.
\begin{theorem}\label{th:sampleComplexityUpperBound_General}
    Consider the unknown system~\eqref{eq:TrueSysEvo}, set $\SSet$ as defined in~\eqref{eq:defSet}. Then Alg.~\ref{alg:ID} yields an estimate $\hat{\theta}$ satisfying $\Prob\left[\hat \theta \neq \theta_*\right] \le \delta$ and with probability at least $1-\delta$ terminates no later than when $k$ satisfies 
    \begin{equation}
        k  c_w(\tau) + \sum_{j=1}^{k}c_{u_j}(\tau) \gamma_u^2 \ge c'\log\left(\frac{N}{\delta}\right),
    \end{equation}
    where $c'$ is a constant influenced by the variance in $\tilde{\varepsilon}_{\theta_i}$. 
\end{theorem}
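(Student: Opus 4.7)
The plan is to analyze the log-likelihood gap $G_i(k\tau):=\varepsilon_{\theta_i}(k\tau)-\varepsilon_{\theta_*}(k\tau)$ for each alternative hypothesis $\theta_i\neq\theta_*$ by isolating a ``signal'' piece that is controlled from below by the PE condition and a ``martingale'' piece that is controlled by concentration. Substituting \eqref{eq:TrueSysEvo} into \eqref{eq:defEpsilon} and expanding the square gives the clean decomposition
\begin{equation*}
    G_i(k\tau)=\tilde{\varepsilon}_{\theta_i}(0,k\tau)+2M_i(k\tau),\qquad M_i(t):=\sum_{s=0}^{t-1}(\Delta A_i x(s)+\Delta B_i u(s))^\top \Sigma_w^{-1} w(s),
\end{equation*}
where $\tilde{\varepsilon}_{\theta_i}$ is the quantity from \eqref{eq:predErrorTwoSys} and $M_i$ is a martingale with respect to the natural filtration generated by $(x(s),u(s))_{s\le t}$. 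Since $w(s)\mid\mathcal{F}_s\sim\mathcal{N}(0,\Sigma_w)$, the increments of $M_i$ are conditionally Gaussian with variance $\|\Delta A_i x(s)+\Delta B_i u(s)\|_{\Sigma_w^{-1}}^2$, so the predictable quadratic variation of $M_i$ at time $k\tau$ is exactly $\tilde{\varepsilon}_{\theta_i}(0,k\tau)$.

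For correctness, I would apply a time-uniform self-normalized sub-Gaussian bound (Ville's inequality applied to the supermartingale $\exp(\lambda M_i(t)-\tfrac{\lambda^2}{2}\tilde{\varepsilon}_{\theta_i}(0,t))$) to obtain, for any $\lambda\in(0,1)$,
\begin{equation*}
    \Prob\!\Big[\exists k:\; |M_i(k\tau)|\ge \tfrac{\lambda}{2}\tilde{\varepsilon}_{\theta_i}(0,k\tau)+\tfrac{1}{\lambda}\log(N/\delta)\Big]\le \delta/N.
\end{equation*}
Union bounding over the $N$ alternatives and taking $\lambda=1$ shows that whenever the algorithm stops, $G_i(k\tau)\ge -2\log(N/\delta)$ for all $i\neq *$, so it cannot return any $\theta_i\neq\theta_*$; this yields $\Prob[\hat\theta\neq\theta_*]\le\delta$.

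For the termination bound, I would run the same concentration with $\lambda=1/2$, giving $G_i(k\tau)\ge \tfrac{1}{2}\tilde{\varepsilon}_{\theta_i}(0,k\tau)-4\log(N/\delta)$ uniformly in $k$ on the good event. Conditioning on $\mathcal{F}_{(j-1)\tau}$ and applying Definition~\ref{def:PE} within each episode (the PE condition holds for arbitrary initial state, which is exactly what allows stitching episodes together) yields
\begin{equation*}
    \Expect\!\big[\tilde{\varepsilon}_{\theta_i}(0,k\tau)\big]\ge \tau\sum_{j=1}^{k}\big(c_{u_j}(\tau)\gamma_u^2+c_w(\tau)\big).
\end{equation*}
Since $\tilde{\varepsilon}_{\theta_i}(0,k\tau)$ is a quadratic form in the Gaussian noise sequence (and, where applicable, the randomized excitation), I would use a Hanson--Wright type sub-exponential inequality to show it concentrates around its expectation up to a fluctuation of order $\log(N/\delta)$ scaled by the operator norm of the associated Gram matrix; this operator norm is precisely the ``variance'' quantity that enters $c'$. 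Combining the PE lower bound on the mean, the Hanson--Wright deviation, and the martingale bound, we obtain $G_i(k\tau)>2\log(N/\delta)$ as soon as $\tau\sum_{j=1}^{k}c_{u_j}(\tau)\gamma_u^2+c_w(\tau)\ge c'\log(N/\delta)$, at which point Algorithm~\ref{alg:ID} terminates.

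The main obstacle is the joint control of the martingale noise by its own random quadratic variation while simultaneously arguing that this quadratic variation concentrates tightly around the deterministic PE-based lower bound. Both steps need to hold uniformly across the stopping round $k$ and across the $N-1$ alternative hypotheses, and the Hanson--Wright constant depends in an instance-specific way on the Toeplitz matrices $S_u(k\tau)$, $S_w(k\tau)$ and on the matrices $W_i(k\tau)$ from \eqref{eq:defWi}, which is what determines the precise form of $c'$.
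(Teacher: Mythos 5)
Your correctness half is sound, and it is in fact the paper's own argument in different notation: the decomposition $\varepsilon_{\theta_i}(k\tau)-\varepsilon_{\theta_*}(k\tau)=\tilde{\varepsilon}_{\theta_i}(0,k\tau)+2M_i(k\tau)$ is valid, and your self-normalized supermartingale $\exp\bigl(\lambda M_i(t)-\tfrac{\lambda^2}{2}\tilde{\varepsilon}_{\theta_i}(0,t)\bigr)$ at $\lambda=-1$ is exactly the likelihood ratio $L_{\theta_i,\theta_*}(t)$ to which the paper applies Ville's maximal inequality, so $\Prob[\hat\theta\neq\theta_*]\le\delta$ follows as you say.

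The gap is in the termination half, precisely at the step you flag as "the main obstacle" and then do not resolve: lower-bounding the \emph{realized} quadratic variation $\tilde{\varepsilon}_{\theta_i}(0,k\tau)$ by the PE-based deterministic quantity. You propose a single Hanson--Wright application, treating $\tilde{\varepsilon}_{\theta_i}(0,k\tau)$ as a quadratic form in the Gaussian noise and excitation with a fixed Gram matrix. That fails for the generality the theorem needs: the PE inputs may be chosen adaptively episode by episode (this is exactly what lets Theorem~\ref{th:sampleComlexityAlgo} invoke this result for Algorithm~\ref{alg:inputDesign}, where the input in episode $j$ depends on the sampled estimate $\hat\theta_j$ and hence on past noise), so the map from the underlying randomness to $\tilde{\varepsilon}_{\theta_i}(0,k\tau)$ is not a quadratic form with a deterministic matrix. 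In addition, a fixed-$k$ deviation inequality does not give the bound uniformly over the random stopping round $k$ jointly with the martingale bound without a union/peeling or supermartingale device, and the lower tail of a noncentral quadratic form carries a $\sqrt{\Expect[\tilde{\varepsilon}_{\theta_i}]\cdot\nu\log(N/\delta)}$ term, so the fluctuation is not literally "of order $\log(N/\delta)$ times the operator norm" (this last point is absorbable by AM--GM, the first two are not cosmetic). The paper closes exactly these holes with Proposition~\ref{prop:NewSuperMartingale}: working block by block and conditionally on the block-start filtration $\mathcal{F}_j$, it bounds the conditional MGF of $\varepsilon_{\theta_i}-\varepsilon_{\theta_*}$ over one episode (H\"older within the block, Lemma~\ref{lem:AuxUpperBound}, and a Gaussian--Lipschitz sub-exponential bound playing the role of your Hanson--Wright step), offsets it by $\tfrac14\sum_t\Expect[\Vert\Delta A_i x(t)+\Delta B_i u(t)\Vert^2_{\Sigma_w^{-1}}\mid\mathcal{F}_j]$ to obtain a supermartingale, and then Definition~\ref{def:PE} (stated for arbitrary block-initial state, which is what makes the conditional bound deterministic) together with Ville's inequality yields the time-uniform, adaptivity-robust statement; the admissible $\eta$, and hence $c'$, is dictated by the sub-exponential parameter built from \eqref{eq:defVarianceError}. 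To make your route rigorous you would have to replace the global Hanson--Wright step by such a blockwise conditional-MGF (or conditionally sub-exponential martingale) argument comparing $\tilde{\varepsilon}_{\theta_i}$ to the sum of block-conditional expectations, at which point you essentially recover the paper's proof.
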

The proof of Theorem~\ref{th:sampleComplexityUpperBound_General} is presented in Appendix~\ref{app:ProofsampleComplexityUpperBound_Gaussian}, where we also present the full version of the result.  
If the input sequence in each epoch is \ac{PE}, Theorem~\ref{th:sampleComplexityUpperBound_General} guarantees that the risk of miss-specification decays exponentially in $k$.
Further, Theorem~\ref{th:sampleComplexityUpperBound_General} establishes a modular framework to derive sample complexity upper bounds for different excitations. That is, to provide a sample complexity upper bound for Alg.~\ref{alg:ID} it is only necessary to show the excitation satisfies Definition~\ref{def:PE}. 
Given the \ac{PE} coefficients $c_{u_j}(\tau)$, $c_w(\tau)$ the sample complexity upper bound follows immediately. 
Clearly, the level of \ac{PE} of the excitation dictates the speed of identification, where larger \ac{PE} coefficients result in faster identification.
For the input sequences analyzed previously, the following result holds.  
\begin{corollary}\label{co:sampleComplexityRandomOracle}
    Consider the same setup as in Theorem~\ref{th:sampleComplexityUpperBound_General}.
    If $u(t)\simiid\N(0, \frac{\gamma_u^2}{n_u}I_{n_u})$ then Alg.~\ref{alg:ID} yields an estimate $\hat{\theta}$ satisfying $\Prob\big[\hat \theta \neq \theta_*\big] \le \delta$ and with probability at least $1-\delta$ terminates at the latest when $k$ first satisfies 
    \vspace{-1em}
    \begin{align}
        k\Big(&\gamma_u^2 \min_{p\in\Delta_N}\lambda_\mathrm{mean}\Big(\sum_{i=1}^{N}p_i W_{i}(\tau)\Big)  \\ &+ 
        \min_{i \in [1,N]} \frac{1}{\tau}\tr{S_w(\tau)^\top Q_{I_{n_x}}^{i} S_w(\tau)}\Big)
        \ge c' \log\left(\frac{N}{\delta}\right). \notag
    \end{align}
    Further, if the optimal oracle excitation input $U^*$ defined in \eqref{eq:oracleUTau} is applied the estimate $\hat{\theta}$ satisfies $\Prob[\hat \theta \neq \theta_*] \le \delta$, and with probability at least $1-\delta$ Alg.~\ref{alg:ID} terminates at the latest when $k$ first satisfies 
    \vspace{-0.3em}
    \begin{align}
        k \Big(&\gamma_u^2 \min_{p\in \Delta_N}\lambda_\mathrm{max}\Big(\sum_{i=1}^N p_i W_i(\tau)\Big) \\
        &+ \min_{i\in [1, N]}\frac{1}{\tau} \tr{S_w(\tau)^\top Q_{I_{n_x}}^{i} S_w(\tau)}
        \Big) 
        \ge c'' \log\left(\frac{N}{\delta}\right), \notag
    \end{align}
        where $c'$ and $c''$ are constants influenced by the variance in $\tilde{\varepsilon}_{\theta_i}$. 
\end{corollary}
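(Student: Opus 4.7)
The plan is to read this corollary as a direct specialization of the modular upper bound Theorem~\ref{th:sampleComplexityUpperBound_General} to the two excitation schemes for which \ac{PE} coefficients have already been computed in Lemmas~\ref{le:PEgaussianExcitation} and~\ref{le:PEoptimalExcitation}. Under Theorem~\ref{th:sampleComplexityUpperBound_General}, Algorithm~\ref{alg:ID} returns a correct estimate with probability at least $1-\delta$ as soon as $\tau \sum_{j=1}^{k} c_{u_j}(\tau)\gamma_u^2 + c_w(\tau) \ge c' \log(N/\delta)$. So for each of the two input strategies I only need (i) to certify that the chosen input is \ac{PE} in every block with a coefficient that does not depend on $j$, and (ii) to collapse the resulting telescoping sum into a bound of the form $T \cdot c_u(\tau)\gamma_u^2 + c_w(\tau) \ge c' \log(N/\delta)$ using $T = k\tau$.

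For the isotropic Gaussian case, the excitation in every block is generated i.i.d. from $\N(0, \tfrac{\gamma_u^2}{n_u} I_{n_u})$ and is therefore independent of the history. Lemma~\ref{le:PEgaussianExcitation} then yields the block-independent coefficient $c_u^{\mathrm{rand}}(\tau) = \min_{p\in\Delta_N}\lambda_\mathrm{mean}\bigl(\sum_{i=1}^{N}p_i W_{i}(\tau)\bigr)$ together with the trace coefficient $c_w(\tau)$ from the lemma. Summing $c_{u_j}(\tau) = c_u^{\mathrm{rand}}(\tau)$ across the $k$ episodes immediately gives the first claim once $k\tau$ is relabeled as $T$.

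For the optimal oracle excitation, in episode $k$ the learner re-solves~\eqref{eq:oracleUTau} initialized at the actual current state $x((k-1)\tau) \in \R^{n_x}$, which is random and correlated with the past. This is where I expect the only real subtlety to lie: one must invoke the fact that the bound in Lemma~\ref{le:PEoptimalExcitation} is \emph{uniform in} $x(0)$, so that along any sample path the block-wise \ac{PE} coefficient satisfies $c_{u_j}(\tau) \ge c_u^{\mathrm{opt}}(\tau) = \min_{p\in\Delta_N}\lambda_\mathrm{max}\bigl(\sum_{i=1}^{N} p_i W_i(\tau)\bigr)$ regardless of the random initial state inherited from the previous block. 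With this uniformity in hand, the summation in Theorem~\ref{th:sampleComplexityUpperBound_General} again collapses and yields the second displayed bound.

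Beyond that, the proof is purely mechanical: substitute the \ac{PE} constants of Lemmas~\ref{le:PEgaussianExcitation} and~\ref{le:PEoptimalExcitation} into the termination criterion of Theorem~\ref{th:sampleComplexityUpperBound_General}, and identify the constants $c'$ and $c''$ as the problem-dependent constants coming from the sub-exponential concentration of $\tilde{\varepsilon}_{\theta_i}$ that already appear in the statement of Theorem~\ref{th:sampleComplexityUpperBound_General}. No new probabilistic estimate is needed beyond what has already been established; the content of the corollary is essentially a bookkeeping of the already-proved ingredients, with the uniform-in-$x(0)$ form of Lemma~\ref{le:PEoptimalExcitation} being the one point that warrants explicit mention.
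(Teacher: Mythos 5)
Your proposal is correct and follows essentially the same route as the paper: the paper's proof simply instantiates the full version of Theorem~\ref{th:sampleComplexityUpperBound_General} with the \ac{PE} coefficients from Lemmas~\ref{le:PEgaussianExcitation} and~\ref{le:PEoptimalExcitation}, exactly as you do. Your remark that the uniformity of Lemma~\ref{le:PEoptimalExcitation} in the initial state $x(0)$ is what makes the block-wise application legitimate is a fair (and correct) elaboration of a point the paper leaves implicit.
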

After leveraging \ac{PE}, Corollary~\ref{co:sampleComplexityRandomOracle} follows directly from Theorem~\ref{th:sampleComplexityUpperBound_General}. We provide the proof and the full version in Appendix~\ref{secAppendix:Proofs_upperBound_Gaussian_Oracle}.
Note that since \ac{PE} coefficients~\eqref{eq:cu_rand},\eqref{eq:cu_opt} coincide with the key parameters entering the sample complexity lower bounds, the sample complexity upper bounds depend on the same parameters as the lower bounds. 
Thus, the lower and upper bounds qualitatively match, suggesting that they capture the key quantities contributing to the sample complexity. 
The epoch length $\tau$ can be used as a degree of freedom to optimize the sample complexity bound. 
Comparing Corollaries~\ref{co:lowerBoundGaussian} and ~\ref{co:sampleComplexityRandomOracle}, we see that, loosely speaking, whenever the matrices $W_i$ are ill-conditioned, input design can provide significant benefits in terms of the sample complexity. On the other hand, when all eigenvalues are identical, isotropic Gaussian excitation is already optimal and cannot be improved upon (cf. Example~\ref{ex:Example1}).
Notably, the matrices $W_i$ are often ill-conditioned in settings where prior knowledge is available. Hence, in settings where the uncertainty only lies in certain directions, input design can yield large benefits, which confirms intuition. 
If, however, the uncertainty is unstructured, the benefits of experiment design are significantly smaller since this yields matrices $W_i$ which are well-conditioned.

% !TeX root = ..\main.tex
%
%% Algorithm analysis
%
\section{Sequential input design algorithm}\label{sec:InputDesignAlgo}
In the previous section, we established that the matrices $W_i(\tau)$ are the key quantities that determine the degree of sub-optimality of random excitations compared to the optimal oracle excitation.
However, since the optimal oracle excitation input depends on $\theta_*$  it cannot be computed in practice.
This phenomenon is well known in the literature of experiment design and is usually tackled by using sequential algorithms to generate approximately optimal excitations.
%
%% Algo
%
\begin{algorithm}[t]
    \caption{Input design subroutine}\label{alg:inputDesign}
    \begin{algorithmic}[1]
    \Require $\SSet$, epoch length $\tau$, prediction errors $\varepsilon_{\theta_i}((k-1)\tau)$, state $x$, scaling $\rho_k\in [0, 1]$
    \State Compute weights using exponential weighting $w_{k+1}(i) = \exp\left(-\eta \varepsilon_{\theta_i}((k-1)\tau)\right)$ 
    \State Sample $\hat i_k \sim p_k(i)$, where $p_k(i) = w_k(i) / \sum_{j = 0}^{N} w_k(j)$ %\exp\left(-\eta \varepsilon_{\theta_i}(0, k\tau-1)\right)$ 
    \State Set $\hat \theta_k =\theta_{\hat{i}_k}$ and compute optimal input sequence $U_{\hat \theta_{k}}^*(x)$ by solving \eqref{eq:runningOptU}
    \State Define excitation according to $\rho_k$: $u_k^*(t) = \sqrt{1-\rho_k} u_{\hat \theta_k}^*(t) + \sqrt{\rho_k} u_\eta(t)$, $u_\eta(t) \simiid \N(0, \frac{\gamma_u^2}{n_u} I_{n_u})$
    \Return excitation sequence $u^*_k$ 
    \end{algorithmic}
\end{algorithm}
%
%% End Algo 
%
In line with these approaches, our proposed input design subroutine, which is presented as Alg.~\ref{alg:inputDesign}, can be called by Alg.~\ref{alg:ID} in each epoch and returns an excitation input sequence. To derive the excitation, Alg.~\ref{alg:inputDesign} draws an estimate $\hat \theta_k$ using exponential weights and solves the optimization~\eqref{eq:oracleUTau} using \ac{CE}, i.e., using $\hat \theta_k$ as if it were the true system. 
Thus, after sampling $\hat\theta_k = (\hat A_k, \hat B_k)$ and defining $\Delta \hat A_i \coloneqq \hat A_k - A_i$ and $\Delta \hat B_i \coloneqq \hat B_k - B_i$ the optimization problem 
\begin{align}
        U_{\hat{\theta}_k}^*(x(k\tau))  &\in \argmax_{U^\top U \le \gamma_u^2 \tau} \min_{\theta_i \neq \hat{\theta}_k } \label{eq:runningOptU}\\ &\Expect\Big[\hspace{-5pt}\sum_{t=k\tau}^{\tau(k+1)-1}\Vert \Delta \hat A_i x(t) + \Delta \hat B_i u(t) \Vert_{\Sigma_w^{-1}}^2\Big] \notag
\end{align}
is solved to compute the excitation input.
While this optimization problem uses knowledge of $\Sigma_w$, this is not required by our algorithm, and Alg.~\ref{alg:inputDesign} still provides PE inputs even when only an upper bound or no knowledge of $\Sigma_w$ is available (cf. Remark~\ref{rem:Covariance}).
We note that the optimization problem~\eqref{eq:runningOptU} is generally non-convex and, especially for $N \gg 0 $ might be challenging to solve. 
Since the focus of this work is statistical, we defer finding scalable approximations, as done by \cite{chatzikiriakos2025convex} for a similar setting, to future research. 
Due to the uncertainty in the estimate, we add random excitations to the optimal excitation input~\eqref{eq:runningOptU} according to a scaling parameter $\rho_k$, which can be adapted over time. 
We will analyze the choice of $\rho_k$ and its consequences on the speed of identification throughout this section.
Note that any $\rho_k \in [0, 1]$ yields excitation inputs satisfying Assumption~\ref{ass:inputConstr}.
As shown above, establishing sample complexity upper bounds reduces to showing that Alg.~\ref{alg:inputDesign} produces \ac{PE} inputs.  
\begin{lemma}\label{le:PEInputDesgin}
    Consider the system~\eqref{eq:TrueSysEvo}, set $\SSet$ as defined in \eqref{eq:defSet}. Consider the excitation input sequence $u_k^*$ generated by Alg.~\ref{alg:inputDesign} at some fixed iterate $k$ and let $\Prob[\hat \theta_k \neq \theta_*] \le p_k \in [0, 1]$, where $\hat \theta_k$ is the estimate drawn using exponential weights in round $k$.  
    Then $u_k^*$ satisfies \ac{PE} with 
    \begin{subequations}\label{eq:PEALgo}
        \begin{align}
            c_{u_k}^{\mathrm{Alg}}(\tau) &= (1-p_k)(1-\rho_k) c_u^\mathrm{opt}(\tau) + \rho_k c_u^\mathrm{rand}(\tau), \label{eq:cu_algo}\\
            c_{w}(\tau)&= \min_{i\in[1,N]}\frac{1}{\tau} \tr{S_w(\tau)^\top Q_{\Sigma_w^{-1}}^{i}(\tau) S_w(\tau)},
        \end{align}
    \end{subequations}
    where $c_u^\mathrm{opt}(\tau)$ is the PE coefficient for optimal excitation as defined in \eqref{eq:cu_opt} and $c_u^{\mathrm{rand}}(\tau)$ is the PE coefficient for isotropic Gaussian excitations as defined in \eqref{eq:cu_rand}.   
\end{lemma}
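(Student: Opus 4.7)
The plan is to split the excitation as $u_k^* = \sqrt{1-\rho_k}\,v + \sqrt{\rho_k}\,u_\eta$ with $v = u_{\hat\theta_k}^*$, lower-bound separately the contributions of $v$ and $u_\eta$ to the left-hand side of \eqref{eq:defPE}, and then average over $\hat\theta_k$ using the assumed bound $\Prob[\hat\theta_k\neq\theta_*]\le p_k$. First I would apply the state-propagation identity that leads to \eqref{eq:expectedErrorFinal} to this single episode: writing $x(t) = A_*^{t-k\tau} x(k\tau) + \sum_s A_*^{t-1-s} B_* u_k^*(s) + \sum_s A_*^{t-1-s} w(s)$ decomposes $\Delta A_i x(t) + \Delta B_i u_k^*(t)$ into a part linear in the stacked vector $V$ of $v$, a part linear in the stacked vector $U_\eta$, and a part depending only on $x(k\tau)$ and $w$. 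Because $v$ is a deterministic function of $\hat\theta_k$ and $x(k\tau)$ (both measurable with respect to the history before the episode), while $U_\eta$ and $w$ are drawn independently within the episode and are zero-mean, all cross products between these three blocks vanish in expectation.

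This leaves, for each $i$,
\begin{equation*}
\Expect\Big[\textstyle\sum_{t=k\tau}^{(k+1)\tau-1}\|\Delta A_i x(t)+\Delta B_i u_k^*(t)\|_{\Sigma_w^{-1}}^2\Big]
= (1-\rho_k)\,\Expect[V^\top W_i(\tau) V] + \rho_k\,\Expect[U_\eta^\top W_i(\tau) U_\eta] + R_i,
\end{equation*}
where the residual $R_i \ge \tr{S_w(\tau)^\top Q_{\Sigma_w^{-1}}^i(\tau) S_w(\tau)}$ absorbs the noise and the (nonnegative) $x(k\tau)$-dependent contribution, matching $\tau c_w(\tau)$ after taking $\min_i$ exactly as in Lemmas~\ref{le:PEgaussianExcitation} and~\ref{le:PEoptimalExcitation}. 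The Gaussian block is a direct computation: $\Expect[U_\eta^\top W_i U_\eta] = \tfrac{\gamma_u^2}{n_u}\tr{W_i(\tau)} = \gamma_u^2\tau\,\lambda_{\mathrm{mean}}(W_i(\tau))$, and linearity of the trace gives $\min_i \lambda_{\mathrm{mean}}(W_i(\tau)) = \min_{p\in\Delta_N}\lambda_{\mathrm{mean}}(\sum_i p_i W_i(\tau)) = c_u^{\mathrm{rand}}(\tau)$, exactly reproducing the $\rho_k\,c_u^{\mathrm{rand}}(\tau)$ contribution to \eqref{eq:cu_algo}.

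For the certainty-equivalence block I would condition on $\hat\theta_k$. On $\{\hat\theta_k=\theta_*\}$ the optimization \eqref{eq:runningOptU} collapses to \eqref{eq:oracleUTau} instantiated at $x(k\tau)$, so Lemma~\ref{le:PEoptimalExcitation} gives $V^\top W_i(\tau) V \ge \gamma_u^2\tau\,c_u^{\mathrm{opt}}(\tau)$ simultaneously for every $i\neq 0$ (this is the minimax identity underlying that lemma, which produces a bound uniform in $i$, not only on the worst $i$). On the complementary event I would only use $W_i(\tau)\succeq 0$, which follows from the Gram-matrix representation $W_i = [S_u(\tau)^\top\ I]\,\mathrm{diag}_\tau([\Delta A_i\ \Delta B_i]^\top\Sigma_w^{-1}[\Delta A_i\ \Delta B_i])\,[S_u(\tau);\,I]$, to conclude $V^\top W_i V \ge 0$. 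Combining with $\Prob[\hat\theta_k\neq\theta_*]\le p_k$ yields $\Expect[V^\top W_i V]\ge (1-p_k)\gamma_u^2\tau\,c_u^{\mathrm{opt}}(\tau)$ for every $i$, which together with the exploration contribution is precisely $\gamma_u^2\tau\,c_{u_k}^{\mathrm{Alg}}(\tau)$.

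The main obstacle is bookkeeping the conditioning structure carefully: one must verify that $U_\eta$ is independent of both $\hat\theta_k$ and $V$ so the cross terms really vanish, and that conditioning on $\hat\theta_k$ produces a PE bound uniform in the (random) initial state $x(k\tau)$ of the episode, since Definition~\ref{def:PE} is pointwise in $x(0)$. Once this is settled, the three ingredients --- the quadratic-form decomposition, the trace computation for $U_\eta$, and the minimax statement of Lemma~\ref{le:PEoptimalExcitation} combined with $W_i\succeq 0$ --- assemble immediately into the claimed coefficients.
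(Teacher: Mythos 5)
Your proposal is correct and follows essentially the same route as the paper's proof: condition on whether the drawn estimate $\hat\theta_k$ equals $\theta_*$, use the quadratic-form decomposition of the expected prediction error (the paper's Lemma~\ref{lem:Rewritten}, which you re-derive via state propagation and independence of $u_\eta$ and $w$), lower-bound the wrong-estimate branch by $0$ via $W_i(\tau)\succeq 0$, compute the Gaussian exploration contribution through the trace, and invoke the oracle PE coefficient of Lemma~\ref{le:PEoptimalExcitation} together with $\Prob[\hat\theta_k\neq\theta_*]\le p_k$. The only difference is ordering (you split the deterministic and random input parts before conditioning on $\hat\theta_k$, the paper conditions first), which does not change the argument.
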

The proof of Lemma~\ref{le:PEInputDesgin} can be found in Appendix~\ref{secAppendix:proof_PE_Algo}.   
Note that while it might seem intuitive that any convex combination of two \ac{PE} inputs is also \ac{PE} this is not necessarily true. Given the setup in Example~\ref{ex:Example1} with $d=1$ it is easy to see that $u_1^\top(t) = \begin{bmatrix}
    1 & 0 
\end{bmatrix}$, $\forall t\ge0$ and $u_2^\top(t) = \begin{bmatrix}
    -1 & 0 
\end{bmatrix}$, $\forall t\ge0$ are both \ac{PE}. However, $u(t) = 0.5(u_1(t) + u_2(t)) = 0$ is clearly not \ac{PE}.
With Lemma~\ref{le:PEInputDesgin} we can directly obtain the following sample complexity upper bound. 
\begin{theorem}\label{th:sampleComlexityAlgo} 
    Consider Alg.~\ref{alg:ID} where the excitation input is derived by Alg.~\ref{alg:inputDesign}. Then, the estimate $\hat \theta$ of Alg.~\ref{alg:ID} satisfies $\Prob\left[\hat \theta \neq \theta_*\right] \le \delta$ and terminates no later than when $k$ satisfies
    \vspace{-0.3em}
    \begin{equation}
        k c_w(\tau) + \sum_{j= 1}^{k} c_{u_j}^\mathrm{Alg}(\tau) \ge c'\log\left(\frac{N}{\delta}\right),
    \end{equation}
    where $c_{u_j}^\mathrm{Alg}(\tau)$, $c_w(\tau)$ are defined in~\eqref{eq:PEALgo} and $c'$ is a constant influenced by the variance in $\tilde{\varepsilon}_{\theta_i}$. 
\end{theorem}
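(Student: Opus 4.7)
The plan is to reduce Theorem~\ref{th:sampleComlexityAlgo} to a direct combination of Lemma~\ref{le:PEInputDesgin} with the general upper bound in Theorem~\ref{th:sampleComplexityUpperBound_General}. First I would fix an episode index $k$ and condition on all data collected in episodes $1,\dots,k-1$. Under this conditioning, the sampled parameter $\hat{\theta}_k$ drawn by exponential weighting is a random variable whose marginal miss-specification probability $p_k = \Prob[\hat{\theta}_k \neq \theta_* \mid \mathcal{F}_{k-1}]$ is determined by the weights $w_k(i)$. Invoking Lemma~\ref{le:PEInputDesgin} then gives that the excitation $u_k^*$ produced by Algorithm~\ref{alg:inputDesign} is \ac{PE} at block length $\tau$ with coefficients $c_{u_k}^{\mathrm{Alg}}(\tau)$ and $c_w(\tau)$ as in \eqref{eq:PEALgo}, conditionally on $\mathcal{F}_{k-1}$.

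Having established per-episode conditional \ac{PE}, the second step is to feed the sequence $\{c_{u_j}^{\mathrm{Alg}}(\tau)\}_{j=1}^k$ together with $c_w(\tau)$ into Theorem~\ref{th:sampleComplexityUpperBound_General}. Both the high-probability correctness guarantee $\Prob[\hat{\theta} \neq \theta_*] \le \delta$ and the termination criterion of that theorem transfer verbatim with $c_{u_j}(\tau)$ replaced by $c_{u_j}^{\mathrm{Alg}}(\tau)$, since the \ac{PE} lower bound \eqref{eq:defPE} is all that the proof of Theorem~\ref{th:sampleComplexityUpperBound_General} uses about the excitation. This yields the stated condition $\tau\big(\sum_{j=1}^k c_{u_j}^{\mathrm{Alg}}(\tau) + c_w(\tau)\big) \ge c' \log(N/\delta)$.

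The main technical subtlety — which is really absorbed by the martingale/concentration argument inside Theorem~\ref{th:sampleComplexityUpperBound_General} — is that the \ac{PE} coefficients $c_{u_j}^{\mathrm{Alg}}(\tau)$ are themselves random, since $p_j$ depends on all previously observed data through the exponential weights and the input $u_k^*$ is not independent of past noise realizations. Consequently, $\{c_{u_j}^{\mathrm{Alg}}(\tau)\}_{j\ge1}$ forms a process adapted to the filtration generated by the trajectory, and the termination criterion must be read as an almost-sure inequality on this adapted process. I would emphasize that because Lemma~\ref{le:PEInputDesgin} provides a per-episode \emph{conditional} lower bound on $\Expect[\tilde{\varepsilon}_{\theta_i}(k\tau,(k+1)\tau)\mid\mathcal{F}_{k-1}]$, summing across episodes and applying a predictable-sequence Bernstein-type concentration still yields the exponential-in-$k$ decay of miss-specification.

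The hardest step is rigorously handling this coupling between input design and the data used for estimation: because $u_k^*$ is selected based on the likelihoods $\varepsilon_{\theta_i}((k-1)\tau)$, the excitation in episode $k$ is a measurable function of the past noise, so a naive concentration of $\varepsilon_{\theta_i}(k\tau) - \varepsilon_{\theta_*}(k\tau)$ does not apply. The resolution is to work with the self-normalized/martingale form used in the proof of Theorem~\ref{th:sampleComplexityUpperBound_General}, where the required \ac{PE} lower bound enters only through conditional expectations and hence accommodates a $c_{u_k}^{\mathrm{Alg}}(\tau)$ that itself depends on $\mathcal{F}_{k-1}$. Once this adaptedness is handled, the two ingredients combine in essentially one line to yield the claimed termination criterion.
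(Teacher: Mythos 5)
Your proposal matches the paper's proof, which is exactly the one-line combination of Theorem~\ref{th:sampleComplexityUpperBound_General} with the \ac{PE} coefficients from Lemma~\ref{le:PEInputDesgin}; your additional remarks on adaptedness correctly identify why this works, since the supermartingale argument in Proposition~\ref{prop:NewSuperMartingale} only uses conditional expectations given $\mathcal{F}_j$ and thus tolerates \ac{PE} coefficients that depend on past data.
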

The full version of Theorem~\ref{th:sampleComlexityAlgo} and its proof are presented in Appendix~\ref{appSection:UpperAlgo}.
In essence, Lemma~\ref{le:PEInputDesgin} specifies that whenever the uncertainty is small enough compared to the expected benefit in \ac{PE} from the optimal excitation, \ac{CE} will provably improve the \ac{PE} coefficients compared to random Gaussian excitations. Theorem~\ref{th:sampleComlexityAlgo} shows that experiment design using \ac{CE} provably improves the sample complexity of identification in these cases.
Building on these arguments, the following result shows that as $\delta \to 0$ the proposed Algorithm exhibits a sample complexity upper bound that matches the lower bound up to constants.  
\begin{theorem} \label{th:asymptOptMain}
Consider Alg.~\ref{alg:ID} where the excitation input is derived by Alg.~\ref{alg:inputDesign}, with $\eta$ small enough. 
Let further $\rho_0 = c_0 \in (0, 1)$, $\lim_{k\to \infty}\rho_k = c_\infty \in [0, c_0]$, $\rho_{k-1} \ge \rho_k \ge (1-\varepsilon) \rho_{k-1}$ where $\varepsilon$ is a small enough non-negative constant.
Then for any $\delta \in (0,1)$ there exists a $k_\delta$, such that with probability $1-\delta$ Alg.~\ref{alg:ID} terminates after at most $k_\delta$ epochs with an estimate $\hat \theta$ that satisfies $\Prob[\hat \theta \neq \theta_*] \le \delta$ 
and 
\vspace{-1em}
\begin{align}
    \lim_{\delta \to 0} \frac{c' \log\left(\frac{N}{\delta}\right)}{k_\delta} &\le  \gamma_u^2 ( 1-c_\infty) \min_{p\in \Delta_N}\lambda_\mathrm{max}\Big(\sum_{i=1}^N p_i W_i(\tau)\Big) \notag \\ &+ \min_{i\in[1,N]}\frac{1}{\tau} \tr{S_w(\tau)^\top Q_{\Sigma_w^{-1}}^{i}(\tau) S_w(\tau)} 
\end{align}
where $c'$ is a constant influenced by the variance in $\tilde \varepsilon_{\theta_i}$.
\end{theorem}
The full version of Theorem~\ref{th:asymptOptMain} and its proof are presented in Appendix~\ref{app:asmptOpt}.
This result shows that, Alg.~\ref{alg:inputDesign} recovers the sample complexity lower bound up to constants as $\delta \to 0$ (cf. Theorem~\ref{th:SampleComplexLowerGeneral}).
\begin{remark}
Note that we need to resort to a blocking technique to show the result since the data is correlated. This has two consequences.
First, compared to Theorem~\ref{th:SampleComplexLowerGeneral} the upper bound depends on $W_i(\tau)$ instead of $W_i(\bar{T})$, where $\tau \ll \bar{T}$. 
For stable systems $W_i(\tau)$ converges quickly as $\tau$ increases, i.e., we have $W_i(\tau) \approx W_i(\bar{T})$ even for small $\tau$. 
However, if the system ~\eqref{eq:TrueSysEvo} is unstable, $W_i(\tau)$ can grow unbounded. Hence, while the upper bound remains valid for unstable systems, the gap to the lower bound is larger. 
Second, the sample complexity upper bound depends on the number of epochs $k$ instead of the total number of samples $T = k\tau$. Since the epoch length $\tau$ is usually a small integer, this only introduces a small constant factor.
\end{remark}
Further, as shown by Proposition~\ref{prop:decrease}, if $\rho_k$ decreases slow enough the \ac{PE}-coefficient of Alg.~\ref{alg:inputDesign}, $c_{u_k}^\mathrm{Alg}(\tau)$, increases as $k$ increases. 
Hence, Alg.~\ref{alg:inputDesign} outperforms random excitations after a finite number of iterations and, according to Proposition~\ref{prop:convergence}, up to constants converges to the optimal \ac{PE} coefficients.
Note that while $\varepsilon$ in Theorem~\ref{th:asymptOptMain} depends on oracle knowledge, any decrease that is slower than exponential can be used in practice, so that Theorem~\ref{th:asymptOptMain} guides the practical selection of $\rho_k$.
In particular, by Theorem~\ref{th:asymptOptMain}, even selecting a constant $\rho_k \in (0, 1)$ (e.g., $\rho_k \equiv \frac12$) will only impact the optimality by a constant as $\delta \to 0$.
We numerically evaluate how different selections of $\rho_k$ influence identification in Section~\ref{sec:numerics}. 
Interestingly, in our numerical studies, selecting $\rho_k=0$, i.e., greedily using \ac{CE},  yielded the best results. 
%Example
\begin{manualexample}{3.1 (continued)}
    Recall that in Example~\ref{ex:Example1} knowledge of the true system is not necessary to select the optimal excitation input sequence.  
    This is due to the structure in $\SSet$, which yields greedy \ac{CE}-based experiment design optimal (see numerical evaluation in Appendix~\ref{appSection:numerics}). 
   Deriving properties of $\SSet$ causing this is an interesting topic for future research. 
\end{manualexample}
%

% !TeX root = ..\main.tex
\section{Numerical experiments}\label{sec:numerics}
In this section, we provide two numerical evaluations that showcase when active learning is beneficial and when its benefits are significantly smaller.\footnote{All experiments are carried out in Python using a standard notebook.  
The code for the numerical example can be accessed at: \url{https://github.com/col-tasas/2025-high-effort-low-gain}
}
First, we consider an example where $\vert \SSet \vert$ is small and exhibits structure. In a second example, we consider the case where $\SSet$ is larger and generated randomly.
While in the first case, active learning provides significant benefits in terms of identification speed, these benefits are significantly smaller in the second case. 
\\
\emph{Active learning can be high gain:}
Consider the set $\SSet = \{(A_*, B_*), (A_1, B_*), (A_2, B_*), (A_3, B_*)\}$ with $B_*^\top = \begin{bmatrix}
    0_{2\times 1} & I_2
\end{bmatrix}$ and
\begin{align*}
    A_* &= \begin{bmatrix}
        0 & 0.1 & 0 \\ 0 & 0 & 0 \\ 0 & 0 & 0.9
    \end{bmatrix} \quad  
    A_1 = \begin{bmatrix}
        0 & 0 & 0.1 \\ 0 & 0 & 0 \\ 0 & 0 & 0.9
    \end{bmatrix} 
    \\ 
    A_2 &= \begin{bmatrix}
        0 & 0 & 0.1 \\ 0 & 0 & 0 \\ 0 & 0 & 0.8
    \end{bmatrix}
    \quad 
    A_3 = \begin{bmatrix}
        0 & 0.1 & 0 \\ 0 & 0 & 0 \\ 0 & 0 & 0.8
    \end{bmatrix}.
\end{align*}
The noise covariance and input power are given by $\Sigma_w = I_{n_x}$ and $\gamma_u = 1$, respectively. 
For the numerical evaluation, we consider Alg.~\ref{alg:ID} where the excitation input is computed using Alg.~\ref{alg:inputDesign} with $\rho_k \equiv 0$ and $\eta = 0.01$, and compare it to the following data collection strategies used with Alg.~\ref{alg:ID}: 
1) Alg.~\ref{alg:inputDesign} with oracle knowledge (in each epoch \eqref{eq:runningOptU} is solved using $\theta_*$), 2) the optimization criterion proposed by \cite{wagenmaker2020active} with oracle knowledge, 3) $u(t)\simiid\N(0, \frac12 I_{n_u})$, and 4) optimal oracle excitation for the considered data length (fully offline).
We conduct 100 Monte Carlo simulations with 5 epochs and an epoch length of $15$ time steps.
In Figure~\ref{fig:numerics1} we display the mean and $\frac{\sigma}{2}$-band
of the likelihood of $\theta_*$ given the data over the epochs for the different data collection procedures. We investigate the likelihood since it is directly linked to the termination criterion of Alg.~\ref{alg:ID} and provides more insights than the termination times.
It is easy to see that the optimal oracle excitation performs best. 
To see why this is the case, note that the uncertainty in $\SSet$ is structured and only affects particular directions in the state space.
Hence, according to the theoretical insights presented in the previous sections, this setting admits large benefits of active learning. Consequently, isotropic Gaussian excitations perform significantly worse compared to all other input sequences.
While the active learning criterion by~\cite{wagenmaker2020active}, which does not account for the finite hypothesis class, outperforms isotropic Gaussian excitations, it falls short of the criterion presented in this work even when using oracle knowledge. 
This highlights the importance of leveraging knowledge of the finite hypothesis class in input design, when it is available.\\
\begin{figure*}[ht]
    \centering
    \begin{subfigure}{0.48\textwidth}
        \centering
        \resizebox{0.95 \linewidth}{!}{
        % This file was created by matlab2tikz.
%
%The latest updates can be retrieved from
%  http://www.mathworks.com/matlabcentral/fileexchange/22022-matlab2tikz-matlab2tikz
%where you can also make suggestions and rate matlab2tikz.
%
% Colors: Okabe–Ito
\definecolor{mycolor1}{rgb}{0.000,0.447,0.698}% ALG CE
\definecolor{mycolor2}{rgb}{0.902,0.624,0.000}% ALG w./ Oracle
\definecolor{mycolor3}{rgb}{0.000,0.620,0.451}% Wagenmaker
\definecolor{mycolor4}{rgb}{0.835,0.369,0.000}% Gaussian
\definecolor{mycolor5}{rgb}{0.35,0.35,0.35}% optimal Orcale
\begin{tikzpicture}

\begin{axis}[%
width=3.15in,
height=2.76in,
at={(0in,0in)},
scale only axis,
xmin=0,
xmax=5,
xtick = {0, 1, 2, 3, 4, 5},
xlabel style={font=\color{white!15!black}},
xlabel={Epoch},
ymin=0,
ymax=1,
ylabel style={font=\color{white!15!black}},
ylabel={Likelihood of $\theta_*$},
axis background/.style={fill=white},
axis x line*=bottom,
axis y line*=left,
xmajorgrids,
ymajorgrids,
legend style={at={(0.992,0.008)}, anchor=south east, legend cell align=left, align=left, draw=white!15!black}
]

% ALG w/. CE
\addplot [color=mycolor1, line width=2.0pt]
  table[row sep=crcr]{%
0	0.25\\
1	0.593662983877505\\
2	0.850479966278957\\
3	0.964957347459639\\
4	0.97765095399771\\
5	0.992594921475474\\
};
\addlegendentry{Alg.~\ref{alg:inputDesign} with $\rho_{k}\equiv0$}

\addplot[area legend, draw=none, fill=mycolor1, fill opacity=0.1, forget plot]
table[row sep=crcr] {%
x	y\\
0	0.25\\
1	0.43167919347573\\
2	0.712571989053695\\
3	0.896674116251954\\
4	0.909552353121407\\
5	0.960290460176005\\
5	1.02489938277494\\
4	1.04574955487401\\
3	1.03324057866732\\
2	0.98838794350422\\
1	0.75564677427928\\
0	0.25\\
}--cycle;

% ALG w/. Oracle
\addplot [color=mycolor2, dashed, line width=2.0pt]
  table[row sep=crcr]{%
0	0.25\\
1	0.593832623509608\\
2	0.849531854705965\\
3	0.964697382283808\\
4	0.977424927789187\\
5	0.992519064086355\\
};
\addlegendentry{Alg.~\ref{alg:inputDesign} with oracle}

\addplot[area legend, draw=none, fill=mycolor2, fill opacity=0.1, forget plot]
table[row sep=crcr] {%
x	y\\
0	0.25\\
1	0.431708271583804\\
2	0.711295329542226\\
3	0.896784798174069\\
4	0.90960490038677\\
5	0.960345566806842\\
5	1.02469256136587\\
4	1.0452449551916\\
3	1.03260996639355\\
2	0.987768379869705\\
1	0.755956975435412\\
0	0.25\\
}--cycle;

% Wagenmaker w/. oracle
\addplot [color=mycolor3, line width=2.0pt]
  table[row sep=crcr]{%
0	0.25\\
1	0.328375809318214\\
2	0.457619389814656\\
3	0.553743089843619\\
4	0.704387711707722\\
5	0.755756125767002\\
};
\addlegendentry{Wagenmaker with oracle}

\addplot[area legend, draw=none, fill=mycolor3, fill opacity=0.1, forget plot]
table[row sep=crcr] {%
x	y\\
0	0.25\\
1	0.241477757503435\\
2	0.319438328709398\\
3	0.395079366168981\\
4	0.54310697441125\\
5	0.590372428484207\\
5	0.921139823049798\\
4	0.865668449004193\\
3	0.712406813518256\\
2	0.595800450919914\\
1	0.415273861132993\\
0	0.25\\
}--cycle;

% Gaussian excitation
\addplot [color=mycolor4, dashed, line width=2.0pt]
  table[row sep=crcr]{%
0	0.25\\
1	0.338238938626747\\
2	0.419202495853384\\
3	0.492159891453074\\
4	0.581043182963695\\
5	0.62481933600152\\
};
\addlegendentry{$u(t)\simiid \N(0, \frac{\gamma^2_u}{n_u}I_{n_u})$}

\addplot[area legend, draw=none, fill=mycolor4, fill opacity=0.1, forget plot]
table[row sep=crcr] {%
x	y\\
0	0.25\\
1	0.244857534808697\\
2	0.291908527019281\\
3	0.342907502637675\\
4	0.426752781586236\\
5	0.462021335040306\\
5	0.787617336962735\\
4	0.735333584341154\\
3	0.641412280268472\\
2	0.546496464687487\\
1	0.431620342444798\\
0	0.25\\
}--cycle;

% Optimal oracle excitation (offline)
\addplot [color=mycolor5, line width=2.0pt]
  table[row sep=crcr]{%
0	0.25\\
1	0.593832623509608\\
2	0.878409721932303\\
3	0.984429497214313\\
4	0.981119439536024\\
5	0.990089380648476\\
};
\addlegendentry{Optimal oracle excitation}

\addplot[area legend, draw=none, fill=mycolor5, fill opacity=0.1, forget plot]
table[row sep=crcr] {%
x	y\\
0	0.25\\
1	0.431708271583804\\
2	0.750925276802954\\
3	0.941536871004465\\
4	0.914815090995688\\
5	0.944236553900371\\
5	1.03594220739658\\
4	1.04742378807636\\
3	1.02732212342416\\
2	1.00589416706165\\
1	0.755956975435412\\
0	0.25\\
}--cycle;

\end{axis}
\end{tikzpicture}%} 
        \caption{Structured set $\SSet$.}
        \label{fig:numerics1}
    \end{subfigure}\hfill
    \begin{subfigure}{0.48\textwidth}
        \centering
            \resizebox{0.95 \linewidth}{!}{
            % This file was created by matlab2tikz.
%
%The latest updates can be retrieved from
%  http://www.mathworks.com/matlabcentral/fileexchange/22022-matlab2tikz-matlab2tikz
%where you can also make suggestions and rate matlab2tikz.
%
%Colors: Okabe–Ito
\definecolor{mycolor1}{rgb}{0.000,0.447,0.698}% ALG CE
\definecolor{mycolor2}{rgb}{0.902,0.624,0.000}% ALG w./ Oracle
\definecolor{mycolor3}{rgb}{0.000,0.620,0.451}% Wagenmaker
\definecolor{mycolor4}{rgb}{0.835,0.369,0.000}% Gaussian
\definecolor{mycolor5}{rgb}{0.35,0.35,0.35}% rho_k= 1/(1+k)
\definecolor{mycolor6}{rgb}{0.800,0.475,0.655}% rho_k= 1/(1+k)^2
\begin{tikzpicture}

\begin{axis}[%
width=3.15in,
height=2.76in,
at={(0in,0in)},
scale only axis,
xmin=0,
xmax=5,
ymin=0,
ymax=1.01,
ytick={0.2 , 0.4, 0.6, 0.8, 1},
xtick={0, 1, 2, 3, 4, 5, 6, 7, 8},
ylabel style={font=\color{white!15!black}},
xlabel={Epoch},
ylabel={Likelihood of $\theta_*$},
axis background/.style={fill=white},
axis x line*=bottom,
axis y line*=left,
xmajorgrids,
ymajorgrids,
legend style={at={(0.992,0.008)}, anchor=south east, legend cell align=left, align=left, draw=white!15!black}
]
% ALG w/. CE
\addplot [color=mycolor1, line width=2.0pt]
  table[row sep=crcr]{%
0	0.0499999999999998\\
1	0.697277079413502\\
2	0.938002703278803\\
3	0.983345849553261\\
4	0.999537402793445\\
5	0.999999620793387\\
};
\addlegendentry{Alg.~\ref{alg:inputDesign} with $\rho_{k}\equiv0$}

\addplot[area legend, draw=none, fill=mycolor1, fill opacity=0.1, forget plot]
table[row sep=crcr] {%
x	y\\
0	0.0499999999999999\\
1	0.533664294353774\\
2	0.845281153622994\\
3	0.931173810157194\\
4	0.997514390328823\\
5	0.999998325829389\\
5	1.00000091575739\\
4	1.00156041525807\\
3	1.03551788894933\\
2	1.03072425293461\\
1	0.86088986447323\\
0	0.05\\
}--cycle;

%  ALG w/. oracle
\addplot [color=mycolor2, dashed, line width=2.0pt]
  table[row sep=crcr]{%
0	0.0499999999999998\\
1	0.754521579322641\\
2	0.953351059086705\\
3	0.98203572563215\\
4	0.991326551015068\\
5	0.999999972530883\\
};
\addlegendentry{Alg.~\ref{alg:inputDesign} with oracle}

\addplot[area legend, draw=none, fill=mycolor2, fill opacity=0.1, forget plot]
table[row sep=crcr] {%
x	y\\
0	0.0499999999999999\\
1	0.602324620977852\\
2	0.862411525124137\\
3	0.918863473099241\\
4	0.947960509738573\\
5	0.999999873674596\\
5	1.00000007138717\\
4	1.03469259229156\\
3	1.04520797816506\\
2	1.04429059304927\\
1	0.906718537667431\\
0	0.05\\
}--cycle;

% Wagenmaker w/. Oracle 
\addplot [color=mycolor3, line width=2.0pt]
  table[row sep=crcr]{%
0	0.0499999999999998\\
1	0.291397816107454\\
2	0.727220246168472\\
3	0.947628949669908\\
4	0.993162829077197\\
5	0.990952601341275\\
};
\addlegendentry{Wagenmaker with oracle}

\addplot[area legend, draw=none, fill=mycolor3, fill opacity=0.1, forget plot]
table[row sep=crcr] {%
x	y\\
0	0.0499999999999999\\
1	0.187647008312921\\
2	0.577134032005501\\
3	0.867371723200049\\
4	0.975411399649941\\
5	0.9527049527423\\
5	1.02920024994025\\
4	1.01091425850445\\
3	1.02788617613977\\
2	0.877306460331442\\
1	0.395148623901987\\
0	0.05\\
}--cycle;

% Gaussian excitation 
\addplot [color=mycolor4, dashed, line width=2.0pt]
  table[row sep=crcr]{%
0	0.0499999999999998\\
1	0.334939623397077\\
2	0.785670235132334\\
3	0.929046857961531\\
4	0.980169771212796\\
5	0.993405998590643\\
};
\addlegendentry{$u(t)\simiid \N(0, \frac{\gamma^2_u}{n_u}I_{n_u})$}

\addplot[area legend, draw=none, fill=mycolor4, fill opacity=0.1, forget plot]
table[row sep=crcr] {%
x	y\\
0	0.0499999999999999\\
1	0.207802865538935\\
2	0.642933603182516\\
3	0.832325735364122\\
4	0.924704722418521\\
5	0.963980206424515\\
5	1.02283179075677\\
4	1.03563482000707\\
3	1.02576798055894\\
2	0.928406867082151\\
1	0.462076381255218\\
0	0.05\\
}--cycle;

% \rho_k = 1/(1+k)
\addplot [color=mycolor5, line width=2.0pt]
  table[row sep=crcr]{%
0	0.0499999999999998\\
1	0.325466944903653\\
2	0.860077187914253\\
3	0.963949316031292\\
4	0.999054388634992\\
5	0.999994742890717\\
};
\addlegendentry{$\rho_{k}= \frac{\text{1}}{\text{1+k}}$}

\addplot[area legend, draw=none, fill=mycolor5, fill opacity=0.1, forget plot]
table[row sep=crcr] {%
x	y\\
0	0.0499999999999999\\
1	0.203681263400024\\
2	0.73262419996314\\
3	0.8918172631758\\
4	0.996381774572636\\
5	0.999974665319251\\
5	1.00001482046218\\
4	1.00172700269735\\
3	1.03608136888678\\
2	0.987530175865365\\
1	0.447252626407281\\
0	0.05\\
}--cycle;

% rho_k = 1/(1+k)^2
\addplot [color=mycolor6, dashed, line width=2.0pt]
  table[row sep=crcr]{%
0	0.0499999999999998\\
1	0.295849608576334\\
2	0.878082677572471\\
3	0.986109221052825\\
5	0.999760675141641\\
};
\addlegendentry{$\rho_{k}= \frac{1}{(1+k)^2}$}

\addplot[area legend, draw=none, fill=mycolor6, fill opacity=0.1, forget plot]
table[row sep=crcr] {%
x	y\\
0	0.0499999999999999\\
1	0.165473971957759\\
2	0.757065624975272\\
3	0.93734302921943\\
4	0.95715642172153\\
5	0.998676083855273\\
5	1.00084526642801\\
4	1.02847910805317\\
3	1.03487541288622\\
2	0.99909973016967\\
1	0.426225245194909\\
0	0.05\\
}--cycle;

\end{axis}

\end{tikzpicture}%
            }
            \caption{Randomly generated set $\SSet$.}
            \label{fig:numericsRandomExample}
    \end{subfigure}
    \caption{Mean and $\frac{\sigma}{2}$-band of the likelihood of $\theta_*$ given the data for different excitation strategies and settings.}
    \label{fig:NumericsOverall}
\end{figure*}
\emph{Experiment design can be low gain:}
Now we consider a setup where the set $\SSet$ is generated randomly. 
To be precise, we consider the case where $\SSet$ consists of $20$ additional systems, where the parameters are drawn from a Gaussian around the true parameters $(A_*, B_*)$. The true system matrices $(A_*, B_*)$ and all other problem parameters are the same as in the previous section, to single out the effect structure in $\SSet$ has on the results. 
We again compare Alg.~\ref{alg:inputDesign} with $\rho_k\equiv 0$
to 1) Alg.~\ref{alg:inputDesign} with oracle knowledge (in each epoch \eqref{eq:runningOptU} is solved using $\theta_*$), 2) the optimization criterion proposed by \cite{wagenmaker2020active} with oracle knowledge, and 3) $u(t)\simiid\N(0, \frac12 I_{n_u})$. 
Further, we consider different time-varying choices of the weighting parameter $\rho_k$, namely $\rho_k = \frac{1}{1+k}$ and $\rho_k = \frac{1}{(1+k)^2}$.
The resulting mean and $\frac{\sigma}{2}$-band of the likelihood of $\theta_*$ given the data over 100 Monte Carlo simulations are displayed in Figure~\ref{fig:numericsRandomExample}.
Again, isotropic Gaussian excitations perform worse than all  excitation sequences generated by Alg.~\ref{alg:inputDesign}. 
Notice, however, that compared to the previous setting, the gap is significantly smaller. This is due to the fact that the set $\mathcal{S}$ consists of randomly generated systems. Hence, the set exhibits less structure, and uniform exploration is closer to optimal. This is in accordance with our theoretical results and indicates that there exist non-trivial cases where experiment design provides only a small benefit over isotropic random excitations. 
The criterion presented in~\cite{wagenmaker2020active} performs similarly to random excitations and worse than our approach for all considered choices of $\rho_k$.
Interestingly, in all our numerical evaluations, Alg.~\ref{alg:inputDesign} with $\rho_k \equiv 0$ performs on par with the online oracle approach, even though it lacks the knowledge of the true system and relies on \ac{CE}. 
In particular, this choice outperformed all other selections of $\rho_k$.
This means that, even though Alg.~\ref{alg:inputDesign} does not necessarily plan with the correct system initially, the obtained data is still informative, and in particular significantly more informative than data that is collected using isotropic Gaussian excitations. 
Capturing this effect mathematically and understanding how it depends on the structure of the set $\SSet$ is an interesting direction for future research.    
% !TeX root = ..\main.tex
% 
% Conclusions and Outlook 
\vspace{-0.5em}
\section{Conclusion}
\vspace{-0.5em}
In this work, we analyze the problem of identifying an unknown linear dynamical system from a finite hypothesis class.
We present sample complexity lower bounds for isotropic Gaussian and arbitrary excitation, which can be used to determine the instance-specific benefit of experiment design.
We introduce a tailored notion of \ac{PE}, which gives rise to a modular framework to establish sample complexity upper bounds for any excitation input that satisfies \ac{PE}.
Based on our findings, we propose a \ac{CE}-based algorithm for experiment design and derive matching sample complexity upper bounds. 
Our analysis enables a quantitative assessment of the benefits provided by input design algorithms and identifies settings in which these benefits are limited, especially compared to the often considerable implementation complexity and computational effort of the underlying optimization problems. 
Numerical studies showcase that the proposed algorithm is highly competitive and achieves a performance close to the optimal oracle excitation.
We see extensions to systems with partially observed states, where the decay rate of the Markov parameters impacts the optimal excitation, and investigations of the case where the true system does not lie in $\SSet$ as interesting extensions for future research.

\clearpage
%%%%%%%%%%%%%%%%%%%%%%%%%%%%%%%%%%%%%%%%%%%%%%%%%%%%%%%%%%%
\section*{Acknowledgements}
Nicolas Chatzikiriakos and Andrea Iannelli acknowledge the support by Deutsche Forschungsgemeinschaft (DFG, German Research Foundation) under Germany's Excellence Strategy - EXC 2075 – 390740016 and by the Stuttgart Center for Simulation Science (SimTech).

\bibliography{references.bib}

\clearpage
\appendix
\thispagestyle{empty}

% Supplementary material: To improve readability, you must use a single-column format for the supplementary material.
\onecolumn
\tableofcontents
% !TeX root = ..\main.tex
%
%% Related Works
%
\section{Related Works}\label{appSection:relatedWorks}
\textbf{Finite Sample System Identification}
Novel results in high-dimensional statistics \citep{Abbasi2011Improved, wainwright2019high} have sparked an increased interest in the finite sample analysis of system identification. 
Hereby, the sample complexity of identification is of particular importance and has been analyzed mostly for random excitations and for an infinite hypothesis class using the closed-form solution of the \ac{OLS} estimator.
Since the data collected from dynamical systems is highly correlated, the first work by~\cite{dean2020sample} only provided sample complexity upper bounds for the case where the data was collected from multiple trajectories.
Subsequent works were able to overcome this restrictive assumption and provided sample complexity upper bounds for marginally stable \citep{simchowitz2018learning} and also unstable linear dynamical systems \citep{sarkar2019near} for trajectory data. 
Hereby, it has been shown that the \ac{OLS} is statistically inconsistent when the data is collected from certain classes of unstable systems. 
Additional works extended these works to certain classes of nonlinear dynamical systems such as bilinear systems \citep{sattar2022Bilinear} generalized linear systems \citep{foster20a,sattar2022non}.
While sample complexity upper bounds are valuable since they establish identification error guarantees, sample complexity lower bounds are important to understand the hardness of the identification problem and to judge the tightness of the upper bounds. 
Building on information theoretic tools, \cite{jedra2022finite} provide sample complexity lower bounds that hold independently of the used algorithm. \cite{tsiamis2021linear} used similar tools to understand when learning is hard with isotropic Gaussian excitations.
All previous works consider the case where the hypothesis class is infinite. 
However, in many applications, there might exist some prior knowledge restricting the hypothesis class to be finite, i.e., there exists a finite set of systems the true system belongs to. 
This setup has been previously considered by \cite{chatzikiriakos2024b,muehlebach2025}, who observed that the lack of stability of the true system does not influence learning negatively, as is the case for the infinite hypothesis class. 
While the previously mentioned works consider several system classes and setups, they all assume the data is collected using (sub-)Gaussian inputs, which might be suboptimal in many cases.

\textbf{Experiment Design:}
The design of optimal experiments has a long history in learning theory \citep{Chernoff59Experiments, Kiefer74optimumDesigns, kiefer1960equivalence} where a large body of works exists concerning the question of how data should be sampled to obtain the most accurate estimate of an unknown quantity.  
One particular field of interest for experiment design is the identification of unknown dynamical systems, where, compared to the classical setup in learning theory, the data cannot be sampled arbitrarily but rather the unknown dynamical system governs the way the data can be collected (cf.  \cite{bombois2011optimal, 1977dynamic} for surveys on the topic). 
One key difficulty introduced by the dynamics of the unknown system is that the Fisher-information matrix naturally depends on the unknown system parameters. 
To tackle this problem, several algorithms have been proposed which can be broadly categorized into robust approaches that optimize for the worst-case using minmax objectives (cf. \cite{rojas2007robust} and the references therein) and sequential approaches that rely on a running estimate \cite{gerencser2005adaptive}. 
While classical works only consider the asymptotic case, recently several works have considered the problem of experiment design for the identification of an unknown dynamical system under a finite sample perspective.  
In particular, \cite{wagenmaker2020active,wagenmaker21a} consider \ac{LTI} systems and provide finite sample guarantees by leveraging sequential algorithms.
Further, \cite{chatzikiriakos2025convex} provide an exact convex reformulation of the design criterion presented by \cite{wagenmaker2020active}, enabling the efficient application of the algorithm. 
The proposed algorithm optimizes over periodic excitation signals using the \ac{CE} principle and augments the input with exploratory noise. 
Extending previous works \cite{lee2024active,mania2022active} consider certain classes of non-linear systems and provide finite sample guarantees for them. 
While optimality of the proposed experiment design algorithms is considered in the respective works, they do not analyze how large the benefit of experiment design is compared isotropic Gaussian excitations. 
Furthermore, the input space is restricted to periodic inputs, which, although shown to be sufficient by \cite{wagenmaker2020active}, is restrictive.
% !TeX root = ..\main.tex
\section{Proofs for sample complexity lower bounds}\label{secAppendix:Proofs_lowerBound}
Before we proceed with presenting the proofs of the sample complexity lower bound results in Section~\ref{sec:lowerBound} we first provide the following input-dependent sample complexity result, which serves as the starting point for the subsequent results and is akin to \cite[Theorem 2]{jedra2019sample} which considers the case of an infinite hypothesis class instead of the finite hypothesis class considered in this work. 
\begin{theorem}[Input-dependent sample complexity lower bound]\label{th:inputDependentLowerBound}
Consider the unknown dynamical system~\eqref{eq:TrueSysEvo} and the hypothesis class $\SSet$ as defined in \eqref{eq:defSet}. Then any $\delta$-correct algorithm satisfies 
\begin{equation}
    \Expect \left[\sum_{t=0}^{\bar{T}-1}\Vert 
            \Delta A_i x(t) + \Delta B_i u(t) \Vert_{\Sigma_w^{-1}}^2 
         \right] \ge 2 \log\left(\frac{1}{2.4 \delta}\right)
\end{equation}
for any excitation input sequence $\{u(t)\}_{t=0}^{\bar{T}-1}$.
\end{theorem}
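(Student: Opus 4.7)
The plan is to establish the bound by a standard change-of-measure argument combined with the chain rule for KL divergence between trajectory distributions. Fix any $\theta_i \in \SSet \setminus \{\theta_*\}$ and consider the induced laws $P_{\theta_*}$ and $P_{\theta_i}$ on trajectories $\mathcal D_{\bar T}$ of length $\bar T$ (both driven by the same excitation protocol, which may be adaptive). Let $E = \{\hat\theta_{\bar T} = \theta_*\}$. The $\delta$-correctness of the algorithm, applied to the two instances $(\theta_*, \SSet)$ and $(\theta_i, \SSet)$, gives $P_{\theta_*}(E) \ge 1-\delta$ and $P_{\theta_i}(E) \le \delta$. I would then invoke the data-processing inequality together with the elementary bound $\kl{1-\delta}{\delta} \ge \log(1/(2.4\delta))$ (valid for $\delta \in (0,1/2]$, and trivial otherwise) to conclude
\begin{equation*}
\KL{P_{\theta_*}}{P_{\theta_i}} \;\ge\; \kl{P_{\theta_*}(E)}{P_{\theta_i}(E)} \;\ge\; \log\!\left(\tfrac{1}{2.4\delta}\right).
\end{equation*}

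The next step is to compute $\KL{P_{\theta_*}}{P_{\theta_i}}$ explicitly. I would use the chain rule of KL divergence along the natural filtration $\mathcal F_t = \sigma(x(0),\dots,x(t),u(0),\dots,u(t))$. Conditional on $\mathcal F_t$, the only difference between the two laws is the conditional distribution of $x(t+1)$, which under $\theta_*$ is $\mathcal N(A_* x(t)+B_* u(t), \Sigma_w)$ and under $\theta_i$ is $\mathcal N(A_i x(t)+B_i u(t), \Sigma_w)$. Crucially, the input distributions conditional on $\mathcal F_{t-1}$ are identical under both measures because the algorithm is a fixed measurable rule, so input contributions in the chain rule cancel. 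Since the two conditional distributions of $x(t+1)$ are Gaussians with identical covariance $\Sigma_w$, the per-step KL is $\tfrac{1}{2}\|\Delta A_i x(t)+\Delta B_i u(t)\|_{\Sigma_w^{-1}}^2$, yielding
\begin{equation*}
\KL{P_{\theta_*}}{P_{\theta_i}} \;=\; \tfrac{1}{2}\,\Expect\!\left[\sum_{t=0}^{\bar T-1}\bigl\|\Delta A_i x(t)+\Delta B_i u(t)\bigr\|_{\Sigma_w^{-1}}^2\right],
\end{equation*}
with the expectation taken under $P_{\theta_*}$. Combining with the previous display and multiplying by $2$ gives the claimed inequality for that particular $\theta_i$; since the argument works for every alternative, the result follows.

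The main conceptual obstacle is justifying the per-step KL computation rigorously when the excitation is adaptive (and in particular may depend on past states in a way that could differ in distribution between the two measures). The way to handle this cleanly is to note that the conditional law of $u(t)$ given $\mathcal F_{t-1}$ is a fixed kernel determined by the algorithm, so it does not enter the chain rule, and only the dynamics kernels contribute. A minor secondary obstacle is confirming the constant $2.4$ in the bound $\kl{1-\delta}{\delta} \ge \log(1/(2.4\delta))$; this follows from a standard direct manipulation of $\kl{1-\delta}{\delta} = (1-2\delta)\log\!\bigl(\tfrac{1-\delta}{\delta}\bigr)$ and can be quoted from the best-arm identification literature (e.g., the Kaufmann–Cappé–Garivier lower bound machinery) without re-derivation. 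Once these two ingredients are in place, the proof is essentially a one-line combination.
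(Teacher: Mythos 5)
Your proposal is correct and follows essentially the same route as the paper: a change-of-measure argument with the event $\{\hat\theta_{\bar T}=\theta_*\}$, the data-processing inequality combined with $\kl{1-\delta}{\delta}\ge\log\left(\frac{1}{2.4\delta}\right)$, and the identification of the trajectory KL divergence as $\tfrac12\Expect\left[\sum_{t=0}^{\bar T-1}\Vert\Delta A_i x(t)+\Delta B_i u(t)\Vert_{\Sigma_w^{-1}}^2\right]$. The only difference is presentational: the paper cites \cite{Garivier2019} and \cite{jedra2019sample} for the data-processing step and the per-step Gaussian KL computation, whereas you spell out the chain-rule argument (including the cancellation of the adaptive input kernels) explicitly, which is a valid and complete way to fill in those cited details.
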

\begin{proof}
    Define the data as $\mathcal{D}_{\bar{T}} \coloneqq \{x(0), u(0), \dots, u(\bar{T}-1), x(\bar{T})\}$ and the probability of the observing $\mathcal{D}_{\bar{T}}$ under system $\theta_i$ as $\Prob_{\theta_i}(\mathcal{D}_{\bar{T}})$. Then, we define the log-likelihood ratio of the first $\bar{T}$ observations under $\theta_*$ and some $\theta_i \in \SSet\setminus\{\theta_*\}$ as 
    $L_{\bar{T}} = \log\left(\frac{\Prob_{\theta_*}(\mathcal{D}_{\bar{T}})}{\Prob_{\theta_i}(\mathcal{D}_{\bar{T}})}\right)$.
  Following the change of measurement argument presented by \cite{jedra2019sample}, we use the generalized data processing inequality \cite[Lemma 1]{Garivier2019} to obtain the lower bound 
    \begin{align*}
        \Expect\left[L_{\bar{T}}\right] &= \KL{\Prob_{\theta_*}(\mathcal{D}_{\bar{T}})}{\Prob_{\theta_i}(\mathcal{D}_{\bar{T}})} \\&\ge \sup_{\E \in \mathcal{F}_{\bar{T}}} \kl{\Prob_{\theta_*}(\E)}{\Prob_{\theta_i}(\E)}, 
    \end{align*}
    where $\kl{x}{y}$ is the KL-divergence of two Bernoulli distributions of means $x$ and $y$, respectively.
    Since we analyze $\delta$-correct algorithms we define the event $\E \coloneqq \{\hat \theta_{\bar{T}} = \theta_*\}$ which yields $\Prob_{\theta_*}(\E) \ge 1-\delta$ and $\Prob_{\theta_i}(\E) \le \delta$
    and hence 
    \begin{equation}
        \kl{\Prob_{\theta_*}(\E)}{\Prob_{\theta_i}(\E)} \ge (2\delta -1)\log\left(\frac{1-\delta }{\delta}\right) \ge \log\left(\frac{1}{2.4\delta}\right).
    \end{equation}
    Further, we follow \cite[Section IV.A]{jedra2019sample} to obtain 
    \begin{align*}
        \Expect\left[L_{\bar{T}}\right] &= \frac12 \Expect \left[\sum_{t=0}^{\bar{T}-1} \begin{bmatrix}
        x(t)^\top & u(t)^\top 
        \end{bmatrix} \begin{bmatrix}
        \Delta A_i^\top \\ \Delta B_i ^\top 
        \end{bmatrix} \Sigma_w^{-1} \begin{bmatrix}
            \Delta A_i & \Delta B_i 
        \end{bmatrix} \begin{bmatrix}
            x(t) \\ u(t) 
        \end{bmatrix} \right] \\ &= 
        \frac12 \Expect \left[\sum_{t=0}^{\bar{T}-1}\Vert 
            \Delta A_i x(t) + \Delta B_i u(t) \Vert_{\Sigma_w^{-1}}^2 
         \right],
    \end{align*}
    which concludes the proof. 
\end{proof}

\subsection{Proof of Theorem~\ref{th:SampleComplexLowerGeneral}}\label{secAppendix:SampleComplexLowerGeneral}
The first statement of the theorem follows directly from the derivations in Section~\ref{sec:ProblemSetup}.
To derive the optimal excitation input we leverage that by Theorem~\ref{th:inputDependentLowerBound} for any $\delta$-correct algorithm it holds that
\begin{equation} \label{eq:lowerBoundGeneral1}
    \Expect \left[\sum_{t=0}^{\bar{T}-1}\Vert 
    \Delta A_i x(t) + \Delta B_i u(t) \Vert_{\Sigma_w^{-1}}^2 \right] \ge 2\log\left(\frac{1}{2.4 \delta}\right) \quad \forall i \in [1, N],
\end{equation}
where $x(t)$ is generated by \eqref{eq:TrueSysEvo}.
Since \eqref{eq:lowerBoundGeneral1} needs to hold for all $i \in [1,N]$ is follows immediately that it suffices to consider 
\begin{equation} \label{eq:lowerBoundMin}
        \min_{i\in [1,N]} \Expect\left[\sum_{s=0}^{\bar{T}-1} \Vert 
        \Delta A_i x(t) + \Delta B_i u(t) \Vert_{\Sigma_w^{-1}}^2 \right] \ge 2\log\left(\frac{1}{2.4 \delta}\right) .
\end{equation}
Since we seek a result that holds for all possible input sequences that satisfy $\frac{1}{\bar{T}} \sum_{t=0}^{\bar{T}-1} u(t)^\top u(t) \le \gamma_u^2$ we maximize the l.h.s. of \eqref{eq:lowerBoundGeneral} over all admissible input sequences. 
This yields 
\begin{equation}\label{eq:proofIneqFull}
    \max_{U^\top U \le \gamma_u^2 T} \min_{i\in [1,N]}  \Expect\left[\sum_{s=0}^{t-1} \Vert \Delta A_i x(t) + \Delta B_i u(t) \Vert_{\Sigma_w^{-1}}^2 \right] \ge 2\log\left(\frac{1}{2.4 \delta}\right) .
\end{equation}
Now, using Lemma~\ref{lem:Rewritten} with $\rho=0$ and $M = \Sigma_w^{-1}$ we can equivalently rewrite the l.h.s. of \eqref{eq:proofIneqFull} as 
\begin{equation}\label{eq:rewritten}
        \max_{U^\top U \le \gamma_u^2 \bar{T}} \min_{i\in [1,N]} U^\top W_i(\bar{T})U +  
        \tr{S_w(\bar{T})^\top Q_{\Sigma_w^{-1}}^i(\bar{T}) S_w(\bar{T})}
\end{equation}
and $U^*$ is the corresponding optimizer.
For the last result, consider that 
\begin{align}
    \max_{U^\top U \le \gamma_u^2 \bar{T}} &\min_{i\in [1,N]} U^\top W_i(\bar{T})U +  
    \tr{S_w(\bar{T})^\top Q_{\Sigma_w^{-1}}^i(\bar{T}) S_w(\bar{T})} \\ 
    &\le \max_{U^\top U \le \gamma_u^2 \bar{T}} \min_{i\in [1,N]} U^\top W_i(\bar{T})U + \max_{j \in [1,N]} \tr{S_w(\bar{T})^\top Q_{\Sigma_w^{-1}}^j(\bar{T}) S_w(\bar{T})} \\
    &= \max_{U^\top U \le \gamma_u^2T} \min_{p\in \Delta_N} U^\top \left(\sum_{i=1}^{N}p_i W_i(\bar{T})\right) U + \max_{j \in [1,N]} \tr{S_w(\bar{T})^\top Q_{\Sigma_w^{-1}}^j(\bar{T}) S_w(\bar{T})} \\ 
    &= \min_{p \in \Delta_N}\max_{U^\top U \le \gamma_u^2T} U^\top \left(\sum_{i=1}^{N}p_i W_i(\bar{T})\right) U + \max_{j \in [1,N]} \tr{S_w(\bar{T})^\top Q_{\Sigma_w^{-1}}^j(\bar{T}) S_w(\bar{T})}.
\end{align}
Selecting $U^*$ in the direction of $v_\mathrm{max}\left(\sum_{i=1}^{N}p_i W_i(\bar{T})\right)$ yields the result.
\subsection{Proof of Corollary~\ref{co:lowerBoundGaussian}}\label{secAppendix:Proof_LowerBound_random}
By Theorem~\ref{th:inputDependentLowerBound} for any $\delta$-correct algorithm it holds that
    \begin{equation} \label{eq:lowerBoundGeneral}
        \Expect \left[\sum_{t=0}^{\bar{T}-1}\Vert 
        \Delta A_i x(t) + \Delta B_i u(t) \Vert_{\Sigma_w^{-1}}^2 \right] \ge 2\log\left(\frac{1}{2.4 \delta}\right) \quad \forall i \in [1, N],
    \end{equation}
    where $x(t)$ is generated by \eqref{eq:TrueSysEvo}.
    Applying Lemma~\ref{lem:Rewritten} with $\rho = 1$, $M=\Sigma_w^{-1}$ and $\sigma_u^2= \frac{\gamma_u^2}{n_u}$ we obtain that for any $\delta$-correct algorithm 
    \begin{equation}\label{eq:proofCoLowerGaussian1}
        \begin{aligned}
            \frac{\gamma_u^2}{n_u} \tr{W_i(\bar{T})} +  
            % \sum_{j=1}^{n_x \bar{T}}\lambda_j\left( S_w(\bar{T})^\top Q_{\Sigma_w^{-1}}^i(\bar{T}) S_w(\bar{T})\right)
            \tr{S_w(\bar{T})^\top Q_{\Sigma_w^{-1}}^i(\bar{T}) S_w(\bar{T})}
            \ge 2\log\left(\frac{1}{2.4 \delta}\right) \quad \forall i \in [1,N].
        \end{aligned}
    \end{equation}   
    After noting that $\frac{1}{n_u \bar{T}} \tr{W_i(\bar{T})} = \lambda_\mathrm{mean}(W_i(\bar{T}))$ the result follows immediately after realizing that it suffices to consider the minimum of the l.h.s. of~\eqref{eq:proofCoLowerGaussian1}. 
% !TeX root = ..\main.tex
%
\section{Existence of the optimal solution to \eqref{eq:defOptU}}\label{app:solExists}
For clarity of exposition for each $i\in [1,N]$ we define 
\begin{align}
    W_i &\coloneqq R^i_{\Sigma_w^{-1}}(\bar{T}) + S_u(\bar{T})^\top Q^i_{\Sigma_w^{-1}}(\bar{T})  S_u(\bar{T})  + 2 N^i_{\Sigma_w^{-1}}(\bar{T}) S_u(\bar{T}),  \\
    c_i &\coloneqq \tr{S_w(\bar{T})^\top Q^i_{\Sigma_w^{-1}}(\bar{T}) S_w(\bar{T})},
\end{align} 
where $W_i \succeq 0$ and $c_i \ge 0$, for all $i \in [1,N]$.
To show that the optimal solution to \eqref{eq:defOptU} exits we consider the equivalent optimization problem 
\begin{subequations}
    \label{eq:ExistenceFull}
    \begin{align}
        \min_{U, \xi} \quad &-\xi \label{eq:ExistenceCost}\\ % = - \log\left(\frac{1}{2.4 \delta}\right) \label{eq:cost}\\
        \text{s.t. } \quad &U^\top U - \gamma_u^2 \bar{T} \le 0 \label{eq:ineqExistence1}\\
        & \xi - U^\top W_i U  - c_i  \le 0 \qquad \forall i \in [1, N]\label{eq:ineqExistence2},\\ 
        & -\xi \le 0 \label{eq:ineqSlack}
    \end{align}
\end{subequations}
where $\xi$ is a slack variable. 
Now we can leverage the following theorem due to Weierstrass.
\begin{theorem}
    Consider the constrained optimization problem $\min_{x \in \mathcal{X}} f(x)$. If the objective function $f$ is continuous and the feasible region $\mathcal{X}$ is closed and bounded, then there exists a global optimum. 
\end{theorem}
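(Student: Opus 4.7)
The plan is to prove this classical Weierstrass extreme value theorem by a standard compactness/minimizing-sequence argument; since the theorem is invoked in the context of the finite-dimensional optimization~\eqref{eq:ExistenceFull}, I will work in $\mathbb{R}^n$ (which is the only setting needed here). I would first define $m \coloneqq \inf_{x \in \mathcal{X}} f(x) \in [-\infty, \infty)$, and by definition of the infimum select a minimizing sequence $\{x_k\}_{k \in \mathbb{N}} \subset \mathcal{X}$ satisfying $f(x_k) \to m$ as $k \to \infty$.

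Next I would exploit boundedness of $\mathcal{X}$: the sequence $\{x_k\}$ lies in a bounded subset of $\mathbb{R}^n$, so by the Bolzano--Weierstrass theorem it admits a convergent subsequence $x_{k_j} \to x^*$ for some $x^* \in \mathbb{R}^n$. Because $\mathcal{X}$ is closed and each $x_{k_j} \in \mathcal{X}$, the limit satisfies $x^* \in \mathcal{X}$. Then I would invoke continuity of $f$ to pass to the limit, yielding
\begin{equation}
f(x^*) \;=\; \lim_{j \to \infty} f(x_{k_j}) \;=\; m,
\end{equation}
where the last equality follows since $\{f(x_{k_j})\}$ is a subsequence of $\{f(x_k)\}$ and hence shares its limit. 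This simultaneously establishes that $m$ is finite (as $f(x^*) \in \mathbb{R}$) and that it is attained at the feasible point $x^* \in \mathcal{X}$, which is the desired global minimum.

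There is no genuine obstacle here, as each of the three hypotheses is used exactly once and in an essential way: boundedness activates Bolzano--Weierstrass to produce a candidate limit, closedness keeps that limit inside $\mathcal{X}$, and continuity transports the infimum through $f$. To underscore necessity, I would briefly remark that dropping any single hypothesis defeats the conclusion (e.g., $f(x)=x$ on $\mathcal{X}=(0,1)$ fails closedness; $f(x)=e^{-x}$ on $\mathcal{X}=[0,\infty)$ fails boundedness; a discontinuous indicator on a compact interval can fail to attain its infimum). In the application to~\eqref{eq:ExistenceFull}, continuity of the cost $-\xi$ and of the polynomial constraint functions $U^\top U - \gamma_u^2 \bar{T}$ and $\xi - U^\top W_i U - c_i$ together with their sublevel-set characterization yield the closedness and boundedness assumed in the theorem, so the conclusion applies.
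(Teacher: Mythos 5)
Your proof is correct, and it is the standard minimizing-sequence argument via Bolzano--Weierstrass. Note, however, that the paper does not actually prove this statement: it is quoted as a classical result ("theorem due to Weierstrass") and immediately applied to the optimization problem~\eqref{eq:ExistenceFull}, so there is no in-paper proof to compare against. Your write-up supplies exactly the missing textbook argument, and you were right to restrict attention to $\R^n$ --- the hypothesis ``closed and bounded'' only yields compactness (and hence the conclusion) in finite dimensions, which is the setting in which the paper invokes the theorem. The one implicit assumption worth flagging is that $\mathcal{X}$ must be nonempty for the infimum to be approached by a minimizing sequence; in the application this holds since $(U,\xi)=(0,0)$ is feasible for~\eqref{eq:ExistenceFull}.
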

Clearly $f(\xi, U) = - \xi$ is continuous. Further, the set of feasible input sequences $U$ is closed and bounded by \eqref{eq:ineqExistence1}. 
The feasible region of $\xi$ is defined by the constraints \eqref{eq:ineqExistence2}. Clearly, for a given $\bar{T}$ under \eqref{eq:ineqExistence1} $\xi$ is upper bounded by a finite, non-negative value. 
Combining this with~\eqref{eq:ineqSlack} the feasible region of $\xi$ is bounded and closed, from which we can conclude that the optimal solution to \eqref{eq:ExistenceFull} and hence also of \eqref{eq:defOptU} exists.
% !TeX root = ..\main.tex
\section{Proving \ac{PE} for different excitations}\label{secAppendix:Proofs_PE}
In this section we present all proofs related to \ac{PE} in the same order the results are presented in the main part of the paper. 
\subsection{Proof of Lemma~\ref{le:PEgaussianExcitation}}\label{secAppendix:proof_PE_Gaussian}
\begin{proof}
    Applying Lemma~\ref{lem:Rewritten} with $\rho = 1$ and $M=\Sigma_x^{-1}$ for any $i \in [1, N]$ yields
\begin{equation*}
    \begin{aligned}
        \frac{1}{\tau} \sum_{t=0}^{\tau}\Expect \left[\left\Vert\Delta A_i x(t) + \Delta B_i u(t)\right\Vert _{{\Sigma_w^{-1}}}^2\right]  &\ge \frac{\gamma_u^2}{n_u \tau} 
        \tr{W_i(\tau)}
        + \frac{1}{\tau} \tr{S_w(\tau)^\top Q_{\Sigma_w^{-1}}^i(\tau) S_w(\tau)}.
    \end{aligned}
\end{equation*}
Since we require a lower bound that hold uniformly over $i\in [1,N]$ we consider 
\begin{align}
    \min_{i\in[1,N]} & \frac{\gamma_u^2}{n_u \tau} 
    \tr{W_i(\tau)} + \frac{1}{\tau} \tr{S_w(\tau)^\top Q_{\Sigma_w^{-1}}^i(\tau) S_w(\tau)} \\
    &\ge \min_{i\in[1,N]} \frac{\gamma_u^2}{n_u \tau} 
    \tr{W_i(\tau)} + \min_{j\in[1,N]} \frac{1}{\tau} \tr{S_w(\tau)^\top Q_{\Sigma_w^{-1}}^j(\tau) S_w(\tau)} \\
    &= \min_{p\in\Delta_N} \frac{\gamma_u^2}{n_u \tau} 
    \tr{\sum_{i=1}^{N}p_i W_i(\tau)} + \min_{j\in[1,N]} \frac{1}{\tau} \tr{S_w(\tau)^\top Q_{\Sigma_w^{-1}}^j(\tau) S_w(\tau)} \\
    &=  \min_{p\in\Delta_N} \lambda_\mathrm{mean}\left(\sum_{i=1}^{N}p_i W_i(\tau)\right) \gamma_u^2+ \min_{j\in[1,N]} \frac{1}{\tau} \tr{S_w(\tau)^\top Q_{\Sigma_w^{-1}}^j(\tau) S_w(\tau)}. 
\end{align}
Comparing terms with \eqref{eq:defPE} yields the result.
\end{proof}
\subsection{Proof of Lemma~\ref{le:PEoptimalExcitation}}\label{secAppendix:proof_PE_Opt}
\begin{proof}
    Applying Lemma~\ref{lem:Rewritten} with $\rho = 0$ and  $M=\Sigma_w^{-1}$ for any $i\in[1, N]$ yields 
    \begin{equation}
        \begin{aligned} 
            \sum_{t=0}^{\tau-1}\Expect \left[\left\Vert
                \Delta A_i x(t) + \Delta B_i u(t)\right\Vert _{M}^2\right]  &= 
                U^\top W_i(\tau) U + 2 U^\top m_i\left(x(0)\right)  \\
                & \quad  + c_i\left(x(0)\right) + 
                \tr{S_w(\tau)^\top Q_M^i(\tau) S_w(\tau) }
        \end{aligned}
    \end{equation}
    for any $U\in \R^{n_u\tau}$.
    Thus, defining $\bar{c}_i = \tr{S_w(\tau)^\top Q_M^i(\tau) S_w(\tau)}
    $ by our choice of the excitation input we obtain 
    \begin{align}
        \max_{U^\top U \le \gamma_u^2\tau} \min_{i \in [1, N]} U^\top W_i(\tau) U &+ 2 U^\top m_i(x_0)  + c_i\left(x(0)\right) + \bar{c}_i \\ 
        &\ge \max_{U^\top U \le \gamma_u^2\tau} \min_{i \in [1, N]} U^\top W_i(\tau) U + \bar{c}_i\\
        &  \ge \max_{U^\top U \le \gamma_u^2\tau} \min_{i \in [1, N]} U^\top W_i(\tau) U + \min_{j\in [1,N]} \bar{c}_j, 
    \end{align}
    where the first inequality holds since $ c_i\left(x(0)\right)$ is non-negative by definition and $U^\top m_i(x_0)$ is non-negative when $U$ is selected to maximize the expression. 
    Similar to the proof of Theorem~\ref{th:SampleComplexLowerGeneral}, solving the optimization yields the result.   
\end{proof}

%
%% Proof Lemma PE Algo
%
\subsection{Proof of Lemma~\ref{le:PEInputDesgin}}\label{secAppendix:proof_PE_Algo}
This proof is carried out for the case $k =0$ without loss of generality. 
To obtain the result we analyze the expected excitation of the optimal control input based on true system and the expected excitation of a suboptimal control input based on a faulty estimate separately.  
We start with 
\begin{equation}\label{eq:optInputDecomposition}
    \begin{aligned}
        \sum_{t=0}^{\tau-1}\Expect[\Vert \Delta A_i x(t) + \Delta B_i u_k^*(t)\Vert_{\Sigma_w^{-1}}^2] &= \Prob[\hat \theta_0 = \theta_*]\sum_{t=0}^{\tau-1} \Expect[\Vert \Delta A_i x(t) + \Delta B_i u_{\theta_*}^*(t)\Vert_{\Sigma_w^{-1}}^2]  \\ &\quad + \Prob[\hat \theta_0 \neq \theta_* ] \sum_{t=0}^{\tau-1} \Expect[\Vert \Delta A_i x(t) + \Delta B_i u_{\bar \theta}^*(t)\Vert_{\Sigma_w^{-1}}^2].
    \end{aligned}
\end{equation}
The first terms in \eqref{eq:optInputDecomposition} represents the excitation due to the optimal input sequence and hence after applying Lemma~\ref{lem:Rewritten} with $\nu = \rho_0$ and $M = \Sigma_w^{-1}$ we obtain 
\begin{equation}\label{eq:PE_correctID}
    \begin{aligned}
        \sum_{t=0}^{\tau-1}\Expect \left[\left\Vert\
            \Delta A_i x(t) + \Delta B_i
            u_{\theta_*}^*(t) \right\Vert _{\Sigma_w^{-1}}^2\right]  &\ge \tr{S_w(\tau)^\top Q_{\Sigma_w^{-1}}^i S_w(\tau)} 
            \\
            & \quad+ (1-\rho_0) (U_{\theta_*}^*)^\top W_i(\tau) U_{\theta_*}^*+ \frac{ \rho_0 \gamma_u^2}{n_u}
            \tr{W_i(\tau)}.
    \end{aligned}
\end{equation}
To tackle the second term in \eqref{eq:optInputDecomposition} we follow the same steps to obtain
\begin{equation} \label{eq:PE_falseID}
    \begin{aligned}
        \sum_{t=0}^{\tau-1}\Expect \left[\left\Vert
            \Delta A_i x(t) + \Delta B_i
             u_{\bar{\theta}}^*(t) 
            \right\Vert _{\Sigma_w^{-1}}^2\right]  &\ge  
            \tr{S_w(\tau)^\top Q_{\Sigma_w^{-1}}^i S_w(\tau)}
            \\ &\quad+ (1-\rho_0)(U_{\bar{\theta}}^*)^\top W_i(\tau) U_{\bar{\theta}}^* 
            + \frac{\rho_0 \gamma_u^2}{n_u} 
            \tr{W_i(\tau)}.
    \end{aligned}
\end{equation} 
Plugging \eqref{eq:PE_correctID} and \eqref{eq:PE_falseID} into \eqref{eq:optInputDecomposition} and collecting terms we obtain 
\begin{equation}\label{eq:PE_combined}
    \begin{aligned}
        \sum_{t=1}^{\tau-1}\Expect[\Vert &\Delta A_i x(t) + \Delta B_i u_k^*(t)\Vert_{\Sigma_w^{-1}}^2] \\ &\ge 
        \tr{S_w(\tau)^\top Q_{\Sigma_w^{-1}}^i S_w(\tau)}
        + \frac{\rho_0 \gamma_u^2}{ n_u} 
        \tr{W_i(\tau)}
        \\
        &\quad + \Prob[\hat \theta_0 = \theta_* ] (1-\rho_0) (U_{\theta_*}^*)^\top W_i(\tau) U_{\theta_*}^* 
        +\Prob[\hat \theta_0 \neq \theta_* ] (1-\rho_0) (U_{\bar{\theta}}^*)^\top W_i(\tau)U_{\bar{\theta}}^*.
    \end{aligned}
\end{equation}
The last term in \eqref{eq:PE_combined} is non-negative.  Hence, we obtain that 
\begin{equation}\label{eq:PE_combinedII}
    \begin{aligned}
        \sum_{t=1}^{\tau-1}&\Expect[\Vert \Delta A_i x(t) + \Delta B_i u_k^*(t)\Vert_{\Sigma_w^{-1}}^2] \\ &\ge  
        \tr{S_w(\tau)^\top Q_{\Sigma_w^{-1}}^i S_w(\tau)}
        + \frac{\rho_0 \gamma_u^2}{ n_u} 
        \tr{W_i(\tau)}
        + (1-p_0)(1-\rho_0) (U_{\theta_*}^*)^\top W_i(\tau) U_{\theta_*}^*,
    \end{aligned}
\end{equation}
where we used that $\Prob[\hat \theta_0 = \theta_* ]\ge 1- p_0$. 
Dividing by $\tau >0$ yields 
\begin{equation}\label{eq:PE_combinedFinal}
    \begin{aligned}
        &\frac{1}{\tau}\sum_{t=1}^{\tau-1}\Expect[\Vert \Delta A_i x(t) + \Delta B_i u_k^*(t)\Vert_{\Sigma_w^{-1}}^2] \\ 
        &\ge \frac{1}{\tau}
        \tr{S_w(\tau)^\top Q_{\Sigma_w^{-1}}^i S_w(\tau)}
        +  \frac{\rho_0 \gamma_u^2}{\tau n_u} 
        \tr{W_i(\tau)} + (1-p_0)(1-\rho_0) \frac{1}{\tau}(U_{\theta_*}^*)^\top W_i(\tau) U_{\theta_*}^*
        \\
        &\ge \min_{i\in [1,N]} \frac{1}{\tau}
        \tr{S_w(\tau)^\top Q_{\Sigma_w^{-1}}^i S_w(\tau)}
        +  \frac{\rho_0 \gamma_u^2}{\tau n_u} 
        \tr{W_i(\tau)} + (1-p_0)(1-\rho_0) \frac{1}{\tau}(U_{\theta_*}^*)^\top W_i(\tau) U_{\theta_*}^*.
    \end{aligned}
\end{equation}
Following similar steps as in the previous proofs yields the result.
% !TeX root = ..\main.tex
\section{Proofs for sample complexity upper bounds}\label{secAppendix:Proofs_upperBound}
This section contains the proofs related to the sample complexity upper bounds and related intermediate results.
\subsection{Proof of Theorem~\ref{th:sampleComplexityUpperBound_General}}\label{app:ProofsampleComplexityUpperBound_Gaussian}
We provide the full version of Theorem~\ref{th:sampleComplexityUpperBound_General} and its proof.
\begin{theorem}\label{thMeta:sampleComplexityUpperBound_General}
    Consider the unknown system~\eqref{eq:TrueSysEvo}, set $\SSet$ as defined in~\eqref{eq:defSet}. Then Algorithm~\ref{alg:ID} yields an estimate $\hat{\theta}_T$ satisfying $\Prob\left[\hat \theta_T \neq \theta_*\right] \le \delta$ and with probability  at least $1-\delta$ terminates at the latest for the first $k$ satisfying 
    \begin{equation} \label{eq:stoppingTime}
        \tau\sum_{j=1}^{k}c_{u_j}(\tau) \gamma_u^2 + c_w(\tau) \ge 8\left(2a+\frac{1}{2\eta}\right) \log\left(\frac{N}{\delta}\right),
    \end{equation}
    where 
    \begin{equation}
    \eta \tau \le \frac{1}{512 \lambda_\mathrm{max}\left(\left(\Sigma_{\Delta_i}^{\frac{1}{2}}(\tau)\right)^\top \Sigma_w^{-1} \Sigma_{\Delta_i}^{\frac{1}{2}}(\tau) \right)} 
\end{equation}
    and $\Sigma_{\Delta_i}^\frac{1}{2}(\tau)$ is defined in \eqref{eq:defVarianceError}.
\end{theorem}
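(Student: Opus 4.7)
\textbf{Proof plan for Theorem~\ref{thMeta:sampleComplexityUpperBound_General}.}
The starting point is the algebraic identity
\begin{equation*}
  \varepsilon_{\theta_i}(k\tau) - \varepsilon_{\theta_*}(k\tau) \;=\; \tilde\varepsilon_{\theta_i}(0,k\tau) \;+\; 2\,M_k^{(i)},
  \qquad
  M_k^{(i)} \coloneqq \sum_{s=0}^{k\tau-1}\bigl\langle \Delta A_i x(s) + \Delta B_i u(s),\,w(s)\bigr\rangle_{\Sigma_w^{-1}},
\end{equation*}
obtained by writing $x(s+1)-A_i x(s) -B_i u(s)= \Delta A_i x(s)+\Delta B_i u(s)+w(s)$, expanding the weighted square, and noting that the $\|w(s)\|^2_{\Sigma_w^{-1}}$ piece cancels with $\varepsilon_{\theta_*}(k\tau)$. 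Since $w(s)$ is independent of the past data and of $u(s)$, the sequence $\{M_k^{(i)}\}_{k\geq 0}$ is a martingale with respect to the natural filtration $\{\mathcal{F}_{k\tau}\}$, and $\tilde\varepsilon_{\theta_i}(0,k\tau)$ is its predictable quadratic variation. This identity reduces the entire analysis of Algorithm~\ref{alg:ID} to controlling $M_k^{(i)}$ relative to $\tilde\varepsilon_{\theta_i}$, simultaneously for correctness and for termination.

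\textbf{Correctness ($\Prob[\hat\theta\neq \theta_*]\leq\delta$).} A wrong return of some $\theta_i\neq\theta_*$ requires $\varepsilon_{\theta_*}(k\tau)-\varepsilon_{\theta_i}(k\tau) > 2\log(N/\delta)$, i.e.\ $M_k^{(i)} < -\tfrac{1}{2}\tilde\varepsilon_{\theta_i}(0,k\tau) - \log(N/\delta)$. Gaussianity of $w(s)$ gives the conditional MGF bound $\mathbb{E}[e^{\lambda\langle z_s,w(s)\rangle_{\Sigma_w^{-1}}}\mid\mathcal{F}_s]=\exp(\tfrac{\lambda^2}{2}\|z_s\|^2_{\Sigma_w^{-1}})$, so $\exp(\lambda M_k^{(i)} - \tfrac{\lambda^2}{2}\tilde\varepsilon_{\theta_i}(0,k\tau))$ is a nonnegative supermartingale. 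Ville's inequality with $\lambda=-1$ yields $\Prob[\exists k:\ M_k^{(i)}<-\tfrac12\tilde\varepsilon_{\theta_i}(0,k\tau)-\log(N/\delta)]\leq \delta/N$; a union bound over the $N-1$ wrong hypotheses closes this half.

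\textbf{Termination.} For the stopping test to fire, I need $\tilde\varepsilon_{\theta_i}(0,k\tau)+2M_k^{(i)}>2\log(N/\delta)$ for every $i\neq *$. Ville with $\lambda=\tfrac12$ gives, with probability $\geq 1-\delta/N$ uniformly in $k$, $M_k^{(i)}\geq -\tfrac14\tilde\varepsilon_{\theta_i}(0,k\tau) - 2\log(N/\delta)$, hence $\varepsilon_{\theta_i}(k\tau)-\varepsilon_{\theta_*}(k\tau)\geq \tfrac12\tilde\varepsilon_{\theta_i}(0,k\tau)-4\log(N/\delta)$. It remains to lower-bound $\tilde\varepsilon_{\theta_i}(0,k\tau)$ in high probability. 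From the derivation of \eqref{eq:expectedErrorFinal}, $\tilde\varepsilon_{\theta_i}(0,k\tau)$ is the squared $\Sigma_w^{-1}$-weighted norm of the affine-in-noise vector $g_i(U)+\Sigma_{\Delta_i}^{1/2}\xi$, with $\xi\sim\mathcal{N}(0,I)$. Its expectation is precisely $\mathbb{E}[\tilde\varepsilon_{\theta_i}(0,k\tau)]\geq \tau\sum_{j=1}^k c_{u_j}(\tau)\gamma_u^2 + k\tau\, c_w(\tau)$ by the PE Definition~\ref{def:PE}, and a one-sided Bernstein/Hanson--Wright bound applied to this Gaussian quadratic form shows that $\tilde\varepsilon_{\theta_i}$ concentrates around its mean whenever the MGF parameter $\eta$ satisfies $\eta\tau\lesssim 1/\lambda_{\max}((\Sigma_{\Delta_i}^{1/2})^\top\Sigma_w^{-1}\Sigma_{\Delta_i}^{1/2})$ (the MGF only exists in this regime, which is exactly what the hypothesis enforces, with $512$ absorbing universal constants). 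Combining the PE lower bound on the mean, the sub-exponential deviation with parameters $(a,\eta)$, and the martingale inequality above, the test fires as soon as the PE sum exceeds a constant multiple of $(a+\tfrac{1}{\eta})\log(N/\delta)$; tracking constants produces the stated factor $8(2a+\tfrac{1}{2\eta})$ and final union bound.

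\textbf{Main obstacle.} The delicate step is handling $\tilde\varepsilon_{\theta_i}(0,k\tau)$ simultaneously as (i) the predictable variation of the martingale $M_k^{(i)}$ that appears on the \emph{upper} side via Ville, and (ii) a random lower bound we want to make large via PE plus sub-exponential concentration. These two uses are coupled: the same quadratic form plays against itself, so deviations that help in one direction hurt in the other. Resolving this tension cleanly requires choosing the Chernoff parameter $\lambda$ in Ville's inequality compatibly with the MGF parameter $\eta$ for the quadratic form, which is why the theorem has to constrain $\eta\tau$ in terms of $\lambda_{\max}((\Sigma_{\Delta_i}^{1/2})^\top\Sigma_w^{-1}\Sigma_{\Delta_i}^{1/2})$; tracking universal constants through Hanson--Wright carefully enough to close the $8(2a+\tfrac{1}{2\eta})$ expression is the bookkeeping heart of the proof.
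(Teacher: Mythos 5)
Your correctness half is essentially the paper's argument: the process $\exp(-M_k^{(i)}-\tfrac12\tilde\varepsilon_{\theta_i}(0,k\tau))$ you build with $\lambda=-1$ is exactly the likelihood ratio $L_{\theta_i,\theta_*}$, and the paper proves $\Prob[\hat\theta\neq\theta_*]\le\delta$ by the same maximal-inequality-plus-union-bound route. The genuine gap is in your termination step, where you treat $\tilde\varepsilon_{\theta_i}(0,k\tau)$ as the squared norm of a \emph{fixed} affine-in-noise Gaussian vector $g_i(U)+\Sigma_{\Delta_i}^{1/2}\xi$ and invoke Hanson--Wright to concentrate it around its unconditional mean, which you then lower bound by $\tau\sum_j c_{u_j}(\tau)\gamma_u^2+k\tau c_w(\tau)$. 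That representation only holds when $U$ is deterministic (or independent of the process noise). The theorem, however, must cover episode-wise \ac{PE} inputs that are chosen \emph{adaptively} from past data --- this is precisely how it is used downstream, via Lemma~\ref{le:PEInputDesgin} and Theorem~\ref{th:sampleComlexityAlgo}, where Algorithm~\ref{alg:inputDesign} computes the input of episode $k$ from $\varepsilon_{\theta_i}((k-1)\tau)$ and $x(k\tau)$. For such inputs $\tilde\varepsilon_{\theta_i}$ is not a Gaussian quadratic form, a single global Hanson--Wright bound does not apply, and the per-episode PE coefficients $c_{u_j}(\tau)$ only control \emph{conditional} expectations given the state at the start of block $j$ (this is why Definition~\ref{def:PE} is required to hold for every $x(0)$), not the unconditional mean you plug in.

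The paper resolves exactly the coupling you flag as the ``main obstacle'' by never separating the two uses of $\tilde\varepsilon_{\theta_i}$: Proposition~\ref{prop:NewSuperMartingale} shows that the blockwise-compensated process $S_i(k)=\exp\bigl(-\eta(\varepsilon_{\theta_i}(t_k)-\varepsilon_{\theta_*}(t_k)-\tfrac14\sum_{j<k}\sum_{t=t_j}^{t_{j+1}-1}\Expect[\Vert\Delta A_i x(t)+\Delta B_i u(t)\Vert^2_{\Sigma_w^{-1}}\mid\mathcal{F}_j])\bigr)$ is a supermartingale, by combining H\"older's inequality across the $\tau$ steps of a block, the conditional MGF computation of Lemma~\ref{lem:AuxUpperBound} (requiring $\eta\le\tfrac14$), and the fact that, conditioned on $\mathcal{F}_k$, the error vector is Gaussian with covariance $\Sigma_{\Delta_i}(t)$ from \eqref{eq:defVarianceError}, so its squared norm is sub-exponential (Proposition~\ref{prop:SubExpOfLipschitz}); this is where the condition $\eta\tau\le 1/(512\,\lambda_{\mathrm{max}}((\Sigma_{\Delta_i}^{1/2}(\tau))^\top\Sigma_w^{-1}\Sigma_{\Delta_i}^{1/2}(\tau)))$ actually arises, not from an MGF-existence requirement for a fixed quadratic form. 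A single application of the maximal inequality to $S_i(k)$, followed by the PE lower bound on the conditional-expectation compensator, then yields the stopping condition \eqref{eq:stoppingTime}. To repair your plan you would have to replace the global Hanson--Wright step by this per-block conditional argument (or an equivalent self-normalized bound), at which point you essentially recover the paper's proof.
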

\begin{proof}
    This proof consists of two parts. First we show that if the algorithm terminates it holds that $\Prob\left[\hat \theta_T \neq \theta_*\right] \le \delta$, i.e., we show the algorithm is $\delta$-correct. 
    Then derive the upper bound on the stopping time and hence on the sample complexity.
    \par
    \textbf{Correctness:} For the remainder of this proof we use $\Prob_{\theta_i}$ and $\Expect_{\theta_i}$ to denote probability and expectation of an event under the hypothesis that $\theta_* = \theta_i$. 
    To analyze correctness of the algorithm we analyze the likelihood-ratio 
    \begin{equation}
        L_{\theta_i, \theta_j}(t) = \frac{\Prob_{\theta_i}[\mathcal{D}_t]}{\Prob_{\theta_j}[\mathcal{D}_t]} = \frac{\exp\left(-\frac{1}{2}\varepsilon_{\theta_i}(t)\right)}{\exp\left(-\frac12\varepsilon_{\theta_j}(t)\right)}.
    \end{equation}
    It can easily be shown that the likelihood-ratio is a martingale sequence (see, e.g., \cite[Example 2.18]{wainwright2019high}).
    By the termination rule the algorithm terminates under the event 
    \begin{align}
        \mathcal{E} &= \left\{\exists \hat \theta: \log(L_{ \hat \theta, \theta_i}(t)) > \log\left(\frac{N}{\delta}\right), \quad \forall \theta_i\in \SSet \setminus \{\hat\theta\}\right\} \\
        &= \left\{\exists \hat \theta: L_{\hat\theta, \theta_i}(t) > \frac{N}{\delta}, \quad \forall \theta_i\in \SSet \setminus \{\hat\theta\}\right\}.
    \end{align} 
    Thus, $\hat \theta \neq \theta_*$ requires $L_{\theta_i, \theta_0}(t) \ge \frac{N}{\delta}$ for at least one $\theta_i \neq \theta_*$ and some $t\in \mathbb{Z}_+$.    
    Thus 
    \begin{equation}
        \Prob\left[\hat \theta \neq \theta_*\right] \le \Prob_{\theta_0}\left[\bigcup_{i=1}^N \exists t \in \mathbb{Z}_+: L_{\theta_i, \theta_0}(t) \ge \frac{N}{\delta}\right]. 
    \end{equation} 
    Thus, 
    \begin{align}
        \Prob\left[\hat \theta \neq \theta_*\right] & \le \Prob_{\theta_0}\left[\bigcup_{i=1}^N \exists t \in \mathbb{Z}_+: L_{\theta_i, \theta_0}(t) \ge \frac{N}{\delta}\right] \\
         &\le \sum_{i=1}^N  \Prob_{\theta_0}\left[\exists t \in \mathbb{Z}_+: L_{\theta_i, \theta_0}(t) \ge \frac{N}{\delta}\right] \le \sum_{i=1}^{N} \frac{\delta}{N} = \delta,
    \end{align}
    where the second inequality uses a union bound, and the last inequality follows from~\cite[Theorem 3.9]{Lattimore2020}. 
    \par
    \textbf{Stopping Time:} 
    Since in each epoch, the input is \ac{PE} with coefficients $c_{u_j}(\tau)$ and $c_w(\tau)$ we can use Proposition~\ref{prop:NewSuperMartingale} to show that under the choice of $\eta$ the sequence 
    \begin{equation}
        S_i(k) = \exp\left(-\eta \left(\varepsilon_{\theta_i}(t_k) - \varepsilon_{\theta_*}(t_k) - \frac{1}{4} \sum_{j=0}^{k-1} \sum_{t = t_{j}}^{t_{j+1}-1} \Expect\left[\Vert\Delta A_i x(t) + \Delta B_i u(t)\Vert_{\Sigma_w^{-1}}^2 \vert \mathcal{F}_j \right]\right)\right)
\end{equation}
    with $S_i(0) = 1$ is a super-martingale.
    Hence, again using the maximal inequality~\cite[Theorem 3.9]{Lattimore2020} we obtain 
    \begin{equation}
        \Prob\left[\exists k: S_i(k) \ge \frac{N}{\delta} \right] \le \frac{\delta}{N} 
    \end{equation}
    Thus using union bound arguments we obtain 
    \begin{equation}\label{eq:proofSampleComplHelper1}
        \Prob\left[\bigcup_{i=1}^N \exists k: S_i(k) \ge \frac{N}{\delta}\right] \le \sum_{i=1}^{N}\Prob\left[\exists k: S_i(k) \ge \frac{N}{\delta} \right] \le \delta 
    \end{equation}
    Note that, given the definition of \ac{PE}, $S_i(k) \ge \frac{N}{\delta}$ is implied by 
    \begin{equation}
        \varepsilon_{\theta_*}(t_k) - \varepsilon_{\theta_i}(t_k) \ge  \frac{1}{\eta} \log\left(\frac{N}{\delta}\right) - \frac{\tau}{4}\sum_{j=0}^{k-1} c_{u_j}(\tau) \gamma_u^2 + c_w(\tau) .
    \end{equation}
    Thus by \eqref{eq:proofSampleComplHelper1} it holds with probability at least $1-\delta$ that 
    \begin{equation}
        \varepsilon_{\theta_*}(t_k) - \varepsilon_{\theta_i}(t_k) \le  \frac{1}{\eta} \log\left(\frac{N}{\delta}\right) - \frac{\tau}{4}\sum_{j=0}^{k-1} c_{u_j}(\tau) \gamma_u^2 + c_w(\tau).
    \end{equation}
    Thus, as soon as  
    \begin{equation} \label{eq:proofSampleComplBurnInCond}
        8\left(\frac{1}{2\eta}+ 2\right)\log\left(\frac{N}{\delta}\right) \le \tau \sum_{j=0}^{k-1} c_{u_j}(\tau) \gamma_u^2 + c_w(\tau)
    \end{equation}
    we obtain 
    \begin{align}
        \log\left(L_{\theta_i, \theta_*}(k\tau)\right) &= \frac12 \left(\varepsilon_{\theta_*}(t_k) - \varepsilon_{\theta_i}(t_k) \right) \\ &\le \frac{1}{2\eta} \log\left(\frac{N}{\delta}\right) - \frac{\tau}{8}\sum_{j=0}^{k-1} c_{u_j}(\tau) \gamma_u^2 + c_w(\tau) \\
        &\le \frac{1}{2\eta} \log\left(\frac{N}{\delta}\right) - \left(\frac{1}{2\eta}+ 2\right)\log\left(\frac{N}{\delta}\right) \\
        &= 2\log\left(\frac{N}{\delta}\right)
    \end{align}
    Thus, when \eqref{eq:proofSampleComplBurnInCond} is satisfied Algorithm~\ref{alg:ID} terminates. This happens at the latest when \eqref{eq:stoppingTime} is satisfied and yields $\hat \theta = \theta_*$ with probability at least $1-\delta$.
\end{proof}
\subsection{Intermediate results used in the proof of Theorem~\ref{th:sampleComplexityUpperBound_General}}
In the following, we present intermediate results which are used in the proof of Theorem~\ref{th:sampleComplexityUpperBound_General}.
\begin{prop}\label{prop:NewSuperMartingale}
    Consider the unknown system~\eqref{eq:TrueSysEvo}, and the set $\SSet$ as defined in~\eqref{eq:defSet}. 
    Let the data be collected, by exciting the true system with $u(t) = u_\mathrm{d}(t) + u_\mathrm{r}(t)$, where $u_d(t)$ is a deterministic sequence and $u_\mathrm{r}(t) \simiid \N(0, \Sigma_u)$.
    Divide the data into $K$ blocks of length $\tau>0$ and define $t_k = \tau k$, $k \in [0, K]$. Further, define the filtration $\mathcal{F}_{k}$ consisting of all random variables observed at the start of block $k$, i.e., $\mathcal{F}_{k} = \{x(0), u(0), \dots, u(t_k-1), x(t_k)\}$.
    Then the sequence
    \begin{equation}
        S_i(k) = \exp\left(-\eta \left(\varepsilon_{\theta_i}(t_k) - \varepsilon_{\theta_*}(t_k) - \frac{1}{4} \sum_{j=0}^{k-1} \sum_{t = t_{j}}^{t_{j+1}-1} \Expect\left[\Vert\Delta A_i x(t) + \Delta B_i u(t)\Vert_{\Sigma_w^{-1}}^2 \vert \mathcal{F}_j \right]\right)\right)
\end{equation}
with $S_i(0) = 1$ is a supermartingale\footnote{A supermartingale is a sequence $X(0), X(1), \dots, $ of integrable random variables satisfying $\Expect\left[X(k+1) \vert \mathcal{F}_k\right] \le X(k)$.} for any $\eta>0$ satisfying
\begin{equation}
    \eta \tau \le \frac{1}{512 \lambda_\mathrm{max}\left(\left(\Sigma_{\Delta_i}^{\frac{1}{2}}(\tau)\right)^\top \Sigma_w^{-1} \Sigma_{\Delta_i}^{\frac{1}{2}}(\tau) \right)} 
\end{equation}
with 
\begin{equation}\label{eq:defVarianceError}
    \Sigma_{\Delta_i}(\tau) \coloneqq \Delta A_i \left( \sum_{s=0}^{\tau-1} A_*^{\tau-s}B \Sigma_u B^\top (A^{\tau-s})^\top + A_*^{\tau -s} \Sigma_w (A_*^{\tau -s})^\top\right) \Delta A_i^\top + \Delta B_i \Sigma_u \Delta B_i ^\top.
\end{equation}
\end{prop}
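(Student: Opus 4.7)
To establish the supermartingale property $\mathbb{E}[S_i(k+1)\mid\mathcal{F}_k]\le S_i(k)$, I would first note that $S_i(k)$ is $\mathcal{F}_k$-measurable, so after dividing both sides by $S_i(k)$ it suffices to prove the one-block inequality
\begin{equation*}
\mathbb{E}\bigl[\exp(-\eta \Delta_k)\bigm|\mathcal{F}_k\bigr] \le \exp\bigl(-\tfrac{\eta}{4}\mathbb{E}[X_k\mid\mathcal{F}_k]\bigr),
\end{equation*}
where, writing $v(s)\coloneqq\Delta A_i x(s)+\Delta B_i u(s)$, I set $X_k \coloneqq \sum_{s=t_k}^{t_{k+1}-1}\Vert v(s)\Vert_{\Sigma_w^{-1}}^2$, $Y_k \coloneqq 2\sum_{s=t_k}^{t_{k+1}-1} v(s)^\top\Sigma_w^{-1} w(s)$, and $\Delta_k = X_k+Y_k$. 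The identity $\varepsilon_{\theta_i}(t_{k+1})-\varepsilon_{\theta_i}(t_k) - \bigl(\varepsilon_{\theta_*}(t_{k+1})-\varepsilon_{\theta_*}(t_k)\bigr) = X_k+Y_k$ is immediate from \eqref{eq:defEpsilon} once one substitutes $x(s+1)=A_*x(s)+B_*u(s)+w(s)$, since the residual under $\theta_i$ is then $v(s)+w(s)$ while the residual under $\theta_*$ is just $w(s)$.

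The crucial observation is that, conditional on $\mathcal{F}_k$, the stacked block noise vector $Z_k\coloneqq[U_{r,k}^\top, W_k^\top]^\top$ is jointly Gaussian with block-diagonal covariance $\Sigma_Z=\diag{\tau}(\Sigma_u)\oplus\diag{\tau}(\Sigma_w)$, and by linearity of the dynamics the block trajectories $V_k=[v(t_k)^\top,\dots,v(t_{k+1}-1)^\top]^\top$ and $W_k$ are affine in $Z_k$, namely $V_k = \mu_k + L_V Z_k$ and $W_k = L_W Z_k$, with $\mu_k$ an $\mathcal{F}_k$-measurable drift carrying $x(t_k)$ and $u_\mathrm{d}$, and $L_V, L_W$ deterministic block-Toeplitz matrices built from $A_*,B_*,\Delta A_i,\Delta B_i$. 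Substituting gives $\Delta_k = c_k + q_k^\top Z_k + Z_k^\top P_k Z_k$ with $c_k=\Vert\mu_k\Vert_{\tilde\Sigma_w^{-1}}^2$, $q_k = 2(L_V + L_W)^\top\tilde\Sigma_w^{-1}\mu_k$ and symmetric $P_k = L_V^\top\tilde\Sigma_w^{-1}L_V + L_V^\top\tilde\Sigma_w^{-1}L_W + L_W^\top\tilde\Sigma_w^{-1}L_V$, where $\tilde\Sigma_w=\diag{\tau}(\Sigma_w)$. The target conditional mean splits cleanly as $\mathbb{E}[X_k\mid\mathcal{F}_k] = c_k + \tr{L_V^\top\tilde\Sigma_w^{-1}L_V\Sigma_Z}$, enabling direct comparison via the standard Gaussian quadratic-plus-linear MGF identity
\begin{equation*}
\mathbb{E}\bigl[\exp(-\eta q_k^\top Z_k - \eta Z_k^\top P_k Z_k)\bigr] = \det\bigl(I+2\eta\Sigma_Z^{1/2}P_k\Sigma_Z^{1/2}\bigr)^{-1/2}\exp\bigl(\tfrac{\eta^2}{2}q_k^\top(\Sigma_Z^{-1}+2\eta P_k)^{-1}q_k\bigr),
\end{equation*}
which is valid whenever $\Vert 2\eta\Sigma_Z^{1/2}P_k\Sigma_Z^{1/2}\Vert_\mathrm{op}\le\tfrac12$. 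Using the standard log-determinant expansion $-\tfrac12\log\det(I+M)\le-\tfrac12\tr{M}+\Vert M\Vert_{\mathrm{F}}^2$ for $\Vert M\Vert_\mathrm{op}\le\tfrac12$, together with $(\Sigma_Z^{-1}+2\eta P_k)^{-1}\preceq\Sigma_Z$, the inequality reduces to an $\eta$-quadratic remainder controlled by $\Vert\Sigma_Z^{1/2}P_k\Sigma_Z^{1/2}\Vert_\mathrm{op}$, while the leading linear-in-$\eta$ terms match the target up to the prescribed $\tfrac14$ safety margin.

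The main obstacle will be the last identification: translating the block-Toeplitz operator norm $\Vert\Sigma_Z^{1/2}P_k\Sigma_Z^{1/2}\Vert_\mathrm{op}$ into the cleaner per-step spectral quantity $\lambda_\mathrm{max}\bigl((\Sigma_{\Delta_i}^{1/2}(\tau))^\top\Sigma_w^{-1}\Sigma_{\Delta_i}^{1/2}(\tau)\bigr)$ of \eqref{eq:defVarianceError}. The factor of $\tau$ on the left of the condition $\eta\tau\le 1/(512\lambda_\mathrm{max}(\cdots))$ reflects that $P_k$ aggregates $\tau$ one-step contributions through the Toeplitz structure, while $\Sigma_{\Delta_i}(\tau)$ is exactly the variance of $v(\cdot)$ obtained by propagating the random input $u_\mathrm{r}$ and noise $w$ through $\tau$ steps of the dynamics. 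I would verify this by bounding the Toeplitz symbol against the $\tau$-step propagated covariance and invoking the standard fact that a block-Toeplitz operator's norm is dominated by its symbol. The constant $512$ then arises from tracking the $\tfrac14$ margin, the factor $2$ in $Y_k=2\sum v^\top\Sigma_w^{-1}w$, the factor $2$ in $2\eta P_k$, the operator-subunit requirement $\le\tfrac12$, and the coefficient in the Taylor remainder for $-\log\det(I+M)$.
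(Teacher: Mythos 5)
Your setup (reduction to the one-block inequality after dividing by $S_i(k)$, and the decomposition of the block increment into $X_k+Y_k$ with $Y_k=2\sum_s v(s)^\top\Sigma_w^{-1}w(s)$) matches the paper, but your technical core is genuinely different: the paper never forms a joint MGF over the block. It decouples the time steps with H\"older's inequality, integrates out $w(t)$ one step at a time (Lemma~\ref{lem:AuxUpperBound}), and then applies a sub-exponential MGF bound (Proposition~\ref{prop:SubExpOfLipschitz}) to $\Vert\Delta A_i x(t)+\Delta B_i u(t)\Vert_{\Sigma_w^{-1}}^2$, which conditionally on $\mathcal{F}_k$ is a quadratic in a Gaussian with covariance $\Sigma_{\Delta_i}(t)\preceq\Sigma_{\Delta_i}(\tau)$; this is exactly how the per-step quantity \eqref{eq:defVarianceError} enters, and it is why the paper never has to control any cross-time (joint) covariance. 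Your exact Gaussian quadratic-form route is viable in principle, and it is a legitimate alternative, but the cross-time correlations that the paper's H\"older step sidesteps are precisely what makes the step you yourself call the ``main obstacle'' nontrivial, and that step is not carried out.

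Two points as written would fail and one is left open. First, $(\Sigma_Z^{-1}+2\eta P_k)^{-1}\preceq\Sigma_Z$ is false in general: $\Delta_k=\Vert V_k+W_k\Vert^2_{\tilde\Sigma_w^{-1}}-\Vert W_k\Vert^2_{\tilde\Sigma_w^{-1}}$ is a difference of squares, so $P_k$ is indefinite (the cross block $L_V^\top\tilde\Sigma_w^{-1}L_W$), and $\Sigma_Z^{-1}+2\eta P_k\succeq\Sigma_Z^{-1}$ does not hold; under your operator-norm condition you only get $(\Sigma_Z^{-1}+2\eta P_k)^{-1}\preceq 2\Sigma_Z$, which is fixable but changes the bookkeeping. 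Second, the crux---translating $\Vert\Sigma_Z^{1/2}P_k\Sigma_Z^{1/2}\Vert_{\mathrm{op}}$ and $\Vert\cdot\Vert_{\mathrm F}$ into $\tau\,\lambda_\mathrm{max}\big((\Sigma_{\Delta_i}^{1/2}(\tau))^\top\Sigma_w^{-1}\Sigma_{\Delta_i}^{1/2}(\tau)\big)$---cannot be done by a Toeplitz-symbol argument: $L_V\Sigma_Z L_V^\top$ is the covariance of a non-stationary finite-horizon trajectory and is not block-Toeplitz, so there is no symbol to dominate it. What does work is the elementary PSD block bound $\lambda_\mathrm{max}(M)\le\sum_t\lambda_\mathrm{max}(M_{tt})$ applied to the covariance of the stacked $\Sigma_w^{-1/2}(\Delta A_i x(\cdot)+\Delta B_i u(\cdot))$, whose $t$-th diagonal block is $\Sigma_w^{-1/2}\Sigma_{\Delta_i}(t)\Sigma_w^{-1/2}\preceq\Sigma_w^{-1/2}\Sigma_{\Delta_i}(\tau)\Sigma_w^{-1/2}$; that is where the factor $\tau$ actually comes from. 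You should also verify that the cross block contributes zero trace (true by causality, since $w(s)$ is independent of the randomness in $v(s)$), which is what your claim that ``the leading linear-in-$\eta$ terms match'' silently uses. Finally, the constants will not come out as stated: for instance $\Vert\Sigma_Z^{1/2}L_V^\top\tilde\Sigma_w^{-1}L_W\Sigma_Z^{1/2}\Vert_{\mathrm F}^2=\tr{L_V\Sigma_Z L_V^\top\tilde\Sigma_w^{-1}}=\mathbb{E}[X_k\mid\mathcal{F}_k]-c_k$, so the Frobenius remainder and the linear term force additional absolute-constant restrictions on $\eta$ beyond $\eta\tau\lambda_\mathrm{max}\le c$ (the paper's proof carries an analogous implicit requirement, $\eta\tau\le\tfrac14$, for Lemma~\ref{lem:AuxUpperBound}); without stating and tracking these, your argument does not deliver the proposition with the constant $512$ as claimed.
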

\begin{proof}
        For the proof of this result, we overload notation and define 
    \begin{equation} 
        \varepsilon_{\theta_i}(t_0, t_1) \coloneqq \sum_{t = t_0}^{t_1-1} \Vert x(t+1) - A_i x(t) - B_i u(t)\Vert_{\Sigma_w^{-1}}^2.
    \end{equation}
    To show $S_i(k)$ is a supermartingale we consider 
    \begin{align}\label{eq:proofSupermartingale}
        \Expect\left[S_i(k+1) \vert \mathcal{F}_k\right] = &S_i(k) \Expect\Bigg[\exp\Bigg(-\eta \Bigg(\varepsilon_{\theta_i}(t_k, t_{k+1}) \\ &\quad- \varepsilon_{\theta_*}(t_k, t_{k+1}) -\frac1{4}\left(\sum_{t = t_{k}}^{t_{k+1}-1} \Expect\left[\Vert\Delta A_i x(t) + \Delta B_i u(t)\Vert_{\Sigma_w^{-1}}^2\right]\right) \Bigg)\Bigg)\bigg\vert \mathcal{F}_{k}\Bigg] .\notag
    \end{align}
    Thus, to show the result requires showing that 
    \begin{equation}
        \Expect\Bigg[e^{-\eta \left(\varepsilon_{\theta_i}(t_k, t_{k+1}) - \varepsilon_{\theta_*}(t_k, t_{k+1}) -\frac{\tau}{4}\left(\sum_{t = t_{k}}^{t_{k+1}-1} \Expect\left[\Vert\Delta A_i x(t) + \Delta B_i u(t)\Vert_{\Sigma_w^{-1}}^2\right]\right) \right)}\bigg\vert \mathcal{F}_{k}\Bigg] \le 1.
    \end{equation}
    To this end, we define   
    \begin{equation}\label{eq:DefXi}
        \xi_{\theta_i}(t) \coloneqq \Vert x(t+1) - A_i x(t) - B_i u(t)\Vert_{\Sigma_w^{-1}}^2.
    \end{equation} 
    Note that 
    \begin{align}
        \Expect\left[e^{-\eta \left(\varepsilon_{\theta_i}(t_k, t_{k+1}) - \varepsilon_{\theta_*}(t_k, t_{k+1})\right)}\Big \vert \mathcal{F}_k\right] &= \Expect\left[e^{-\eta \sum_{t=t_k}^{t_{k+1}-1} \xi_{\theta_i}(t) - \xi_{\theta_*}(t)}\Big\vert \mathcal{F}_k\right] \\
        &\quad \le \prod_{t=t_k}^{t_{k+1}-1} \left(\Expect\left[e^{-\eta \tau (\xi_{\theta_i}(t) - \xi_{\theta_*}(t))}\Big\vert \mathcal{F}_k\right]\right)^\frac{1}{\tau}  \\
        &\quad \le \prod_{t=t_k}^{t_{k+1}-1} \left(\Expect\left[e^{-\frac{\eta}{2} \tau \Vert  \Delta A_i x(t) + \Delta B_i u(t) \Vert_{\Sigma_w^{-1}}^2}\Big\vert \mathcal{F}_k\right]\right)^\frac{1}{\tau},
    \end{align}
    where the first inequality follows from Hölder's inequality and the last inequality follows from Lemma~\ref{lem:AuxUpperBound} given that $\eta \le \frac{1}{4}$.
    Note now that conditioned on $\mathcal{F}_k$ we have $\Delta A_i x(t_k + t) + \Delta B_i u(t_k +t) \sim \N(\mu_{\Delta_i}(t), \Sigma_{\Delta_i}(t))$, where $\Sigma_{\Delta_i}(t)$ is defined in \eqref{eq:defVarianceError} and $\mu_{\Delta_i}(t)$ depends on $x(t_k)$, the input sequence $u_\mathrm{d}(t)$, and the system matrices. Hence, conditioned on $\mathcal{F}_k$, by Proposition~\ref{prop:SubExpOfLipschitz} we have 
    \begin{equation}
    \begin{aligned}
        \frac{1}{2} \Big(\Vert\Delta A_i &x(t_k +t) + \Delta B_i u(t_k+t)\Vert_{\Sigma_w^{-1}}^2 \\ & - \Expect\Big[\Vert\Delta A_i x(t_k + t) + \Delta B_i u(t_k + t)\Vert_{\Sigma_w^{-1}}^2\Big] \Big) \sim \subExp{\nu(t)}, \forall t\in [0, \tau-1]
    \end{aligned}
    \end{equation}
    with $\nu(t) = 16 \lambda_\mathrm{max}\left((\Sigma_{\Delta_i}(t)^\frac12)^\top  \Sigma_w^{-1} \Sigma_{\Delta_i}(t)^\frac12\right)$. 
    Taking $\eta \tau \le \frac{1}{\nu(\tau)}$ we can use the definition of subexponential random variables to obtain 
    \begin{align}
        &\prod_{t=t_k}^{t_{k+1}-1} \left(\Expect\left[e^{-\frac{\eta}{2} \tau \Vert  \Delta A_i x(t) + \Delta B_i u(t) \Vert_{\Sigma_w^{-1}}^2}\Big\vert \mathcal{F}_k\right]\right)^\frac{1}{\tau}\\
        &\quad \le \left(\prod_{t=t_k}^{t_{k+1}-1} e^{\frac{\eta^2}2 \tau^2 \nu(t)^2 }e^{-\frac{\eta}{2} \tau \Expect\left[\Vert\Delta A_i x(t) + \Delta B_i u(t)\Vert_{\Sigma_w^{-1}}^2\right]}\right)^\frac{1}{\tau}\\
&\quad = \left(e^{-\frac{\eta}2\tau \left(\sum_{t = t_k}^{t_{k+1}-1} \Expect\left[\Vert\Delta A_i x(t) + \Delta B_i u(t)\Vert_{\Sigma_w^{-1}}^2\right] - \tau \eta \nu^2(t)\right)}\right)^\frac{1}{\tau} \\
&\quad = e^{-\frac{\eta}2 \left(\sum_{t = t_k}^{t_{k+1}-1} \Expect\left[\Vert\Delta A_i x(t) + \Delta B_i u(t)\Vert_{\Sigma_w^{-1}}^2\right] - \tau \eta \nu^2(t)\right)}.
    \end{align}
    Imposing 
    \begin{equation}\label{eq:condEta2}
        \tau \eta \nu^2(t) \le \frac{1}{2} \Expect\left[\Vert\Delta A_i x(t) + \Delta B_i u(t)\Vert_{\Sigma_w^{-1}}^2\right]
    \end{equation} for $t \in [t_k, t_{k+1}-1]$ and combining the above arguments results in 
    \begin{align}
    \Expect\left[e^{-\eta \left(\varepsilon_{\theta_i}(t_k, t_{k+1}) - \varepsilon_{\theta_*}(t_k, t_{k+1})\right)}\Big \vert \mathcal{F}_k \right] &\le \exp(-\frac{\eta}4 \sum_{t = t_k}^{t_{k+1}-1} \Expect\left[\Vert\Delta A_i x(t) + \Delta B_i u(t)\Vert_{\Sigma_w^{-1}}^2 \vert \mathcal{F}_k \right]) \le 1,
    \end{align}
    where the last inequality follows since $\Expect\left[\Vert\Delta A_i x(t) + \Delta B_i u(t)\Vert_{\Sigma_w^{-1}}^2 \vert \mathcal{F}_k \right]>0$, $\forall t \in [t_k, t_{k+1}-1]$.
    Note finally that 
    \begin{align}
        &\Expect\left[\Vert\Delta A_i x(t) + \Delta B_i u(t)\Vert_{\Sigma_w^{-1}}^2 \vert \mathcal{F}_k\right] 
        \\ \quad&= \Expect\left[(\Delta A_i x(t) + \Delta B_i u(t))^\top \Sigma_w^{-1}(\Delta A_i x(t) + \Delta B_i u(t)) \vert \mathcal{F}_k\right] \\
        &= \tr{\Sigma_w^{-1} \Expect\left[(\Delta A_i x(t) + \Delta B_i u(t)) (\Delta A_i x(t) + \Delta B_i u(t))^\top \vert \mathcal{F}_k \right]} \\
        &= \tr{\Sigma_w^{-1}(\Sigma_{\Delta_i}(t) + \mu_{\Delta_i}(t) \mu_{\Delta_i}(t)^\top)} \\
        &= \tr{(\Sigma_{\Delta_i}(t)^\frac{1}{2})^\top \Sigma_w^{-1} \Sigma_{\Delta_i}(t)^\frac{1}{2} + \mu_{\Delta_i}(t) ^\top \Sigma_w^{-1} \mu_{\Delta_i}(t)} \\
        &\ge \lambda_\mathrm{max}\left((\Sigma_{\Delta_i}(t)^\frac{1}{2})^\top \Sigma_w^{-1} \Sigma_{\Delta_i}(t)^\frac{1}{2}\right) = \frac{\nu(t)}{16},
    \end{align}
    hence condition~\eqref{eq:condEta2} is implied by
    \begin{equation}
        \tau \eta \nu^2(t) \le \frac{\nu(t)}{32} \stackrel{\nu(t) >0}{\Longleftrightarrow} \tau \eta \nu(t) \le \frac{1}{32} \forall t\in [0, \tau]
    \end{equation}
    or since the $\nu(t)$ only grows as $t$ increases by 
    \begin{equation}
        \tau \eta \le \frac{1}{32 \nu(\tau)} =  \frac{1}{512 \lambda_\mathrm{max}\left(\left(\Sigma_{\Delta_i}^{\frac{1}{2}}(\tau)\right)^\top \Sigma_w^{-1} \Sigma_{\Delta_i}^{\frac{1}{2}}(\tau) \right)} .
    \end{equation}
    \end{proof} 
\begin{lemma}\label{lem:AuxUpperBound}
    Define 
    \begin{equation}
        \xi_{\theta_i}(t) \coloneqq \Vert x(t+1) - A_i x(t) - B_i u(t)\Vert_{\Sigma_w^{-1}}^2
    \end{equation} 
    and let $\mathcal{F}_k$ be a filtration containing all random variables up to time $k$.
    Then, if $\eta \le \frac14$ we have 
    \begin{equation}
        \Expect\left[e^{-\eta (\xi_{\theta_i}(t) - \xi_{\theta_*}(t))}\Big \vert \mathcal{F}_k\right] \le  \Expect\left[\exp\left(-\frac{\eta}{2} \Big\Vert \Delta A_i x(t) - \Delta B_i u(t)\Big\Vert_{\Sigma_w^{-1}}^2 \right)\Big\vert \mathcal{F}_k \right] \quad \forall k = 1, 2, \dots , t.
    \end{equation}
\end{lemma}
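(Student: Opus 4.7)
The plan is to use the true dynamics to expand $\xi_{\theta_i}(t) - \xi_{\theta_*}(t)$ as a quadratic-plus-linear function of the fresh Gaussian noise $w(t)$, and then apply the standard Gaussian moment generating function computation together with the restriction $\eta \le 1/4$. First, I would substitute $x(t+1) = A_* x(t) + B_* u(t) + w(t)$ into both residuals, giving $x(t+1) - A_i x(t) - B_i u(t) = w(t) + d_i(t)$ where $d_i(t) := \Delta A_i x(t) + \Delta B_i u(t)$, and $x(t+1) - A_* x(t) - B_* u(t) = w(t)$. Expanding the squared $\Sigma_w^{-1}$-norms then yields the clean identity
\begin{equation*}
\xi_{\theta_i}(t) - \xi_{\theta_*}(t) = 2\, w(t)^\top \Sigma_w^{-1} d_i(t) + \Vert d_i(t)\Vert_{\Sigma_w^{-1}}^2.
\end{equation*}

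Next I would introduce the augmented filtration $\mathcal{G}_t \supseteq \mathcal{F}_k$ containing all randomness through time $t$ (so in particular $x(t)$ and $u(t)$, and hence $d_i(t)$, are $\mathcal{G}_t$-measurable), while $w(t) \sim \mathcal{N}(0, \Sigma_w)$ remains independent of $\mathcal{G}_t$. Conditioning on $\mathcal{G}_t$, the random variable $-2\eta\, w(t)^\top \Sigma_w^{-1} d_i(t)$ is centered Gaussian with variance $4\eta^2 \Vert d_i(t)\Vert_{\Sigma_w^{-1}}^2$, so the Gaussian MGF gives
\begin{equation*}
\Expect\!\left[e^{-\eta(\xi_{\theta_i}(t) - \xi_{\theta_*}(t))}\,\Big\vert\, \mathcal{G}_t\right]
= e^{-\eta \Vert d_i(t)\Vert_{\Sigma_w^{-1}}^2 + 2\eta^2 \Vert d_i(t)\Vert_{\Sigma_w^{-1}}^2}
= e^{-\eta(1 - 2\eta)\Vert d_i(t)\Vert_{\Sigma_w^{-1}}^2}.
\end{equation*}

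The assumption $\eta \le 1/4$ gives $1 - 2\eta \ge 1/2$, so the right-hand exponent is bounded above by $-\tfrac{\eta}{2}\Vert d_i(t)\Vert_{\Sigma_w^{-1}}^2$. Finally I would apply the tower property, since $\mathcal{F}_k \subseteq \mathcal{G}_t$ for $k \le t$, to pass from the conditioning on $\mathcal{G}_t$ to the conditioning on $\mathcal{F}_k$ stated in the lemma, yielding the claimed inequality (interpreting the norm on the right-hand side as $\Vert \Delta A_i x(t) + \Delta B_i u(t)\Vert_{\Sigma_w^{-1}}^2$, i.e., treating the minus sign in the statement as a typo since $d_i(t) = \Delta A_i x(t) + \Delta B_i u(t)$ is what naturally appears).

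I do not anticipate a serious obstacle here: the Gaussian cross-term computation is completely standard once the correct filtration is fixed. The only subtlety is keeping track of which quantities are measurable with respect to which $\sigma$-algebra --- in particular, that $u(t)$ may itself depend on past noise and past states but is known once we condition on $\mathcal{G}_t$, whereas $w(t)$ is genuinely independent of $\mathcal{G}_t$ by the i.i.d.\ assumption in \eqref{eq:TrueSysEvo}. This justifies treating $d_i(t)$ as deterministic inside the inner conditional expectation, which is the crux of the argument.
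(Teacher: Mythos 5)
Your proposal is correct and follows essentially the same route as the paper's proof: substitute the true dynamics, isolate the linear-in-$w(t)$ cross term, apply the Gaussian moment generating function conditionally on the current state and input (your $\mathcal{G}_t$ plus tower property is just a more careful bookkeeping of what the paper does by conditioning on $x(t),u(t)$), and use $\eta\le\frac14$ to get the factor $1-2\eta\ge\frac12$. Your reading of the minus sign in the statement as a typo for $\Delta A_i x(t)+\Delta B_i u(t)$ is also consistent with how the lemma is used in the paper.
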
 
\begin{proof}
    By the definition of $\xi_{\theta_i}(t)$ 
    we have 
    \begin{align}
        \xi_{\theta_i}(t) &= \Vert \Delta A_i x(t) - \Delta B_i u(t) + w(t)\Vert_{\Sigma_w^{-1}}^2 \\ 
        &= \Vert \Delta A_i x(t) - \Delta B_i u(t)\Vert_{\Sigma_w^{-1}}^2 + 2 w(t)^\top \Sigma_w^{-1} (A_i x(t) - \Delta B_i u(t)) + w(t) ^\top \Sigma_w^{-1}  w(t)
    \end{align}
    and hence 
    \begin{equation}
        \xi_{\theta_i}(t) - \xi_{\theta_*}(t) = \Vert \Delta A_i x(t) - \Delta B_i u(t)\Vert_{\Sigma_w^{-1}}^2 + 2 w(t)^\top \Sigma_w^{-1} (A_i x(t) - \Delta B_i u(t)).
    \end{equation}
    Conditioning on $x(t)$, $u(t)$ the randomness in $\xi_{\theta_i}(t) - \xi_{\theta_*}(t)$ is due to $2 w(t)^\top \Sigma_w^{-1} (A_i x(t) - \Delta B_i u(t))$. 
    Realize that $w(t)^\top \Sigma_w^{-\frac12} \simiid \N(0, I_{n_x})$. Recall that the definition of the moment generating function for a random variable $w \sim \N (0,  I_{n_x})$ is given by 
    \begin{equation}
        M_w(\lambda) =  \Expect\left[e^{w^\top \lambda}\right] = \exp\left(\frac12  \lambda^\top \lambda \right).
    \end{equation} 
    Using this definition with $\lambda = 2 \eta {\Sigma_w^{-\frac12}}^\top (A_i x(t) - \Delta B_i u(t))$ we obtain 
    \begin{small}
    \begin{align*}
        \Expect&\left[\exp(-\eta (\xi_{\theta_i}(t) - \xi_{\theta_*}(t)))\vert x(t), u(t)\right] \\ 
        &= \exp\left(-\eta \Vert \Delta A_i x(t) + \Delta B_i u(t)\Vert_{\Sigma_w^{-1}}^2\right)  \Expect\left[\exp\left( -2\eta  w(t)^\top \Sigma_w^{-1} (A_i x(t) + \Delta B_i u(t)) \right)\vert x(t), u(t)\right] \\
        &= \exp\left(-\eta \Vert \Delta A_i x(t) + \Delta B_i u(t)\Vert_{\Sigma_w^{-1}}^2 + 2 \eta^2\Vert \Delta A_i x(t) + \Delta B_i u(t)\Vert_{\Sigma_w^{-1}}^2 \right) \\
        &= \exp\left(-\eta \Vert \Delta A_i x(t) + \Delta B_i u(t)\Vert_{\Sigma_w^{-1}}^2\left(1- 2 \eta \right) \right).
    \end{align*}
    \end{small}
    Taking $\eta \le \frac14$ the desired bound holds. 
\end{proof}

\begin{prop}\label{prop:SubExpOfLipschitz}
    Consider the random Gaussian vector $z\sim \N(\mu_z, \Sigma_z)$ of dimension $n_z$ and a symmetric matrix $M\in \mathbb{S}_{++}^{n_z}$. Then 
    $\frac12 \left(\Vert z \Vert_M^2 - \Expect\left[\Vert z\Vert_M^2\right]\right)$ is sub-exponential\footnote{A random variable X is said to be sub-exponential with parameter $\nu$ (denoted by $X\sim \subG{\nu}$) if $\Expect[X] = 0$ and its moment generating function satisfies $\Expect\left[e^{sX} \right] \le e^{\frac{s^2 \nu^2}{2}}$, $\forall \vert s\vert \le \frac{1}{\nu}$.} with parameter $\nu$, where 
    \begin{subequations}
        \begin{align}
            \nu = 16 \lambda_\mathrm{max}\left({\Sigma_z^{\frac{1}{2}}}^\top M \Sigma_z^{\frac{1}{2}}\right).
        \end{align}
    \end{subequations}
\end{prop}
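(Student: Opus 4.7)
The plan is a direct Gaussian moment-generating-function (MGF) computation after a standard diagonalization. First I would factor $z = \mu_z + \Sigma_z^{1/2} g$ with $g \sim \N(0, I_{n_z})$ and expand
\begin{equation*}
\|z\|_M^2 - \Expect\!\left[\|z\|_M^2\right] = g^\top A\, g - \tr{A} + 2\, v^\top g,
\end{equation*}
where $A := (\Sigma_z^{1/2})^\top M\, \Sigma_z^{1/2} \in \mathbb{S}_+^{n_z}$ and $v := (\Sigma_z^{1/2})^\top M \mu_z$. Diagonalizing $A = U \Lambda U^\top$ and rotating $\tilde g := U^\top g \sim \N(0,I_{n_z})$, $\tilde v := U^\top v$, this reduces to the sum of independent one-dimensional terms $\sum_{i=1}^{n_z}\bigl(\lambda_i(\tilde g_i^2 - 1) + 2\tilde v_i \tilde g_i\bigr)$, for which the MGF factorizes.

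Second, I would use the closed-form Gaussian MGF $\Expect[\exp(a G^2 + b G)] = (1-2a)^{-1/2}\exp\!\bigl(b^2/(2(1-2a))\bigr)$ for $a < 1/2$ to write, for $|s|\le 1/\nu$ with $\nu$ proportional to $\lambda_\mathrm{max}(A)$,
\begin{equation*}
\log\Expect\!\left[\exp\!\left(\tfrac{s}{2}\bigl(\|z\|_M^2 - \Expect[\|z\|_M^2]\bigr)\right)\right]
= \sum_{i=1}^{n_z}\!\left(-\tfrac{1}{2}\log(1 - s\lambda_i) - \tfrac{s\lambda_i}{2} + \tfrac{s^2 \tilde v_i^2}{2(1-s\lambda_i)}\right).
\end{equation*}
Restricting $|s|\le 1/(16\lambda_\mathrm{max}(A))$ guarantees $|s\lambda_i|\le 1/16$ for every $i$. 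In this regime, the elementary analytic inequalities $-\tfrac{1}{2}\log(1-u)-\tfrac{u}{2}\le u^2$ and $(1-u)^{-1}\le 2$ (valid for $|u|\le 1/2$) allow me to upper bound the log-MGF by an expression of the form $c_1 s^2 \sum_i \lambda_i^2 + c_2 s^2 \sum_i \tilde v_i^2$, which is the standard sub-exponential profile.

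Third, I would consolidate this estimate into the target bound $\exp(s^2 \nu^2 / 2)$ with $\nu = 16 \lambda_\mathrm{max}(A)$ by comparing coefficients. \emph{The main obstacle} is that the natural upper bound involves the Frobenius norm $\|A\|_F$ and the vector norm $\|v\|$, whereas the claimed $\nu$ depends only on $\lambda_\mathrm{max}(A)$. Resolving this will require absorbing both contributions using $\lambda_i \le \lambda_\mathrm{max}(A)$ term-by-term in the sum (so that the linear Gaussian contribution from $\tilde v_i$ couples with the chi-square contribution in the denominator $1-s\lambda_i$) and exploiting that for the radius $|s|\le 1/(16\lambda_\mathrm{max}(A))$ the exponent can be bounded by $\tfrac{1}{2}s^2(16\lambda_\mathrm{max}(A))^2$ with the generous constant $16$ precisely chosen to swallow the residual terms in the regime in which this proposition is invoked in Proposition~\ref{prop:NewSuperMartingale}.
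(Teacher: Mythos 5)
You have correctly identified the obstacle, but your plan for resolving it does not work, so the proposal has a genuine gap. The exact log-MGF you write down cannot be absorbed into a ``generous constant'': expanding it at $s=0$ shows that the variance of $\frac12\left(\Vert z\Vert_M^2-\Expect\left[\Vert z\Vert_M^2\right]\right)$ equals $\frac12\sum_i\lambda_i^2+\sum_i\tilde v_i^2=\frac12\Vert A\Vert_F^2+\mu_z^\top M\Sigma_z M\mu_z$, and a sub-exponential parameter $\nu$ (in the sense of the footnote) must dominate the standard deviation. Bounding $\lambda_i\le\lambda_\mathrm{max}(A)$ term by term only gives $\sum_i\lambda_i^2\le\lambda_\mathrm{max}(A)\,\tr{A}$, which retains a trace/dimension factor, and it does nothing for $\sum_i\tilde v_i^2$, which depends on $\mu_z$ and is not controlled by $\lambda_\mathrm{max}(A)$ at all. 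Hence no absolute constant in place of $16$ closes the argument along your route: the direct noncentral quadratic-form computation genuinely produces a parameter involving $\Vert A\Vert_F$ and $\Vert(\Sigma_z^{1/2})^\top M\mu_z\Vert$ (a Hanson--Wright-type statement), not $\lambda_\mathrm{max}(A)$ alone.

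The paper obtains the stated parameter by a different mechanism that never touches the MGF of the quadratic form. After whitening, it observes that $\zeta\mapsto\Vert\zeta+c\Vert_A$ with $A=(\Sigma_z^{1/2})^\top M\Sigma_z^{1/2}$ is Lipschitz with constant $\sqrt{\lambda_\mathrm{max}(A)}$ \emph{independently of the shift $c$, the dimension, and the remaining eigenvalues}; Gaussian concentration for Lipschitz functions (\cite[Theorem 2.26]{wainwright2019high}) then makes the norm itself sub-Gaussian with variance proxy $\lambda_\mathrm{max}(A)$, and the generic ``square of a sub-Gaussian is sub-exponential with parameter $16\sigma^2$'' lemma (\cite[Lemma 1.12]{rigollet2023highdimensionalstatistics}) yields exactly $\nu=16\lambda_\mathrm{max}(A)$. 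That pass through the concentration of $\Vert z\Vert_M$ (rather than of $\Vert z\Vert_M^2$) is the idea your proposal is missing. It is worth noting that your exact computation makes the $\mu_z$- and Frobenius-dependence of the MGF visible, which the stated mean-free parameter suppresses; this signals that any proof of the proposition as written must rely on the Lipschitz/centering mechanism above (and that the squared-sub-Gaussian lemma is being applied there to a non-centered norm), so the discrepancy you ran into is informative rather than a mere bookkeeping nuisance.
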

\begin{proof}
    Define $\zeta = {\Sigma_z^{-\frac{1}{2}}}^\top \left(z - \mu_z\right)$, yielding $\Vert z \Vert_{M} = \Vert \zeta + {\Sigma_z^{-\frac{1}{2}}}^\top\mu_z\Vert_{{\Sigma_z^{\frac{1}{2}}}^\top M \Sigma_z^{\frac{1}{2}}}$ with $\zeta \sim\N(0, I_{n_z})$.
    Note that the $f(\zeta) = \frac1{\sqrt{2}}\Vert \zeta + \mu_z\Vert_{M}$ is Lipschitz continuous with Lipschitz constant $L = \sqrt{\lambda_\mathrm{max}(M)}$. 
    Thus, by \cite[Theorem 2.26]{wainwright2019high} 
    \begin{equation}
        \Vert \zeta + \mu_z\Vert_M \sim \subG{\lambda_\mathrm{max}\left({\Sigma_z^{\frac{1}{2}}}^\top M \Sigma_z^{\frac{1}{2}}\right)}.
    \end{equation}
    Applying \cite[Lemma 1.12]{rigollet2023highdimensionalstatistics} we obtain the result.
\end{proof}
\subsection{Sample complexity Upper bounds for isotropic Gaussian and oracle excitations}\label{secAppendix:Proofs_upperBound_Gaussian_Oracle}
In the following, we present the full version of Corollary~\ref{co:sampleComplexityRandomOracle} along with its proof.
\begin{corollary}
    Consider the same setup as in Theorem~\ref{th:sampleComplexityUpperBound_General}.
    If $u(t)\simiid\N(0, \frac{\gamma_u^2}{n_u}I_{n_u})$ then Algorithm~\ref{alg:ID} yields an estimate $\hat{\theta}$ satisfying $\Prob\left[\hat \theta \neq \theta_*\right] \le \delta$ and with probability at least $1-\delta$ terminates no later than when $k$ satisfies 
    \begin{align}
        k\bigg(\gamma_u^2 \min_{p\in\Delta_N}\lambda_\mathrm{mean}\Big(\sum_{i=1}^{N}p_i W_{i}(\tau)\Big)  &+ 
        \min_{i \in [1,N]} \frac{1}{\tau}\tr{S_w(\tau)^\top Q_{I_{n_x}}^{i} S_w(\tau)}\bigg)
        \\ &\qquad \qquad \qquad \qquad  \ge 8\left(\frac{2}{\tau}+\frac{1}{2\eta \tau}\right) \log\left(\frac{N}{\delta}\right).
    \end{align}
    Further, if the optimal oracle excitation input $U^*$ defined in \eqref{eq:oracleUTau} is applied the estimate $\hat{\theta}$ satisfies $\Prob[\hat \theta \neq \theta_*] \le \delta$, and with probability $1-\delta$ Algorithm~\ref{alg:ID} terminates no later than when $k$ satisfies 
    \begin{align}
        k \bigg(\gamma_u^2 \min_{p\in \Delta_N}\lambda_\mathrm{max}\Big(\sum_{i=1}^N p_i W_i(\tau)\Big) &+
        \min_{i\in [1, N]}\frac{1}{\tau} \tr{S_w(\tau)^\top Q_{I_{n_x}}^{i} S_w(\tau)}
        \bigg) \\
        & \qquad \qquad \qquad \qquad \ge 8\left(\frac{2}{\tau}+\frac{1}{2\eta\tau }\right) \log\left(\frac{N}{\delta}\right).
        % ^{-1} .
    \end{align}
    For both results we have 
    \begin{equation}
        \eta \tau \le \frac{1}{512 \lambda_\mathrm{max}\left(\left(\Sigma_{\Delta_i}^{\frac{1}{2}}(\tau)\right)^\top \Sigma_w^{-1} \Sigma_{\Delta_i}^{\frac{1}{2}}(\tau) \right)},
    \end{equation}  
    where $\Sigma_{\Delta_i}^\frac{1}{2}(\tau)$ is defined in \eqref{eq:defVarianceError}.
\end{corollary}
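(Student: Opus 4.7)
The plan is to derive both bounds as immediate corollaries of Theorem~\ref{thMeta:sampleComplexityUpperBound_General} by substituting in the \ac{PE} coefficients established in Lemmas~\ref{le:PEgaussianExcitation} and~\ref{le:PEoptimalExcitation}. The structure of Algorithm~\ref{alg:ID} already reduces the sample complexity analysis to verifying \ac{PE} in each episode and plugging the resulting coefficients into the stopping time condition $\tau \sum_{j=1}^{k} c_{u_j}(\tau)\gamma_u^2 + c_w(\tau) \ge 8(2 + \tfrac{1}{2\eta})\log(N/\delta)$ from Theorem~\ref{thMeta:sampleComplexityUpperBound_General}.

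First I would handle the isotropic Gaussian case. Since each episode draws fresh i.i.d. excitations $u(t)\simiid \N(0,\tfrac{\gamma_u^2}{n_u}I_{n_u})$, Lemma~\ref{le:PEgaussianExcitation} applies uniformly in the episode index $j$, giving $c_{u_j}(\tau) = c_u^{\mathrm{rand}}(\tau) = \min_{p\in\Delta_N}\lambda_{\mathrm{mean}}(\sum_i p_i W_i(\tau))$ and $c_w(\tau) = \min_i \tfrac{1}{\tau}\tr{S_w(\tau)^\top Q_{\Sigma_w^{-1}}^i(\tau) S_w(\tau)}$ for every $j$. Consequently $\tau \sum_{j=1}^{k} c_{u_j}(\tau)\gamma_u^2 = T \gamma_u^2 c_u^{\mathrm{rand}}(\tau)$, and substituting into Theorem~\ref{thMeta:sampleComplexityUpperBound_General} immediately yields the stated bound. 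The oracle case is analogous: Lemma~\ref{le:PEoptimalExcitation} provides $c_u^{\mathrm{opt}}(\tau) = \min_{p\in\Delta_N}\lambda_{\max}(\sum_i p_i W_i(\tau))$ for any $x(0)$, so the same argument applies even though the oracle excitation $U^*(x(k\tau))$ depends on the state at the start of each episode.

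The technical point that needs care is that the PE coefficients in Lemma~\ref{le:PEoptimalExcitation} hold for \emph{any} initial state $x(0)$ (this is explicitly stated in Definition~\ref{def:PE}). This is essential because in the sequential scheme the state $x(k\tau)$ at the beginning of episode $k$ is itself a random variable shaped by the previous episodes, yet the lower bound on $\tfrac{1}{\tau}\sum_t \Expect[\Vert \Delta A_i x(t)+\Delta B_i u(t)\Vert^2_{\Sigma_w^{-1}}]$ remains uniform. Hence the martingale argument underlying Theorem~\ref{thMeta:sampleComplexityUpperBound_General} (via Proposition~\ref{prop:NewSuperMartingale}) goes through conditionally on $\mathcal{F}_k$ regardless of $x(k\tau)$, and the per-episode contribution is at least $\tau (c_u^{\mathrm{opt}}(\tau)\gamma_u^2 + c_w(\tau))$.

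The main obstacle, modest as it is, is the choice of $\eta$: Proposition~\ref{prop:NewSuperMartingale} requires $\eta \tau \le 1/\bigl(512\,\lambda_{\max}\bigl((\Sigma_{\Delta_i}^{1/2}(\tau))^\top \Sigma_w^{-1} \Sigma_{\Delta_i}^{1/2}(\tau)\bigr)\bigr)$ for every $i\in[1,N]$, and $\Sigma_{\Delta_i}(\tau)$ as defined in \eqref{eq:defVarianceError} depends on the input covariance $\Sigma_u$. For the Gaussian case $\Sigma_u = \tfrac{\gamma_u^2}{n_u}I_{n_u}$, whereas for the oracle case the input has no random component so $\Sigma_u = 0$, giving only the $\Sigma_w$ contribution. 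In both cases the bound on $\eta$ stated in the corollary is well-defined and instance-specific; once this is fixed, the remaining derivation reduces to substitution and algebraic rearrangement.
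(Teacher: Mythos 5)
Your proposal is correct and matches the paper's proof exactly: the paper also obtains both bounds by substituting the \ac{PE} coefficients from Lemmas~\ref{le:PEgaussianExcitation} and~\ref{le:PEoptimalExcitation} into the stopping-time condition of Theorem~\ref{thMeta:sampleComplexityUpperBound_General}. Your added remarks on the uniformity in $x(0)$ and the instance-specific choice of $\eta$ are consistent elaborations of what the paper leaves implicit.
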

\begin{proof}
The result follows directly by using Theorem~\ref{thMeta:sampleComplexityUpperBound_General} with the \ac{PE} coefficients derived in Lemmas~\ref{le:PEgaussianExcitation} and~\ref{le:PEoptimalExcitation}.
\end{proof}
\subsection{Sample complexity upper bound for Algorithm~\ref{alg:inputDesign}}\label{appSection:UpperAlgo}
In the following, we present the full version of Theorem~\ref{th:sampleComlexityAlgo} along with its proof. Note that, the Theorem is presented in full generality, to provide theoretical insights. 
Depending on the update rule for $\rho_k$ corollaries can be derived directly.  
\begin{theorem} %\label{th:sampleComlexityAlgo} 
    Let Algorithm~\ref{alg:ID} be used with the excitation input derived by Algorithm~\ref{alg:inputDesign}. Then, the estimate $\hat \theta$ of Algorithm~\ref{alg:ID} satisfies $\Prob\left[\hat \theta \neq \theta_*\right] \le \delta$ and with probability  at least $1-\delta$ terminates no later than when $k$ satisfies
    \begin{equation}
        \sum_{j= 1}^{k} c_{u_j}^\mathrm{Alg}(\tau) + c_w(\tau) \ge 8\left(\frac{2}{\tau}+\frac{1}{2\eta \tau }\right)\log\left(\frac{N}{\delta}\right),
    \end{equation}
    where  
    \begin{equation}
        \eta \tau \le \frac{1}{512 \lambda_\mathrm{max}\left(\left(\Sigma_{\Delta_i}^{\frac{1}{2}}(\tau)\right)^\top \Sigma_w^{-1} \Sigma_{\Delta_i}^{\frac{1}{2}}(\tau) \right)},
    \end{equation}
    with $\Sigma_{\Delta_i}^\frac{1}{2}(\tau)$ as defined in \eqref{eq:defVarianceError} and 
    $c_{u_j}^\mathrm{Alg}(\tau)$ and $c_w(\tau)$ as defined in~\eqref{eq:PEALgo}.
\end{theorem}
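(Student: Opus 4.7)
The plan is to cast this result as a direct corollary of the modular framework established in Theorem~\ref{thMeta:sampleComplexityUpperBound_General}, so the task reduces to verifying that the excitation produced by Algorithm~\ref{alg:inputDesign} triggers the hypotheses of that theorem on each episode with the claimed coefficients $c_{u_j}^{\mathrm{Alg}}(\tau)$, $c_w(\tau)$.

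First I would dispatch the correctness claim $\Prob[\hat\theta \neq \theta_*]\le\delta$. This part is entirely independent of how the input is generated: the likelihood-ratio process $L_{\theta_i,\theta_*}(t)$ is a non-negative martingale under $\Prob_{\theta_*}$ regardless of the (sequentially chosen) input sequence, so the maximal inequality plus a union bound over the $N$ competing hypotheses gives the $\delta/N$ per-hypothesis budget exactly as in the proof of Theorem~\ref{thMeta:sampleComplexityUpperBound_General}. Since Algorithm~\ref{alg:ID} is unchanged, this step carries over verbatim.

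Next I would analyze the stopping time. The key observation is that Algorithm~\ref{alg:inputDesign} selects $u_k^*$ as a measurable function of $\mathcal{F}_{k-1} = \{x(0),u(0),\ldots,u(t_{k-1}-1),x(t_{k-1})\}$ together with fresh randomness (the sampling of $\hat\theta_k$ via exponential weights and the injected Gaussian component $u_\eta$). Conditionally on $\mathcal{F}_{k-1}$, the within-episode input $u_k^*$ therefore has exactly the form analyzed in Lemma~\ref{le:PEInputDesgin}: a convex mixture, weighted by $\sqrt{1-\rho_k}$ and $\sqrt{\rho_k}$, of a certainty-equivalence optimizer drawn from the exponential-weight distribution and an isotropic Gaussian component. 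Invoking Lemma~\ref{le:PEInputDesgin} conditionally on $\mathcal{F}_{k-1}$ therefore yields
\[
\frac{1}{\tau}\sum_{t=t_{k-1}}^{t_k-1}\Expect\!\left[\Vert \Delta A_i x(t)+\Delta B_i u(t)\Vert^2_{\Sigma_w^{-1}}\,\big|\,\mathcal{F}_{k-1}\right]\;\ge\; c_{u_k}^{\mathrm{Alg}}(\tau)\gamma_u^2 + c_w(\tau)
\]
for every $\theta_i\in\SSet\setminus\{\theta_*\}$, where $p_k$ entering $c_{u_k}^{\mathrm{Alg}}$ is an upper bound on $\Prob[\hat\theta_k\neq\theta_*\mid \mathcal{F}_{k-1}]$, which can be obtained from the exponential-weights rule (or, if one prefers a non-adaptive bound, from $p_k\le 1$).

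With this conditional PE inequality in hand, the supermartingale construction from Proposition~\ref{prop:NewSuperMartingale} applies episode-by-episode without modification: the proposition only requires the conditional moment-generating-function control of $\xi_{\theta_i}(t)-\xi_{\theta_*}(t)$ given $\mathcal{F}_{k-1}$, and the Gaussian decomposition $u(t)=u_{\mathrm{d}}(t)+u_{\mathrm{r}}(t)$ used there matches the split of $u_k^*$ into its CE-component (deterministic given $\mathcal{F}_{k-1}$ and $\hat\theta_k$) and its Gaussian-noise component. Chaining the supermartingale across episodes and applying the maximal inequality exactly as in Theorem~\ref{thMeta:sampleComplexityUpperBound_General} yields the stopping criterion $\tau\big(\sum_{j=1}^k c_{u_j}^{\mathrm{Alg}}(\tau)+c_w(\tau)\big)\ge 8(2+\tfrac{1}{2\eta})\log(N/\delta)$, and the constraint on $\eta\tau$ is inherited unchanged from Proposition~\ref{prop:NewSuperMartingale}.

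The only genuinely delicate point, and the step I would expect to require the most care, is the measurability/filtration bookkeeping around the randomized estimator $\hat\theta_k$. One must be careful that the sampling randomness used in line~2 of Algorithm~\ref{alg:inputDesign} is independent of the future process noise $w(t)$ for $t\ge t_{k-1}$, so that the Gaussian-tail arguments underlying Lemma~\ref{lem:AuxUpperBound} and Proposition~\ref{prop:SubExpOfLipschitz} still go through conditionally on $\mathcal{F}_{k-1}$ augmented by $\hat\theta_k$. Once this is spelled out, the rest of the proof is mechanical reuse of the earlier results.
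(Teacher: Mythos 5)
Your proposal is correct and follows essentially the same route as the paper, whose proof is a one-line invocation of Theorem~\ref{thMeta:sampleComplexityUpperBound_General} together with the \ac{PE} coefficients from Lemma~\ref{le:PEInputDesgin}. The extra care you take with the filtration and measurability of the randomized estimate $\hat\theta_k$ is a sound elaboration of what the paper leaves implicit, not a different argument.
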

\begin{proof}
    The result follows directly by using Theorem~\ref{thMeta:sampleComplexityUpperBound_General} and the \ac{PE} coefficients derived in Lemma~\ref{le:PEInputDesgin}.
\end{proof}
% !TeX root = ..\main.tex
\section{Asymptotic optimality of Algorithm~\ref{alg:inputDesign}}  \label{app:asmptOpt}
The theorem below is the full version of Theorem~\ref{th:asymptOptMain}. 
\begin{theorem} \label{th:asymptOpt}
Consider Algorithm~\ref{alg:ID} where the excitation input is derived by Algorithm~\ref{alg:inputDesign}, with 
\begin{equation}
        \eta \tau \le \frac{1}{512 \lambda_\mathrm{max}\left(\left(\Sigma_{\Delta_i}^{\frac{1}{2}}(\tau)\right)^\top \Sigma_w^{-1} \Sigma_{\Delta_i}^{\frac{1}{2}}(\tau) \right)} 
    \end{equation}
where $\Sigma_{\Delta_i}^{\frac{1}{2}}(\tau) $ is defined in \eqref{eq:defVarianceError}.
Let $\rho_0 = c_0 \in (0, 1)$, $\lim_{k\to \infty}\rho_k = c_\infty \in [0, c_0]$ and $\rho_{k-1} \ge \rho_k \ge (1-\varepsilon) \rho_{k-1}$, where 
\begin{equation}
    \varepsilon < \frac{\left(1-  e^{-\frac{\eta \tau}{4} c_{w}(\tau)} \right)(1- \rho_{0})}{{\rho_{0}e^{-\frac{\eta \tau}{4} c_{w}(\tau)}}}.
\end{equation}
Then for any $\delta \in (0,1)$ there exists a $k_\delta$, such that with probability  at least $1-\delta$ Algorithm~\ref{alg:ID} terminates after at most $k_\delta$ epochs with an estimate $\hat \theta$ that satisfies $\Prob[\hat \theta \neq \theta_*] \le \delta$ 
and 
\begin{equation}
    \lim_{\delta \to 0} \frac{8 \left(\frac{1}{2\eta \tau} +\frac{1}{\tau} \right) \log\left(\frac{N}{\delta}\right) }{k_\delta}\le c_w(\tau) + \chi \gamma_u^2(1-c_\infty) c_u^\mathrm{opt}(\tau) ,
\end{equation}
where $\chi$ is a constant that can be chosen arbitrarily close to  $1$ as $\delta \to 0$.
\end{theorem}
\begin{proof}
    First note that by Propositions~\ref{prop:convergence} and \ref{prop:decrease} we know that for any $k_0 >0$ there exists a $\chi(k_0)>0$ such that  
    \begin{equation} \label{eq:lowerBoundPE}
        \forall k \ge k_0 : (1-\rho_k)(1-p_k) c_{u}^\mathrm{opt} + \rho_k c_u^\mathrm{rand} \ge \chi(k_0) (1-c_\infty)  c_u^\mathrm{opt},
    \end{equation}
    i.e., the excitation provided by Algorithm~\ref{alg:inputDesign} is $\chi(k_0)$-optimal. 
    Importantly, $\chi(k_0) \to 1$ as $k_0$ grows, hence we can push $\chi(k_0)$ arbitrarily close to $1$  by increasing $k_0$.
    Note that, correctness of Algorithm~\ref{alg:ID} follows immediately from the proof of Theorem~\ref{th:sampleComlexityAlgo}.
    Whenever $k \ge k_0$ we can use \eqref{eq:lowerBoundPE} to bound the termination time of Algorithm~\ref{alg:ID} in \eqref{eq:proofSampleComplBurnInCond} by 
    \begin{align}
         8 \left(\frac{1}{2\eta} +1 \right) \log\left(\frac{N}{\delta}\right) &\le \tau \left(kc_w + \sum_{j=1}^{k_0-1} c_{u_j}^\mathrm{Alg}(\tau) \gamma_u^2 + \sum_{j=k_0}^k \chi(k_0) (1-c_\infty) u_{u}^\mathrm{opt} \gamma_u^2 \right) \\
         &= \tau\left( k c_w + \gamma_u^2 \left((k- k_0)  \chi(k_0) (1-c_\infty) c_u^\mathrm{opt} + \sum_{j=1}^{k_0-1} c_{u_j}^\mathrm{Alg}(\tau) \right)\right)
    \end{align}
    Taking the limit $\delta \to 0$, we realize that $\delta \to 0$ requires $k \to \infty$ since all quantities on the right-hand side besides $k$ are fixed. As $k \to \infty$, the term $\sum_{j=1}^{k_0-1} c_{u_j}^\mathrm{Alg}(\tau)$ is dominated by the rest of the inequality, yielding the result.
\end{proof}
%  Intermediate Results
\subsection{Intermediate Results for the Proof of Theorem~\ref{th:asymptOpt}}
\begin{corollary}\label{co:ConvergenceEstimate}
    Consider the unknown system~\eqref{eq:TrueSysEvo}, set $\SSet$ as defined in~\eqref{eq:defSet} and Algorithm~\ref{alg:inputDesign}, where the input applied in round $k$ is \ac{PE} with $c_{u_j}(\tau)$ and $c_{w_j}(\tau)$. Let $\eta$ chosen to satisfy
    \begin{equation}
        \eta \tau \le \frac{1}{512 \lambda_\mathrm{max}\left(\left(\Sigma_{\Delta_i}^{\frac{1}{2}}(\tau)\right)^\top \Sigma_w^{-1} \Sigma_{\Delta_i}^{\frac{1}{2}}(\tau) \right)} 
    \end{equation}
    where $\Sigma_{\Delta_i}^{\frac{1}{2}}(\tau) $ is defined in \eqref{eq:defVarianceError}.
    Then the estimate $\hat \theta_k$ drawn in Line 2 of Algorithm~\ref{alg:inputDesign} satisfies 
    \begin{equation}
        \Prob\left[\hat \theta_k  \neq \theta_i\right] \le N\exp\left(-\frac{\eta \tau}{4}\sum_{j=1}^{k}c_{u_j}(\tau) \gamma_u^2 + c_{w}(\tau)\right).
    \end{equation}
\end{corollary}
\begin{proof}
    For the proof of this result, we overload notation and define 
    \begin{equation} 
        \varepsilon_{\theta_i}(t_0, t_1) \coloneqq \sum_{t = t_0}^{t_1-1} \Vert x(t+1) - A_i x(t) - B_i u(t)\Vert_{\Sigma_w^{-1}}^2 . 
    \end{equation}
    First note, that by similar arguments as in \cite[Lemma B.1]{muehlebach2025} we have 
    \begin{equation}\label{eq:ineqID}
        \Prob\left[\hat \theta_k = \theta_i\right] \le \Expect\left[e^{-\eta\big(\varepsilon_{\theta_i}(0, k\tau) - \varepsilon_{\theta_*}(0, k\tau) \big)}\right].
    \end{equation}
    By using Proposition~\ref{prop:NewSuperMartingale} we obtain 
    \begin{equation}
        \Prob\left[\hat \theta_k = \theta_i\right] \le 
 \Expect\left[e^{-\eta\big(\varepsilon_{\theta_i}(0, k\tau) - \varepsilon_{\theta_*}(0, k\tau) \big)}\right] \le \exp\left(-\frac{\eta \tau}{4}\sum_{j=1}^{k}c_{u_j}(\tau) \gamma_u^2 + c_{w}(\tau)\right).
    \end{equation}
\end{proof}
Note that the result can also be used to derive a similar result for the estimation of the true system using the softmax \ac{MLE}, similar to the results in \cite{muehlebach2025}. 

\begin{prop} \label{prop:convergence}
    Consider Algorithm~\ref{alg:inputDesign} with $\rho_0 = c_0 \in (0, 1)$ and $\lim_{k\to \infty} \rho_k = c_\infty \in [0, c_0]$. Let $\eta$ be chosen to satisfy
    \begin{equation}
        \eta \tau \le \frac{1}{512 \lambda_\mathrm{max}\left(\left(\Sigma_{\Delta_i}^{\frac{1}{2}}(\tau)\right)^\top \Sigma_w^{-1} \Sigma_{\Delta_i}^{\frac{1}{2}}(\tau) \right)} 
    \end{equation}
    where $\Sigma_{\Delta_i}^{\frac{1}{2}}(\tau) $ is defined in \eqref{eq:defVarianceError}. 
    Then \[\lim_{k \to \infty} c_{u_k}^\mathrm{Alg} \ge (1- c_\infty) c_u^\mathrm{opt}.\]
\end{prop}
\begin{proof}
This proof is carried out by contradiction. Suppose there exists some $\tilde{c} \in [0, 1)$ such that 
\begin{equation}
    \lim_{k \to \infty} c_{u_k}^\mathrm{Alg} = \tilde c (1- c_\infty) c_u^\mathrm{opt} < (1- c_\infty) c_u^\mathrm{opt}.
\end{equation}
plugging in the definition of $c_{u_k}^\mathrm{Alg}$ we see this requires
\begin{equation} \label{eq:convergenceinterm1}
    \lim_{k\to \infty} (1- p_k)  = \tilde{c}   \Leftrightarrow \lim_{k\to \infty} p_k = 1-\tilde{c}
\end{equation}
Invoking Corollary~\ref{co:ConvergenceEstimate} we that 
\begin{equation}\label{eq:defpk}
    p_k \le N \exp\left(-\frac{\eta \tau}{4} \left(k c_w(\tau) + \sum_{j=1}^k c^\mathrm{Alg}_{u_j}(\tau) \gamma_u^2 \right)\right).
\end{equation}
Hence, \eqref{eq:convergenceinterm1} requires 
\begin{equation}\label{eq:convergenceinterm2}
    \lim_{k\to \infty} kc_w +  \sum_{j=1}^k c^\mathrm{Alg}_{u_j}(\tau) \gamma_u^2  = \frac{4}{\tau \eta} \log\left(\frac{1-\tilde c}{N}\right).
\end{equation}
Since $c_w \ge 0$ we require $c_w = 0$. This is unlikely but for the remainder, we consider this case to be true. In any case, we additionally require 
\begin{equation}\label{eq:convergenceinterm3}
    \lim_{k\to \infty} c_{u_k}^\mathrm{Alg} = 0
\end{equation}
since if this is not true, the sum in \eqref{eq:convergenceinterm2} does not converge as $k \to \infty$. 
Going back to the definition of $c_{u_k}^\mathrm{Alg}$ we see that \eqref{eq:convergenceinterm2} can only be satisfied if the conditions 
\begin{itemize}
    \item $\lim_{k\to \infty} \rho_k = 0$
    \item $\lim_{k\to \infty} (1-\rho_k)(1-p_k) = 0$
\end{itemize}
are both met. Clearly, if $c_\infty \neq 0$, the first condition is not met, yielding the result for all $c_\infty \in (0, c_0]$. 
For the case where $c_\infty = 0$ the second condition requires that $\lim_{k\to \infty} p_k = 1$. Going back to \eqref{eq:defpk}, we see that for this to happen, we require  
\begin{equation}
    \lim_{k\to \infty} \sum_{j=1}^k c_{u_j}^\mathrm{Alg} \gamma_u^2 = 0. 
\end{equation}
Since we selected $\rho_0 = c_0>0$ we obtain $c_{u_0}^\mathrm{Alg} \ge c_0 c_u^\mathrm{rand} >0$ which contradicts the condition. Hence, there cannot exist a $\tilde c \in [0, 1)$  for which 
\begin{equation}
    \lim_{k \to \infty} c_{u_k}^\mathrm{Alg} = \tilde c (1- c_\infty) c_u^\mathrm{opt} < (1- c_\infty) c_u^\mathrm{opt}.
\end{equation}
which completes the proof.
\end{proof}
\begin{prop} \label{prop:decrease}
Consider Algorithm~\ref{alg:inputDesign} with $\rho_0 = c_0 \in (0, 1)$ and $\rho_{k-1} \ge \rho_k \ge (1-\varepsilon) \rho_{k-1}$. 
Let $\eta$ be chosen to satisfy
    \begin{equation}
        \eta \tau \le \frac{1}{512 \lambda_\mathrm{max}\left(\left(\Sigma_{\Delta_i}^{\frac{1}{2}}(\tau)\right)^\top \Sigma_w^{-1} \Sigma_{\Delta_i}^{\frac{1}{2}}(\tau) \right)} 
    \end{equation}
where $\Sigma_{\Delta_i}^{\frac{1}{2}}(\tau) $ is defined in \eqref{eq:defVarianceError}.
If $\varepsilon < \frac{\left(1-  e^{-\frac{\eta \tau}{4} c_{w}(\tau)} \right)(1- \rho_{0})}{{\rho_{0}e^{-\frac{\eta \tau}{4} c_{w}(\tau)}}}$, then $c_{u_k}^\mathrm{Alg} \ge c_{u_{k-1}}^\mathrm{Alg}$ for all $k \ge 1$.
\end{prop}
\begin{proof}
    Plugging in the definitions of the PE coefficients, the statement of the theorem can be rewritten as 
    \begin{align}
        (1-\rho_k) (1-p_k) c_{u}^\mathrm{opt} + \rho_k c_{u}^\mathrm{rand}  >  (1-\rho_{k-1}) (1-p_{k-1}) c_{u}^\mathrm{opt} + \rho_{k-1} c_{u}^\mathrm{rand}.   
    \end{align}
    Rearanging terms, we obtain 
    \begin{alignat}{2}
        & c_{u}^\mathrm{opt}\Big( (1-\rho_k) (1-p_k) - (1-\rho_{k-1}) (1-p_{k-1}) \Big) &&> c_{u}^\mathrm{rand} (\rho_{k-1} - \rho_k) \\
        \Leftrightarrow \quad & c_{u}^\mathrm{opt}\Big(1 - \rho_k - p_k + p_k\rho_k - (1 - \rho_{k-1} - p_{k-1} + p_{k-1}\rho_{k-1} )\Big) &&> c_{u}^\mathrm{rand} (\rho_{k-1} - \rho_k)  \\
        \Leftrightarrow \quad & c_{u}^\mathrm{opt} (\rho_{k-1} - \rho_k + p_{k-1} - p_k + p_k \rho_k - p_{k-1} \rho_{k-1}) &&>  c_{u}^\mathrm{rand} (\rho_{k-1} - \rho_k) \label{eq:AsymptOptimalInterm1}  
    \end{alignat}
    Clearly, since $\rho_k$ is a non-increasing sequence, \eqref{eq:AsymptOptimalInterm1} is satisfied whenever 
    \begin{equation}
        p_{k-1} - p_k + p_k \rho_k - p_{k-1} \rho_{k-1}>0 \label{eq:AsymptOptimalInterm2}
    \end{equation}
    Note that, by Corollary~\ref{co:ConvergenceEstimate} we have $p_k = N\exp\left(-\frac{\eta \tau}{4}\sum_{j=1}^{k}c_{u_j}(\tau) \gamma_u^2 + c_{w}(\tau)\right)$ and hence 
    \begin{equation}
        p_k = p_{k-1} e^{-\frac{\eta \tau}{4} (c_{u_k}(\tau) \gamma_u^2 + c_{w}(\tau))}.
    \end{equation}
    Using this, and the fact that $\rho_k \ge \rho_{k-1}(1-\varepsilon)$ we have that  \eqref{eq:AsymptOptimalInterm2} is implied by
    \begin{align}
        p_{k-1}\left( 1 - e^{-\frac{\eta \tau}{4} (c_{u_k}(\tau) \gamma_u^2 + c_{w}(\tau))}+ \rho_{k-1} (1-\varepsilon) e^{-\frac{\eta \tau}{4} (c_{u_k}(\tau) \gamma_u^2 + c_{w}(\tau))}   - \rho_{k-1}  \right) > 0,
    \end{align}
    which can only be satisfied if $ 1 - e^{-\frac{\eta \tau}{4} (c_{u_k}(\tau) \gamma_u^2 + c_{w}(\tau))}+ \rho_{k-1} (1-\varepsilon) e^{-\frac{\eta \tau}{4} (c_{u_k}(\tau) \gamma_u^2 + c_{w}(\tau))}  - \rho_{k-1} >0$. 
    Using standard algebraic manipulations, we can rewrite this as
    \begin{equation}
        \left(1-  e^{-\frac{\eta \tau}{4} (c_{u_k}(\tau) \gamma_u^2 + c_{w}(\tau))} \right)(1- \rho_{k-1}) >  \rho_{k-1} \varepsilon  e^{-\frac{\eta \tau}{4} (c_{u_k}(\tau) \gamma_u^2 + c_{w}(\tau))}.
    \end{equation}
    Solving this for $\varepsilon$ we obtain 
    \begin{equation}
        \varepsilon < \frac{\left(1-  e^{-\frac{\eta \tau}{4} (c_{u_k}(\tau) \gamma_u^2 + c_{w}(\tau))} \right)(1- \rho_{k-1})}{\rho_{k-1}e^{-\frac{\eta \tau}{4} (c_{u_k}(\tau) \gamma_u^2 + c_{w}(\tau))}}.
    \end{equation}
    Since we require an upper bound for all $k\ge 1$, we use the (loose) lower bound to the right-hand side 
    \begin{equation}
         \frac{\left(1-  e^{-\frac{\eta \tau}{4} (c_{u_k}(\tau) \gamma_u^2 + c_{w}(\tau))} \right)(1- \rho_{k-1})}{\rho_{k-1}e^{-\frac{\eta \tau}{4} (c_{u_k}(\tau) \gamma_u^2 + c_{w}(\tau))}} >  \frac{\left(1-  e^{-\frac{\eta \tau}{4} c_{w}(\tau)} \right)(1- \rho_{0})}{{\rho_{0}e^{-\frac{\eta \tau}{4} c_{w}(\tau)}}}, 
    \end{equation}
    to obtain that if $\varepsilon < \frac{\left(1-  e^{-\frac{\eta \tau}{4} c_{w}(\tau)} \right)(1- \rho_{0})}{{\rho_{0}e^{-\frac{\eta \tau}{4} c_{w}(\tau)}}}$ we have $c_{u_{k}}^\mathrm{Alg} > c_{u_{k-1}}^\mathrm{Alg}$ for all $k \ge 1$.
\end{proof}
Note that by Proposition~\ref{prop:decrease} as long as $\rho_0 \neq 1$ we can show that the sequence $\{c_{u_{k}}^\mathrm{Alg}\}_{k\ge 0}$ increases at every timestep. However, even if we initially collect some data using random excitations (i.e., we select $\rho_0 = 1$), we can still guarantee that after a potential initial decay $\{c_{u_{k}}^\mathrm{Alg}\}_{k\ge 1}$ increases at every timestep. 

\section{Auxiliary results} \label{secAppendix:AdditionalResults}
The following result is a consequence of the superposition principle in \ac{LTI} systems and leverages that all random quantities are independent. It is a core ingredient in the proofs of the sample complexity lower and upper bounds.
\begin{lemma} \label{lem:Rewritten}
    Consider the system \eqref{eq:TrueSysEvo} with $x(0) \in \R^{n_x}$. Let $u(t) = (1-\rho) u_\mathrm{d}(t) + \rho u_\mathrm{p}(t)$, with $\rho\in[0,1]$,  $u_\mathrm{p}(t) \simiid \N(0, \sigma_u^2 I_{n_u})$ and a deterministic control input $u_\mathrm{d}(t)$.
    Then for any fixed matrix $M\in \mathbb{S}_{++}^{n_x}$, any $i \in [1,N]$, and any $\tau\ge0$ it holds that 
        \begin{equation}
            \label{eq:Lemma_appendix_statement}
            \begin{aligned} 
                \sum_{t=0}^{\tau}\Expect \left[\left\Vert
                    \Delta A_i x(t) + \Delta B_i u(t)\right\Vert _{M}^2\right]  &= (1-\rho)^2
                    U_\mathrm{d}^\top W_i(\tau) U_\mathrm{d} + 2(1-\rho) U_\mathrm{d}^\top m_i(x_0) + c_i(x(0)) \\
                    & \quad + \sigma_u^2 \rho^2 
                    % \sum_{j=1}^{n_u \tau}\lambda_j\big(W_i(\tau)\big) + \sum_{j=1}^{n_x \tau}\lambda_j\big(S_w(\tau)^\top Q_M^i(\tau) S_w(\tau) \big),
                    \tr{W_i(\tau)} + \tr{S_w(\tau)^\top Q_M^i(\tau) S_w(\tau)},
            \end{aligned}
        \end{equation}
        where
        \begin{align*}
            W_i(\tau) &\coloneqq R_M^i(\tau)+ S_u(\tau)^\top Q_M^i(\tau) S_u(\tau) + N_M^i(\tau)S_u(\tau) + (N_M^i(\tau)S_u(\tau))^\top \\
            m_i(x(0)) &\coloneqq \left( S_u(T)^\top Q_M^i(\tau) S_w(\tau) X(0) +   N_M^i(\tau) S_w(\tau) X(0) \right) \\
            c_i(x(0)) &\coloneqq \sum_{t=0}^{\tau} \left(\Delta A_i \sum_{s=0}^{t-1} A_*^s x(0)\right)^\top M \left(\Delta A_i \sum_{s=0}^{t-1} A_*^s x(0)\right)\\
            X(0)^\top &\coloneqq \begin{bmatrix}
                (\Sigma_w^{-\frac12} x)^\top & 0 & \dots & 0  
            \end{bmatrix}.
        \end{align*}
        Furthermore, 
    \begin{itemize}
        \item If $x(0) = 0$ then it holds for all $\rho \in [0, 1]$ that 
        \begin{equation}
            \label{eq:lemma_statement1}
            \begin{aligned}
                \sum_{t=0}^{\tau}\Expect \left[\left\Vert
                    \Delta A_i x(t) + \Delta B_i u(t) \right\Vert _{M}^2\right]  & \ge (1-\rho)^2 U_\mathrm{d}^\top W_i(\tau) U_\mathrm{d}  +\sigma_u^2 \rho^2 
                    \tr{W_i(\tau)} \\ &\quad+ \tr{S_w(\tau)^\top Q_M^i(\tau) S_w(\tau)}
            \end{aligned}
        \end{equation}
        \item If $\rho = 1$ then it holds for all $x(0) \in \R^{n_x}$ that 
        \begin{equation}
            \label{eq:lemma_statement2}
            \begin{aligned}
                \sum_{t=0}^{\tau}\Expect \left[\left\Vert
                    \Delta A_i x(t) + \Delta B_i u(t) \right\Vert _{M}^2\right]  & \ge \sigma_u^2 \tr{W_i(\tau)} + \tr{S_w(\tau)^\top Q_M^i(\tau) S_w(\tau)}
            \end{aligned}
        \end{equation}
    \end{itemize}
\end{lemma}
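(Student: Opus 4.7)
The plan is to expand $x(t)$ using the recursion $x(t) = A_*^t x(0) + \sum_{s=0}^{t-1} A_*^{t-1-s} B_* u(s) + \sum_{s=0}^{t-1} A_*^{t-1-s} w(s)$, stack the prediction errors $\Delta A_i x(t) + \Delta B_i u(t)$ into a single vector, and then compute the weighted squared norm as a single quadratic form before taking expectations. Using the Toeplitz notation in \eqref{eq:defToeplitz}, I would first observe that writing $\tilde W \sim \mathcal{N}(0, I_{n_x \tau})$ for the stacked whitened noise and $X_0$ for the stacked initial-condition contribution (the vector whose $t$-th block is $A_*^t x(0)$), the stack of state vectors equals $X_0 + S_u(\tau) U + S_w(\tau) \tilde W$. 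A direct check shows $X_0 = S_w(\tau) X(0)$ with $X(0)$ as defined in the lemma, which is the key identity explaining why $m_i$ and $c_i$ are written in terms of $S_w$.

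Next, I would write the sum of errors as
\begin{equation*}
\sum_{t=0}^{\tau-1} \|\Delta A_i x(t) + \Delta B_i u(t)\|_M^2 = X_{\mathrm{stk}}^\top Q_M^i(\tau) X_{\mathrm{stk}} + 2 U^\top N_M^i(\tau) X_{\mathrm{stk}} + U^\top R_M^i(\tau) U,
\end{equation*}
where $X_{\mathrm{stk}} = X_0 + S_u(\tau) U + S_w(\tau)\tilde W$, and then expand. The coefficient of the purely-$U$ part collects into $U^\top W_i(\tau) U$ by definition of $W_i(\tau)$, and the purely-noise part contributes $\tilde W^\top S_w(\tau)^\top Q_M^i(\tau) S_w(\tau) \tilde W$. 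Substituting $U = (1-\rho) U_\mathrm{d} + \rho U_\mathrm{p}$, using that $U_\mathrm{p}$ and $\tilde W$ are independent, zero-mean, with $\Expect[U_\mathrm{p} U_\mathrm{p}^\top] = \sigma_u^2 I$ and $\Expect[\tilde W \tilde W^\top] = I$, every cross term with an odd power of $U_\mathrm{p}$ or $\tilde W$ vanishes in expectation. The quadratic-in-$U$ term averages to $(1-\rho)^2 U_\mathrm{d}^\top W_i(\tau) U_\mathrm{d} + \rho^2 \sigma_u^2 \tr{W_i(\tau)}$, and the quadratic-in-$\tilde W$ term averages to $\tr{S_w(\tau)^\top Q_M^i(\tau) S_w(\tau)}$. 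The remaining deterministic cross terms involving $X_0 = S_w(\tau) X(0)$ combine into $2(1-\rho) U_\mathrm{d}^\top m_i(x(0))$ and $c_i(x(0)) = X_0^\top Q_M^i(\tau) X_0$, yielding the identity \eqref{eq:Lemma_appendix_statement}.

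For the two special cases, I would argue as follows. When $x(0) = 0$ we have $X_0 = 0$, so $m_i(x(0)) = 0$ and $c_i(x(0)) = 0$, and the identity already gives equality with the right-hand side of \eqref{eq:lemma_statement1}, in particular the stated lower bound. When $\rho = 1$, the deterministic component is absent from $U$, so only the $\rho^2 \sigma_u^2 \tr{W_i(\tau)}$ and noise trace terms remain from the random pieces; the deterministic leftover is $c_i(x(0)) \ge 0$, which can be dropped to obtain the inequality \eqref{eq:lemma_statement2}.

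The main obstacle is purely bookkeeping: carefully tracking the cross terms between $U$ and $X_{\mathrm{stk}}$ and collecting the $(N_M^i S_u)$ and $(N_M^i S_u)^\top$ contributions so that they assemble exactly into $W_i(\tau)$ as defined in \eqref{eq:defWi}, and verifying that the initial-condition cross terms match the stated expression for $m_i(x(0))$ (which requires the identity $X_0 = S_w(\tau) X(0)$ for the specific choice of $X(0)$ given in the lemma). No probabilistic subtleties arise because all randomness is Gaussian and independent, so taking expectations reduces to moment computations on $U_\mathrm{p}$ and $\tilde W$.
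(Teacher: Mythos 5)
Your proposal is correct and follows essentially the same route as the paper's proof: expand the state trajectory by superposition into initial-condition, deterministic-input, random-input, and noise contributions (equivalently, $X_{\mathrm{stk}} = S_w(\tau)X(0) + S_u(\tau)U + S_w(\tau)\tilde W$), use independence and zero-mean of $U_\mathrm{p}$ and $\tilde W$ to kill the cross terms, and evaluate the remaining quadratic forms via $\Expect[x^\top M x] = \tr{M\,\Expect[xx^\top]}$, with the two special cases following from $m_i(0)=c_i(0)=0$ and $c_i(x(0))\ge 0$. Your fully stacked presentation, and in particular making the identity $X_0 = S_w(\tau)X(0)$ explicit, is just a cleaner bookkeeping of the same computation the paper carries out term by term.
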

\begin{proof}
By recursively plugging in the $x(t) = A_* x(t-1) + Bu(t-1) + w(t-1)$ we obtain  
\begin{align}
    \Delta A_i  x(t) + \Delta B_i u(t) &= 
    \Delta A_i  x(t) + (1-\rho) \Delta B_i u_\mathrm{d}(t) +  \rho\Delta B_i u_\mathrm{p}(t)
     \\ 
    &= \underbrace{ \Delta A_i\left(\sum_{s= 0}^{t-1} (1-\rho) A_*^{s-t-1}B_* u_\mathrm{d}(s)  +  A_*^s x(0)\right)  +  (1-\rho) \Delta B_i u_\mathrm{d}(t)}_{\text{deterministic}} \label{eq:proofHelper1}\\
    &\quad+ \underbrace{\rho \Delta A_i \sum_{s= 0}^{t-1}A_*^{s-t-1}B_*  u_\mathrm{p}(t) + \rho \Delta B_i  u_\mathrm{p}(t)}_{\text{rand. vec., dep. on }u_\mathrm{p}} + \underbrace{\Delta A_i  \sum_{s= 0}^{t-1}A_*^{s-t-1} w(t)}_{\text{rand. vec., dep. on }w}. \notag
\end{align}
Using that all, random vectors are independent and zero mean, and thus, cross terms are zero, it follows by plugging in \eqref{eq:proofHelper1} that
\begin{equation}\label{eq:lemma_app_1}
\begin{aligned}
    \sum_{t=0}^{\tau}\Expect &[\Vert
        \Delta A_i x(t) + \Delta B_i u(t)
        \Vert _{M}^2]  
        \\ 
        &= \sum_{t=0}^{\tau}\Expect \left[
            \left(\Delta A_i x(t) + \Delta B_i u(t)\right)^\top M \left(\Delta A_i x(t) + \Delta B_i u(t)\right) 
            \right] \\
            &=   \sum_{t=0}^{\tau} \Bigg(\left\Vert
            \Delta A_i\left((1-\rho)\sum_{s= 0}^{t-1}A_*^{s-t-1}B_* u_\mathrm{d}(s)  +  A_*^s x(0)\right)  +  (1-\rho)\Delta B_i u_\mathrm{d}(t)\right\Vert _{M}^2 \\
        & \qquad \qquad\quad+\rho^2\Expect\left[ \left\Vert  \Delta B_i  u_\mathrm{p}(t) + \Delta A_i \sum_{s= 0}^{t-1}A_*^{s-t-1}B_*  u_\mathrm{p}(s)\right\Vert _M^2 \right] \\
        &\qquad \qquad \quad+ \Expect\left[\left\Vert \Delta A_i  \sum_{s= 0}^{t-1}A_*^{s-t-1} w(s) \right\Vert _M^2 \right] \Bigg).
\end{aligned}
\end{equation}
We carry on by rewriting the first term in \eqref{eq:lemma_app_1}
\begin{equation}\label{eq:lemma_app_2}
    \begin{aligned}
        \sum_{t=0}^{\tau} \Big\Vert
            &\Delta A_i\left((1-\rho)\sum_{s= 0}^{t-1}A_*^{s-t-1}B_* u_\mathrm{d}(s)  +  A_*^s x(0)\right)  +  (1-\rho)\Delta B_i u_\mathrm{d}(t)\Big\Vert _{M}^2 \\ 
            &= (1-\rho)^2 U_\mathrm{d} ^\top W_i(\tau) U_\mathrm{d} + \underbrace{\sum_{t=0}^{\tau} \left(\Delta A_i \sum_{s=0}^{t-1} A_*^s x(0)\right)^\top M \left(\Delta A_i \sum_{s=0}^{t-1} A_*^s x(0)\right)}_{\coloneqq c_i(x(0)) \geq 0}  \\
            &\quad + 2(1-\rho) U_\mathrm{d}^\top \underbrace{\left( S_u(\tau)^\top Q_M^i(\tau) S_w(\tau) X(0) +   N_M^i(\tau) S_w(\tau) X(0) \right)}_{\coloneqq m_i(x_0)},
    \end{aligned}
\end{equation}
where $X(0)^\top = \begin{bmatrix}
    (\Sigma_w^{-\frac12} x)^\top & 0 & \dots & 0  
\end{bmatrix}$.
Let us first consider $x(0) = 0$. 
Then we can use the matrices defined in~\eqref{eq:defToeplitz} and \eqref{eq:defMatrices} and plug them into~\eqref{eq:lemma_app_2} to obtain
\begin{equation}
    \begin{aligned}
        \sum_{t=0}^{\tau}\Expect \left[\left\Vert 
            \Delta A_i x(t) + \Delta B_i u(t) \right\Vert _{M}^2\right]  &= (1-\rho)^2
            U_\mathrm{d}^\top W_i(\tau) U_\mathrm{d} + 2(1-\rho) U_\mathrm{d}^\top m_i(x_0) + c_i(x(0)) \\
        & \quad + \rho^2
        \Expect\left[U_\mathrm{p} W_i(\tau) U_\mathrm{p}\right] + \Expect\left[W^\top S_w(\tau)^\top Q_M^i(\tau) S_w(\tau) W\right]
    \end{aligned}
\end{equation}
Finally, we use $\Expect[x^\top M x] = \tr{M \Expect\left[ x x^\top \right]}$ and $\tr{A} = \sum_{j=1}^{n} \lambda_j(A)$, where $\lambda_j(A)$ is the $j$-th eigenvalue of the matrix $A\in \R^{n\times n}$ to obtain 
\begin{equation}
    \label{eq:Lemma_appendix_proof_final}
    \begin{aligned} 
        \sum_{t=0}^{\tau}\Expect \left[\left\Vert
            \Delta A_i x(t) + \Delta B_i u(t)\right\Vert _{M}^2\right]  &= (1-\rho)^2
            U_\mathrm{d}^\top W_i(\tau) U_\mathrm{d} + 2(1-\rho) U_\mathrm{d}^\top m_i(x_0) + c_i(x(0)) \\
            & \quad + \sigma_u^2 \rho^2 
            \tr{W_i(\tau)} + \tr{S_w(\tau)^\top Q_M^i(\tau) S_w(\tau)}
    \end{aligned}
\end{equation}
which proves the first statement of the Lemma. The second statement follows by the fact that $m_i(0) = 0$ and $c_i(0) = 0$. The last statement follows, plugging in $\rho = 1$ and using $c_i(x(0)) \ge 0$, $\forall x(0)\in \R^{n_x}$. 
\end{proof}

% !TeX root = ..\main.tex
%% Numerics Example 1 
\section{Numerical evaluation of Example~\ref{ex:Example1}}\label{appSection:numerics}
In this section, we consider a variant of the setup considered in Example~\ref{ex:Example1}. 
To be precise, consider $\tilde{\SSet}=\{(\tilde{A}_*, \tilde{B}_*), (\tilde{A}_1, \tilde{B}_*)\}$ with 
\begin{alignat*}{2}
        \tilde{A}_* &= \begin{bmatrix}
            0 & 0.1 & 0 & 0 & 0 & 0\\ 
            0 & 0 & 0 & 0 & 0 &0 \\ 
            0 & 0 & 0.9 & 0 & 0 & 0\\ 
            0 & 0 & 0 & 0.9 & 0 & 0 \\
            0 & 0 & 0 & 0 & 0.9 & 0 \\
            0 & 0 & 0 & 0 & 0 & 0.9
        \end{bmatrix}, \qquad 
        \tilde{B}_*&&= \begin{bmatrix}
            0 & 0 &0 & 0 & 0\\ 1 & 0 & 0 & 0 & 0 \\ 0 & 1 & 0 & 0& 0\\
             0& 0 &1 & 0 & 0 \\ 0 & 0 & 0 & 1& 0 \\ 0 & 0 & 0 & 0& 1
        \end{bmatrix},
        \\
        \tilde{A}_1 &= \begin{bmatrix}
            0 & 0.2 & 0 & 0 & 0 & 0\\ 
            0 & 0 & 0 & 0 & 0 &0 \\ 
            0 & 0 & 0.9 & 0 & 0 & 0\\ 
            0 & 0 & 0 & 0.9 & 0 & 0 \\
            0 & 0 & 0 & 0 & 0.9 & 0 \\
            0 & 0 & 0 & 0 & 0 & 0.9
        \end{bmatrix},
        \qquad
        \tilde{B}_1 &&= \tilde{B}_*.
    \end{alignat*}
Further, we fix $\gamma_u=1$ and consider the process noise $w(t)\simiid\N(0, \sigma_w^2 I_6)$, with $\sigma_w = 0.1$. Following the steps outlined in Example~\ref{ex:Example1} we obtain $\lambda_\mathrm{max}(\tilde{W}_1(t)) = 5 \lambda_\mathrm{mean}(W_1(t))$.
Since we initialize Algorithm~\ref{alg:inputDesign} with equal weights for both systems we immediately obtain $\delta_0 = \frac12$. Thus, by the results in Section~\ref{sec:InputDesignAlgo} we expect \ac{CE} to perform well. 
To highlight the effectiveness of Algorithm~\ref{alg:inputDesign} we compare 
$\Prob[\hat\theta_k = \theta_*]$ for three different setups, where we select $\eta = 0.1$: 
\begin{itemize}
    \item Algorithm~\ref{alg:inputDesign} with $\rho_k \equiv 0$, i.e., greedily using \ac{CE} with the current estimate 
    \item Algorithm~\ref{alg:inputDesign} with $\rho_k \equiv 1$, i.e., isotropic Gaussian excitations 
    \item  Algorithm~\ref{alg:inputDesign} with $\rho_k \equiv 0$ and oracle knowledge, i.e., computing the optimal excitation with $\theta_*$
\end{itemize}
We select $\tau = 10$ and run $20$ Monte Carlo simulations.
The mean and $\sigma$-bands of the likelihood of $\theta_*$ given the data are displayed in Figure~\ref{fig:numericsToyExample}. 
It is immediately apparent that oracle knowledge does not provide any advantage in this setup, since the current estimate $\hat \theta_k$ does not influence the solution of~\eqref{eq:runningOptU} due to the structure of the problem. 
Further, as expected, the experiment design algorithm significantly outperforms the random excitations by allocating all input energy to the first input. 
\begin{figure}[h]
    \centering
    \resizebox{0.5\linewidth}{!}{
    % This file was created by matlab2tikz.
%
%The latest updates can be retrieved from
%  http://www.mathworks.com/matlabcentral/fileexchange/22022-matlab2tikz-matlab2tikz
%where you can also make suggestions and rate matlab2tikz.
%
\definecolor{mycolor1}{rgb}{0.49400,0.18400,0.55600}%
\definecolor{mycolor2}{rgb}{0.30100,0.74500,0.93300}%
\definecolor{mycolor3}{rgb}{0.63500,0.07800,0.18400}%
\begin{tikzpicture}

\begin{axis}[%
width=5.5in,
height=2.75in,
at={(0in,0in)},
scale only axis,
xmin=0,
xmax=40,
xlabel style={font=\color{white!15!black}},
xlabel={Epoch},
ymin=0.45,
ymax=1.01,
ytick={0.5, 0.6, 0.7, 0.8, 0.9, 1},
ylabel style={font=\color{white!15!black}},
ylabel={Likelihood of $\theta_*$},
axis background/.style={fill=white},
axis x line*=bottom,
axis y line*=left,
xmajorgrids,
ymajorgrids,
legend style={at={(0.97,0.03)}, anchor=south east, legend cell align=left, align=left, draw=white!15!black}
]
\addplot [color=mycolor1, line width=2.0pt]
  table[row sep=crcr]{%
0	0.5\\
1	0.637652215729943\\
2	0.740053224882772\\
3	0.844168050351662\\
4	0.889714131179595\\
5	0.926139542450592\\
6	0.950578587094164\\
7	0.965511192953144\\
8	0.975787941258588\\
9	0.98230857474357\\
10	0.985980145842724\\
11	0.991326685943406\\
12	0.994666097179504\\
15	0.997821739606643\\
16	0.998686343883932\\
18	0.999531166992455\\
22	0.999925545266905\\
31	0.999996918299225\\
40	0.999999912678156\\
};
\addlegendentry{Alg.~\ref{alg:inputDesign} with $\rho_k \equiv 0$}

\addplot[area legend, draw=none, fill=mycolor1, fill opacity=0.1, forget plot]
table[row sep=crcr] {%
x	y\\
0	0.5\\
1	0.557887206350693\\
2	0.655307323867259\\
3	0.76777306324897\\
4	0.82085399430067\\
5	0.871022914387103\\
6	0.905322706861659\\
7	0.935469922271964\\
8	0.953404196006001\\
9	0.962627055842085\\
10	0.962541749420211\\
11	0.978704418166612\\
12	0.987393986092434\\
13	0.988704555892101\\
14	0.992195783235366\\
15	0.994448014477039\\
16	0.996618363601129\\
17	0.997949917717063\\
18	0.99884298989776\\
19	0.999233441305618\\
20	0.999441920133639\\
21	0.999680573353399\\
22	0.999813340897458\\
23	0.999899128435806\\
24	0.999932469593624\\
25	0.999965632061924\\
26	0.999981643443366\\
27	0.999982709620267\\
28	0.999984474534847\\
29	0.999989362544116\\
30	0.999989006483601\\
31	0.99998687436873\\
32	0.999990459346913\\
33	0.999992742681837\\
34	0.9999957236166\\
35	0.999997237673828\\
36	0.999998521806096\\
37	0.999998739888249\\
38	0.999998806827474\\
39	0.999999341087552\\
40	0.999999576184953\\
40	1.00000024917136\\
39	1.00000039129678\\
38	1.00000072590179\\
37	1.00000075247961\\
36	1.00000084941655\\
35	1.00000157595433\\
34	1.00000238745451\\
33	1.00000407579825\\
32	1.00000516989378\\
31	1.00000696222973\\
30	1.00000564336976\\
29	1.00000458162251\\
28	1.00000660521259\\
27	1.00000590613687\\
26	1.0000035949403\\
25	1.00000844381527\\
24	1.00001761405129\\
23	1.00002258541345\\
22	1.00003774963635\\
21	1.00007377662371\\
20	1.00012450611471\\
19	1.00013605146974\\
18	1.00021934408716\\
17	1.00040551699381\\
16	1.00075432416674\\
15	1.00119546473625\\
14	1.00179736628292\\
13	1.00286960157176\\
12	1.00193820826657\\
11	1.0039489537202\\
10	1.00941854226524\\
9	1.00199009364505\\
8	0.998171686511174\\
7	0.995552463634331\\
6	0.995834467326674\\
5	0.981256170514075\\
4	0.958574268058521\\
3	0.920563037454352\\
2	0.82479912589828\\
1	0.717417225109194\\
0	0.5\\
}--cycle;
\addplot [color=mycolor2, dashed, line width=2.0pt]
  table[row sep=crcr]{%
0	0.5\\
1	0.637652215729943\\
2	0.740053224882772\\
3	0.844168050351662\\
4	0.889714131179595\\
5	0.926139542450592\\
6	0.950578587094164\\
7	0.965511192953144\\
8	0.975787941258588\\
9	0.98230857474357\\
10	0.985980145842724\\
11	0.991326685943406\\
12	0.994666097179504\\
15	0.997821739606643\\
16	0.998686343883932\\
18	0.999531166992455\\
22	0.999925545266905\\
31	0.999996918299225\\
40	0.999999912678156\\
};
\addlegendentry{Alg.~\ref{alg:inputDesign} with oracle knowledge}

\addplot[area legend, draw=none, fill=mycolor2, fill opacity=0.1, forget plot]
table[row sep=crcr] {%
x	y\\
0	0.5\\
1	0.557887206350693\\
2	0.655307323867259\\
3	0.76777306324897\\
4	0.82085399430067\\
5	0.871022914387103\\
6	0.905322706861659\\
7	0.935469922271964\\
8	0.953404196006001\\
9	0.962627055842085\\
10	0.962541749420211\\
11	0.978704418166612\\
12	0.987393986092434\\
13	0.988704555892101\\
14	0.992195783235366\\
15	0.994448014477039\\
16	0.996618363601129\\
17	0.997949917717063\\
18	0.99884298989776\\
19	0.999233441305618\\
20	0.999441920133639\\
21	0.999680573353399\\
22	0.999813340897458\\
23	0.999899128435806\\
24	0.999932469593624\\
25	0.999965632061924\\
26	0.999981643443366\\
27	0.999982709620267\\
28	0.999984474534847\\
29	0.999989362544116\\
30	0.999989006483601\\
31	0.99998687436873\\
32	0.999990459346913\\
33	0.999992742681837\\
34	0.9999957236166\\
35	0.999997237673828\\
36	0.999998521806096\\
37	0.999998739888249\\
38	0.999998806827474\\
39	0.999999341087552\\
40	0.999999576184953\\
40	1.00000024917136\\
39	1.00000039129678\\
38	1.00000072590179\\
37	1.00000075247961\\
36	1.00000084941655\\
35	1.00000157595433\\
34	1.00000238745451\\
33	1.00000407579825\\
32	1.00000516989378\\
31	1.00000696222973\\
30	1.00000564336976\\
29	1.00000458162251\\
28	1.00000660521259\\
27	1.00000590613687\\
26	1.0000035949403\\
25	1.00000844381527\\
24	1.00001761405129\\
23	1.00002258541345\\
22	1.00003774963635\\
21	1.00007377662371\\
20	1.00012450611471\\
19	1.00013605146974\\
18	1.00021934408716\\
17	1.00040551699381\\
16	1.00075432416674\\
15	1.00119546473625\\
14	1.00179736628292\\
13	1.00286960157176\\
12	1.00193820826657\\
11	1.0039489537202\\
10	1.00941854226524\\
9	1.00199009364505\\
8	0.998171686511174\\
7	0.995552463634331\\
6	0.995834467326674\\
5	0.981256170514075\\
4	0.958574268058521\\
3	0.920563037454352\\
2	0.82479912589828\\
1	0.717417225109194\\
0	0.5\\
}--cycle;
\addplot [color=mycolor3, line width=2.0pt]
  table[row sep=crcr]{%
0	0.5\\
1	0.540805481779408\\
2	0.579293911316434\\
3	0.606709451956448\\
4	0.632639395130909\\
5	0.657389448441478\\
6	0.691998262090102\\
7	0.716858023699196\\
8	0.742654433567026\\
9	0.756855370153204\\
10	0.773718029713955\\
11	0.782828921758586\\
12	0.802648199950852\\
13	0.809043333515632\\
14	0.816584976823073\\
15	0.82349230450744\\
16	0.83753111855664\\
17	0.846909777573913\\
18	0.859498407190614\\
20	0.876748534386152\\
21	0.889099012255606\\
22	0.896826504464215\\
23	0.905277012129211\\
24	0.907630230740594\\
25	0.909038131119644\\
26	0.917277753261871\\
27	0.924184714797214\\
28	0.92978989503635\\
29	0.939950974558556\\
30	0.943608039969966\\
31	0.947823381557541\\
32	0.949822308995543\\
33	0.955630333983898\\
34	0.958129025734728\\
35	0.961407575196141\\
36	0.967013396378981\\
37	0.968678596845557\\
38	0.970726198735726\\
40	0.975612508228089\\
};
\addlegendentry{$u(t)\simiid \N(0, \frac{\gamma_u^2}{n_u}I_{n_u})$}

\addplot[area legend, draw=none, fill=mycolor3, fill opacity=0.1, forget plot]
table[row sep=crcr] {%
x	y\\
0	0.5\\
1	0.506743865898455\\
2	0.537144925023077\\
3	0.548070389884664\\
4	0.559242216674627\\
5	0.569532880647514\\
6	0.605553785151199\\
7	0.630786109227683\\
8	0.645037754918005\\
9	0.659772195173399\\
10	0.670756296223172\\
11	0.676817512433364\\
12	0.691075189275647\\
13	0.686049698425497\\
14	0.68376983819891\\
15	0.686266394596603\\
16	0.699284640540104\\
17	0.707008049954072\\
18	0.720091934067189\\
19	0.730044210275211\\
20	0.750844061672081\\
21	0.771166838148734\\
22	0.786877490681551\\
23	0.800045159938941\\
24	0.799795989253549\\
25	0.798410854417569\\
26	0.815915721698724\\
27	0.833231705336866\\
28	0.852289511354896\\
29	0.869459136543674\\
30	0.87700237585941\\
31	0.88736253489882\\
32	0.890517895789184\\
33	0.901603654240121\\
34	0.906796334798116\\
35	0.912559420397492\\
36	0.927516070849445\\
37	0.930511286170072\\
38	0.932445829301508\\
39	0.936902737201555\\
40	0.944591245377178\\
40	1.00663377107901\\
39	1.00943708991843\\
38	1.00900656816994\\
37	1.00684590752105\\
36	1.00651072190851\\
35	1.01025572999479\\
34	1.00946171667134\\
33	1.00965701372767\\
32	1.0091267222019\\
31	1.00828422821625\\
30	1.01021370408052\\
29	1.01044281257344\\
28	1.00729027871781\\
27	1.01513772425755\\
26	1.01863978482502\\
25	1.01966540782172\\
24	1.01546447222764\\
23	1.01050886431948\\
22	1.00677551824688\\
21	1.00703118636248\\
20	1.00265300710023\\
19	1.00627541288542\\
18	0.998904880314033\\
17	0.986811505193748\\
16	0.975777596573183\\
15	0.96071821441827\\
14	0.94940011544723\\
13	0.932036968605769\\
12	0.914221210626053\\
11	0.888840331083814\\
10	0.876679763204737\\
9	0.853938545133009\\
8	0.840271112216046\\
7	0.802929938170703\\
6	0.778442739028999\\
5	0.745246016235441\\
4	0.706036573587192\\
3	0.665348514028231\\
2	0.621442897609793\\
1	0.574867097660363\\
0	0.5\\
}--cycle;
\end{axis}

\end{tikzpicture}%}
    \caption{Mean and $\sigma$-band of the likelihood of $\theta_*$ given the data for different data collection strategies for the motivating example.}
    \label{fig:numericsToyExample}
\end{figure}

\end{document}